\newtheorem{theorem}{Theorem}
\newtheorem{lemma}{Lemma}
\newtheorem{corollary}{Corollary}
\newtheorem{definition}{Definition}
\newtheorem{conjecture}{Conjecture}
\newtheorem{procedure}{Procedure}
\algrenewcommand{\algorithmicensure}{\textbf{Output:}}
\newcommand{\eps}{\epsilon}
\newcommand{\R}{\mathbb{R}}
\newcommand{\calT}{\mathcal{T}}
\newcommand{\calP}{\mathcal{P}}
\newcommand{\calF}{\mathcal{F}}
\newcommand{\calA}{\mathcal{A}}
\newcommand{\calE}{\mathcal{E}}
\newcommand{\calH}{\mathcal{H}}
\newcommand{\calN}{\mathcal{N}}
\newcommand{\calB}{\mathcal{B}}
\newcommand{\calS}{\mathcal{S}}
\newcommand{\x}{\mathbf{x}}
\newcommand{\rr}{\mathbf{r}}
\newcommand{\g}{\mathbf{g}}
\newcommand{\ii}{\mathbf{i}}
\newcommand{\poly}{\mathrm{poly}}
\newcommand{\Tr}{\operatorname{Tr}}
\newcommand{\ketbra}[2]{\ket{#1}\!\bra{#2}}
\title{Quantum advantage from random geometrically-two-local Hamiltonian dynamics}
\author{Yihui Quek \thanks{Massachusetts Institute of Technology.} \thanks{École Polytechnique Fédérale de Lausanne.}}
\date{October 5, 2025}
\begin{document}
\maketitle
\setcounter{footnote}{0}

\begin{abstract}
Classical hardness-of-sampling results are largely established for random quantum circuits, whereas analog simulators natively realize time evolutions under geometrically local Hamiltonians. Does a {\em typical} such Hamiltonian already yield classically-intractable dynamics? We answer this question in the affirmative for the ensemble of geometrically-2-local Hamiltonians with Gaussian coefficients, evolved for for constant time. This naturally leads to a quantum advantage scheme with clear prospects for experimental realization, necessitating only course-grained control. 

We give strong evidence of hardness for this physically-relevant ensemble. We develop the first worst-to-average-case reduction for approximating output probabilities of (time-independent) geometrically-2-local Hamiltonian evolutions. Our reduction proceeds by nonstandard means: while we also leverage polynomial interpolation, unlike previous works, we reduce to an evaluator for the {\em exact} distribution over Hamiltonians from which we are trying to prove that sampling is hard. Previous works instead sampled from various perturbations of this distribution, introducing additional constraints meant to keep the perturbation, measured in total variation distance, under control. We dispense with this step.

Our reduction consists in a robust multivariate polynomial interpolation (where the polynomial is in the coefficients of the Gaussian), reduced to sequential robust univariate interpolations via the symmetries of the Gaussian. We circumvent the fact that random Hamiltonians lack a {\em hiding} symmetry, a key property in previous proofs. We also contribute an {\em algorithmic} version of Berlekamp-Welch to deal with errored evaluations, solving an open problem from the RCS literature. This strengthens the evidence for hardness of approximating probabilities offered in prior works: should a classical evaluator exist, the level of the Polynomial Hierarchy ($\textsf{PH}$) to which we can force a collapse tightens by one oracle level. We expect the machinery we develop to find use in average-case Hamiltonian complexity, filling in a gap in this literature which has thus far focussed on worst-case hardness results.
\end{abstract}

\newpage
\section{Introduction}
A central goal in near‑term quantum computing is to demonstrate quantum advantage. The hallmark proposals in this regard center around {\em sampling} tasks that (i) are naturally realized by existing experimental platforms, and (ii) are classically intractable under standard complexity assumptions. Random‑circuit sampling (RCS) \cite{BFNV18,BFLL21} has provided a crisp blueprint on the circuit side, with BosonSampling \cite{AA11, BDFH24, bouland2025complexitytheoreticfoundationsbosonsamplinglinear} and Fermion Sampling \cite{Oszmaniec_2022} being the leading candidates for photonic and fermionic linear optics systems. While many of these proposals have already been implemented in experiment \cite{Arute2019QuantumSupremacy, boson_exp_1, boson_exp_2, boson_exp_3, boson_exp_4}, there is often a gap between the settings that can actually be realized in laboratories, and those for which classical hardness can be convincingly argued.

To convince skeptics, we must close such gaps between the experiment and theory of these quantum advantage demonstrations. Motivated by this, we give strong evidence that quantum advantage is attainable with only coarse-grained experimental control. The overwhelming majority of sampling-based quantum-advantage demonstrations have come from {\em digitally-programmed} experimental setups, where one compiles a desired circuit into a sequence of calibrated one‑ and two‑qubit gates and measures at the end. This compilation requires the use of fairly involved gadgets on certain experimental platforms (see for instance \cite{Cesa}), which naturally also accumulates noise. Our goal is to show that quantum advantage is not the sole domain of these precisely-tuned digital setups, but is availed by {\em most} Hamiltonians, with minimal  overhead. 

This complementary route to quantum advantage is as follows: prepare a product state, time evolve with a typical geometrically‑two-local Hamiltonian for constant time, and measure. This “quench‑and‑measure” task aligns with capabilities of Rydberg/neutral-atom arrays and optical‑lattice simulators. Previously, Refs. \cite{Bergamaschi_2024, rajakumar2024gibbssamplinggivesquantum} also proposed quantum advantage demonstrations via Gibbs sampling from geometrically-local Hamiltonians at temperature $\beta = \Theta(1)$; Ref. \cite{rajakumar2024gibbssamplinggivesquantum} in particular was able to get the required Hamiltonian locality down to five, on a 3D lattice. Ref. \cite{BermejoVega18} also proposed time-independent evolution of an Ising ($Z$-type) Hamiltonian on a random input state as a route to quantum advantage. However, to our knowledge we are the first to propose time-independent evolution of {\em random} two-local Hamiltonians -- commuting or not -- on a 2D lattice.

We, furthermore, give strong complexity-theoretic evidence of the hardness of our task. 
Our proof will never need to translate Hamiltonian time evolution into an equivalent circuit -- indeed, doing so via a Trotterization argument incurs additional losses and meets barriers that we detail in \Cref{subsec:notsuffice} (``What does {\em not} suffice to prove hardness"). Instead we develop a number of new tools for worst-to-average-case reductions within the ensemble of random Hamiltonians itself, that we expect to find use in future investigations of of average-case Hamiltonian complexity.

\medskip
\textbf{Our contributions.} We introduce the following Hamiltonian ensemble on a fixed-degree lattice: for the set $\calP_k$ of geometrically $k$-local unsigned Pauli terms on that lattice, draw independent coefficients from a Gaussian and define the ensemble $\mathcal{E}(k)$ as
$$
H(\g)= \frac{1}{\sqrt{|\calP_k|}}\sum_{P \in \calP_k} g_P P, \qquad g_P \stackrel{\text{i.i.d.}}{\sim} \mathcal N(0,1).
$$
We call this Hamiltonian ensemble the {\em Gaussian geolocal Hamiltonian ensemble}. The Gaussian geolocal Hamiltonian ensemble is physically relevant because of its geometric locality; for this reason it is also natural to implement for analog simulators. In the rest of this work, for simplicity, we will focus on $k=2$ and $2D$ lattices, but our proof can easily be extended to larger $k$s and arbitrary graph structures -- as long as a worst-case family of instances exists on the same graph.  

We give strong complexity-theoretic evidence that it is classically hard to sample from the output distribution of the following experiment, which we term {\em Random Hamiltonian Sampling}: 
\begin{enumerate}
    \item Arrange $n = m\times m'$ qubits side-by-side on an $m$ row, $n$ column square lattice, and prepare the product-state input $|+\rangle^{\otimes n}$.
\item Let the system time-evolve under a random $H(\g)$ for a fixed evolution time $\tau = 1$. 
    \item Measure all qubits in the $X$ basis.
\end{enumerate}

\medskip
\textbf{Contribution (I):} To argue that the above experiment is classically hard, our main contribution is a {\em worst-to-average-case reduction} for the task of computing output probabilities of random Hamiltonian time evolutions. Intuitively, such a reduction implies that ``most" Hamiltonians in $\calE(2)$ are just as hard as the worst-case instance. We convey the gist of how such a reduction works, starting from the familiar setting of a quantum advantage proposals based on sampling from random {\em circuits}. First, one approximates the desired output probability by a polynomial in the gates of the circuit. Because of this polynomial approximation, if an efficient algorithm $\calA$ (an ``average-case" evaluator) can evaluate those probabilities at {\em most} given input points, it is also implicitly evaluating this polynomial. Since the target output probability arising from a worst-case instance is also well-approximated by this polynomial at a particular `worst-case' value, one could obtain the target value by using $\calA$'s evaluations on randomly-sampled instances (circuits with random gates) as interpolation points, and interpolate this polynomial to the worst-case value. All-in-all, we have an algorithm that solves any worst-case instance with high probability, completing the reduction. 

Let us describe how this general blueprint plays out in our setting. For the worst-case instance, we adapt the construction of Ref. \cite{BermejoVega18} which concerns the task of constant-time Ising time evolution of a {\em random} product input state. For a worst-case member of this ensemble of time evolutions, evaluation of the output probabilities to additive error $2^{-{O(n)}}$ is $\# \mathsf{P}$ hard. The complexity class $\# \textsf{P}$ is the class of counting problems, which are believed to be exceedingly difficult -- well beyond classical probabilistic computation ($\textsf{BPP}$) and even beyond $\textsf{NP}$ and $\textsf{PH}$ (the Polynomial Hierarchy).

Our worst-to-average-case reduction then says that if one could evaluate the output probability of a typical (i.e. average-case) Hamiltonian from $\calE(2)$, one could also do so for {\em any} (i.e. worst-case) Hamiltonian in the support of $\calE(2)$ -- including the ones in the previous paragraph. To our knowledge, all existing such reductions require the average-case solver to succeed on a set of instances sampled from a {\em deformation} of the average-case ensemble. This can pose two problems: first, the perturbed instances may even leave the class of evolutions under consideration (as is the case for IQP circuits); second, even if they remain within the class of evolutions, one has to argue about TVD closeness of the deformed distribution and the true distribution. 

By contrast, our technique cleanly reduces worst-case evaluation to an evaluator that succeeds on the {\em exact} average-case distribution, so we do not have to deal with either of these artifacts. Since we approximate the output probability with a multivariate polynomial in the coefficients of the Hamiltonian, our reduction is ultimately a {\em multivariate} polynomial interpolation in $\g$. We proceed by reducing this multivariate problem to a series of univariate polynomial interpolations, leveraging the spherical symmetry of the $l$-variate Gaussian to identify lower-dimensional submanifolds of $\mathbb{R}^l$ to interpolate on. We call this technique `slicing and dicing the sphere', and anticipate that it may also find use in other average-case Hamiltonian complexity problems where polynomial approximation plays a central role. 

\medskip
\textbf{Contribution (II):} The reason we focus on the task of computing probabilities is that, via Stockmeyer's reduction, this ultimately connects up with the task we wish to rule out: sampling from the output distribution of a random Hamiltonian time evolution, for a Hamiltonian sampled from $\calE(2).$ This reduction says: If there were indeed an efficient sampler $\calS$ over $\calE(2)$, there would also exist an algorithm to compute probabilities in $\textsf{BPP}^{\textsf{NP}^\calS}$ for an average $H\sim \calE(2)$. While this overall line of attack is familiar in the quantum advantage literature, one of our technical contributions is pushing it through for the specific class of evolutions we are dealing with. The difficulty arises because our ensemble of random Hamiltonians lacks a certain ``hiding" property used in previous hardness proofs. This property, were it to hold, would say that the distribution over Hamiltonian evolutions induced by picking $H\sim \calE(2)$ is {\em invariant} under appending a layer of $X$ gates at the output. Unfortunately, this is not the case. We resolve this problem by observing that jointly randomizing both the input state and the Hamiltonian gives us the distributional invariance we require.

\medskip
\textbf{Contribution (III):} Lastly, we point out a technical aspect of our proof that may be of independent interest. We show that average-case probability evaluation is in $\textsf{BPP}^{\calS}$, while previous works \cite{BFNV18,BFLL21,BDFH24,Oszmaniec_2022} were only able to prove that average-case probability evaluation for the qubit, bosonic and fermionic circuit analogs of our sampling task is in $\textsf{BPP}^{\textsf{NP}^\calS}$, one level above in the polynomial hierarchy. Our result similarly strengthens all of these precedents to $\textsf{BPP}^{\calS}$, positively answering an open question in \cite{BFLL21}. 

Our improvement stems from an algorithmic version of the Berlekamp-Welch algorithm that we develop. The Berlekamp-Welch algorithm \cite{WB86}, which was developed in the context of error correction (decoding Reed-Solomon codes), is a way to perform polynomial interpolation while correcting for a portion of errored interpolation points. In the worst-to-average-case reduction, it arises as part of the polynomial interpolation step. We develop an extension of this algorithm that is robust to both a constant fraction of the points being corrupted, as well as small additive errors in the uncorrupted points. While the original Berlekamp-Welch solves a set of linear equations to find an ``error-locator" polynomial that vanishes exactly at corrupted positions, our robust version of Berlekamp-Welch solves two linear programs: the first to build a selection mask $s(\g)$ whose zeros align with the corrupted points, and the second to recover the original polynomial by ``dividing out" the mask. 


\medskip
\textbf{Previous work on average-case hardness of sampling from Hamiltonian dynamics.} While there has been considerable complexity-theoretic work done on the average-case hardness of random circuit sampling \cite{BFNV18,kondo2021quantumsupremacyhardnessestimating,BFLL21}, BosonSampling \cite{bouland2025complexitytheoreticfoundationsbosonsamplinglinear, BDFH24} and even FermionSampling \cite{Oszmaniec_2022}, the hardness of Hamiltonian dynamics is much less well-studied. 

Ref.\cite{Haferkamp20}, which also set out to devise a quantum advantage scheme feasible on analog quantum simulators, provided evidence that a {\em fixed}, Ising Hamiltonian initialized on a random product state, is hard to sample from. Our setting differs from theirs in that we consider the time evolution of a {\em random} Hamiltonian on a fixed input state. Moreover, they show average-case hardness for {\em exactly} evaluating the output probabilities of their architectures, while we work in the more physically realistic setting that tolerates a small imprecision in the evaluations. 

Ref. \cite{ParkXanadu} is closest to us in setting. They also consider sampling random Hamiltonian dynamics for a very similar Hamiltonian ensemble, but are only able to prove hardness of probability evaluation of high-Hamming weight output bitstrings, up to an additive error of $n^{-\Theta(n^3 \log n)}$ (Theorem 1). Moreover, they require a conjecture about anticoncentration and their Hamiltonians are restricted to being on a bipartite graph. 
We prove hardness up to a much more permissive additive error of $n^{-\Theta(n)}$, and our setting does not have these restrictions.


\section{Our results}
For pedagogical reasons, we present our main results in a different order from the introduction. We kick off this section by providing some crucial context for our main result: in \Cref{subsec:complexitywithouthiding}, we give an overview of the complexity-theoretic machinery to establish that sampling from Hamiltonian dynamics is hard. While the overarching framework follows that of previous works, we first have to construct a worst-case-hard family of Hamiltonians. We also devote significant effort to resolving an obstacle that is unique to our setting: random Hamiltonians lack a certain {\em hiding} symmetry which was used in previous proofs.

Next, in \Cref{subsec:worst-av-casereduction}, we dive into one step of this machinery, wherein also lies our main contribution: the worst-to-average-case reduction for the problem of evaluating output probabilities. \Cref{subsec:algorithmicRBW} further zooms into one particular step of this worst-to-average-case reduction, contributing an algorithmic version of the robust Berlekamp-Welch algorithm, which performs polynomial interpolation in the presence of a small fraction of errored evaluations. Previous works were only able to establish that this task is possible given access tp an \textsf{NP} oracle, and left the existence of an algorithmic version open. Our algorithm thus resolves this open question, thereby strengthening the complexity-theoretic evidence for the hardness of evaluating probabilities. 



\subsection{Complexity wihout hiding}\label{subsec:complexitywithouthiding}
The remainder of this paper is focused on proving classical hardness for evaluating probabilities. Yet, our quantum advantage proposal concerns a different task: {\em sampling} from the output of a random Hamiltonian evolution. How are the two tasks related?

We devote this subsection to providing context for our main contribution. We connect evaluating probabilities to sampling, and explain how these tasks fit into the evidence for quantum advantage. We will explain the argument at a high level in this subsection, deferring detailed proofs to \Cref{sec:overview}. Note that the arguments for hardness of random circuit sampling \cite{BFNV18}, boson sampling \cite{AA11, BDFH24} and fermion sampling \cite{Oszmaniec_2022} are all based off this template. However, a challenge unique to our ensemble of Hamiltonians is the lack of a certain invariance property known as {\em hiding}. Broadly, the argument shows that the assumed existence of an efficient classical sampler would lead to a contradiction of a well-believed complexity theory assumption. We present the argument in two parts (only in the second part do we assume that an efficient classical sampler exists):  

\textbf{(1) From worst– to average-case hardness of evaluating a single output probability.}
First, we construct a family of Hamiltonians for which the probability of outputting $\ket{+^n}$ is worst-case hard to evaluate. This construction is an adaptation of one in \cite{BermejoVega18}, which used a fixed Ising Hamiltonian to evolve an ensemble of input states. We instead fix the input state and define a worst–case hard family of Hamiltonians $\mathcal{E}_{worst}$, which are all geometrically-two-local Ising-type Hamiltonians. For $H_{\mathcal S}\in\mathcal{H}_{worst}$ and evolution time $\tau = O(1)$, the output probability
\[
D_{H_{\mathcal S}}:=\big|\!\braket{+^n|e^{-iH_{\mathcal S}\tau}|+^n}\!\big|^2
\]
is $\# \textsf{P}$–hard to approximate to additive error $2^{-\Theta(n)}$ in the worst case (Theorem~\ref{thm:worst}). Intuitively, the hardness arises from the fact that these constant–depth evolutions can encode amplitudes of arbitrary polynomial-time quantum computations, and so even very fine additive estimates of those amplitudes would solve $\# \textsf{P}$–hard problems.

Then, our main contribution is a worst–to–average–case reduction (\Cref{thm:mainthm_informal}). This reduces the task of evaluating output probabilities arising from any Hamiltonian in the above worst-case ensemble, $\mathcal{E}_{worst}$, to evaluating those of a {\em typical} Hamiltonian in our Gaussian geo–local ensemble $\mathcal{E}(2)$. This yields \emph{average–case} $\# \textsf{P}$–hardness of probability evaluation over $\mathcal{E}(2)$. We will actually use a slightly stronger form of average-case hardness, where the algorithm has to succeed over a joint randomization of both the Hamiltonian and the input state -- this is for reasons we will explain shortly.

\textbf{(2) From sampling to evaluation in $\textsf{BPP}^{\textsf{NP}^{\calS}}$.}
Assume for the sake of contradiction that an \emph{efficient classical sampler} $\calS$ exists: a polynomial–time algorithm that, for most instances in $\calE(2)$, produces samples within total–variation distance $\varepsilon$ of the true distribution. By Stockmeyer’s approximate counting \cite{Stockmeyer85}, such a sampler implies an {\em evaluator}, which approximates output probabilities: an algorithm that gives \emph{additive} $O(2^{-n})$ estimates of the true probabilities for most outcomes $x$ (Lemma~\ref{lem:samplerimpliesevaluator}). This evaluator is in $\textsf{BPP}^{\textsf{NP}^\calS}$ and it succeeds on average {\em over outcomes}. This does not quite close the loop and complete the contradiction, however: notice that Part (1) concerns an evaluator that succeeds on average {\em over instances} -- not outcomes. This is the last step we will now tackle. 

For quantum advantage proposals based on sampling from random circuits and fermionic linear optics circuits, one converts “success over most \emph{outcomes} (for a fixed circuit)” into “success over most \emph{circuits} (for a fixed outcome)” using a \emph{hiding} symmetry: appending random $X$’s relabels the output string, without changing the circuit distribution. Unfortunately, this symmetry \emph{fails} for our Hamiltonian ensemble $\calE(2)$: 
\[
H\sim\mathcal{E}(2)\quad\not\Longrightarrow\quad e^{-iH\tau}Z^y\ \stackrel{d}{=}\ e^{-iH\tau}\quad\text{for any }y,
\]
where $\stackrel{d}{=}$ means equality in the sense of distributions.

\medskip\noindent\textbf{Overcoming the lack of “hiding” for Hamiltonians.}
Our workaround is to \emph{randomize the input} as well. We use a simple invariance of the ensemble:
\[
H\sim\mathcal{E}(2)\quad\Longrightarrow\quad Z^yHZ^y\ \stackrel{d}{=}\ H\quad\text{for any }y.
\]
This invariance allows us to `push' the randomizing $Z^y$ string, initially at the output, into the exponent of the time evolution, whereupon it emerges on ``the other side", at once randomizing both the Hamiltonian and the input:
\begin{equation}
\bra{+^n}Z^y e^{-iH\tau}\ket{+^n} = \bra{+^n}e^{-i(Z^yHZ^y)\tau} Z^y\ket{+^n}.
\end{equation}
The upshot is that we obtain a \emph{joint hiding} property (Theorem~\ref{thm:hiding}), which holds {\em jointly} over the input state and the distribution over Hamiltonian instances. Using this, we successfully map ``success over most outcomes” to ``success over most random input-instance pairs” (Theorem~\ref{thm:prfixedoutput}), completing the sampling $\Rightarrow$ evaluation step for random Hamiltonians.

\medskip
Putting the pieces together gives the  contradiction:

\begin{center}
    {\em Average–case approximate Hamiltonian evaluation is $\# \textsf{P}$–hard, yet any polynomial–time classical sampler $\calS$ that is $\varepsilon$–close in total variation on most instances would place it in $\textsf{BPP}^{\textsf{NP}^\calS}$.}
\end{center}

By Toda’s theorem \cite{Toda91}, this would collapse \textsf{PH} to the third level; under the standard assumption that \textsf{PH} does not collapse, no such sampler exists.

\subsection{Worst-to-average-case reduction}\label{subsec:worst-av-casereduction}

A core step of the above schema is the worst-to-average-case reduction for evaluating probabilities, used in Step (1). We prove:
\begin{theorem}[Worst-to-average-case reduction (\Cref{thm:outer}, adapted)]\label{thm:mainthm_informal}
The following task is $\# \mathsf{P}$-hard under a $\textsf{BPP}$ reduction: for any constant $\eta$ and for $\tau=1$, upon input a random Hamiltonian $H\sim \calE(2)$, output an additive approximation $\hat{p}$ to the output probability  $|\bra{+^n} e^{-iH\tau} \ket{+^{n}}|^2$ satisfying
    \begin{equation}
        |\hat{p}- |\bra{+^n} e^{-iH\tau} \ket{+^{n}}|^2| \leq 2^{-n \log(n)}
    \end{equation}
    with probability at least $1-\eta$ over the choice of $H$.
\end{theorem}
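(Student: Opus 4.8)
The plan is to establish average-case $\#\mathsf{P}$-hardness by a worst-to-average-case reduction built around polynomial interpolation in the Gaussian coefficients $\g$. The starting point is Theorem~\ref{thm:worst}: there is a worst-case Hamiltonian $H_{\mathcal S}\in\mathcal{H}_{worst}$ for which approximating $D_{H_{\mathcal S}}=|\braket{+^n|e^{-iH_{\mathcal S}\tau}|+^n}|^2$ to additive error $2^{-\Theta(n)}$ is $\#\mathsf{P}$-hard. Since $H_{\mathcal S}$ is geometrically-2-local, it lies in the support of $\calE(2)$ and can be written as $H(\g^\star)$ for some fixed coefficient vector $\g^\star$. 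I would first show that the map $\g \mapsto \braket{+^n|e^{-iH(\g)\tau}|+^n}$ is, for $\tau = 1$, \emph{approximated} to error $2^{-n\log n}$ by a low-degree polynomial $q(\g)$ in the entries of $\g$: Taylor-expand $e^{-iH(\g)}$ to order $K = \Theta(n\log n / \log\log n)$ or so, using $\|H(\g)\|\le \poly(n)$ on the relevant region together with the $1/\sqrt{|\calP_k|}$ normalization to control the tail. The output probability is then $|q(\g)|^2$ up to the same error, a polynomial of degree $d = 2K = \poly(n)$.

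Next comes the interpolation. The obstacle that distinguishes this setting from RCS is that the evaluator $\calA$ is only promised to succeed on \emph{exactly} the Gaussian ensemble $\calE(2)$ — we cannot deform the distribution. The trick, which the paper calls ``slicing and dicing the sphere,'' is to exploit the rotational invariance of the multivariate Gaussian: for any fixed orthogonal frame, the restriction of $\g\sim\calN(0,I)$ to a line through the origin in a uniformly random direction is again Gaussian on that line, and more usefully, one can route the worst-case point $\g^\star$ through a sequence of one-dimensional affine slices, on each of which the induced distribution of the free parameter is a genuine (mean-zero or shifted) Gaussian, hence matches what $\calA$ expects. Concretely, I would pick a random rotation $R$, consider the univariate path $t\mapsto \g(t)$ that interpolates between a random Gaussian point and $\g^\star$ while keeping every intermediate point Gaussian-distributed, and observe that $t\mapsto |q(\g(t))|^2$ is a univariate polynomial of degree $\le d$. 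Sampling $O(d)$ values of $t$, feeding $\g(t)$ to $\calA$, and interpolating back to the value of $t$ corresponding to $\g^\star$ recovers $D_{H_{\mathcal S}}$ — all of this carried out coordinate-block by coordinate-block so that each individual query to $\calA$ is exactly distributed as $\calE(2)$.

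The two quantitative difficulties are \emph{error amplification} under interpolation and \emph{corrupted evaluations}. Standard Lagrange interpolation of a degree-$d$ polynomial from noisy values on $O(d)$ points blows the additive error up by a factor exponential in $d$ unless the interpolation nodes are chosen well; I would place the nodes so that the Lebesgue constant is $\poly(d)$ (e.g. near-equally-spaced on a short interval, or Chebyshev-like after rescaling), so that starting from per-point error $2^{-n\log n}$ one still lands at error $2^{-\Theta(n)}$, below the worst-case hardness threshold of Theorem~\ref{thm:worst}. For corruptions — $\calA$ is only correct on a $(1-\eta)$ fraction of instances, so a constant fraction of our queries may return arbitrary garbage — I would invoke the robust Berlekamp-Welch machinery of \Cref{subsec:algorithmicRBW}: it interpolates a degree-$d$ polynomial robustly against a constant fraction of wholly-corrupted points \emph{and} small additive noise on the rest, and crucially it is \emph{algorithmic} (two linear programs), so the whole reduction stays in $\textsf{BPP}$ rather than needing an $\textsf{NP}$ oracle. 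Amplifying the success probability by repeating with independent random rotations and taking a majority vote handles the case $\eta$ close to $1$ if needed.

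\textbf{The main obstacle} I expect is the ``slicing and dicing'' step: arranging a sequence of one-dimensional slices that (i) visit $\g^\star$, (ii) induce an \emph{exactly} Gaussian marginal of the free parameter along each slice — so that every point queried to $\calA$ is faithfully drawn from $\calE(2)$ and the success guarantee applies — and (iii) keep the number of slices and the per-slice polynomial degree small enough that error amplification remains controlled. This is where the absence of a hiding symmetry bites: unlike the Haar-random-gate case where one interpolates along a single obvious line in gate-space, here the geometry of $\R^{|\calP_2|}$ under the Gaussian measure has to be navigated carefully, and the proof must verify that composing the slices does not multiply the error by more than a polynomial factor per slice. Everything else — the Taylor bound, the Stockmeyer/sampling-to-evaluation connection, and the $\#\mathsf{P}$-hardness of the worst-case endpoint — I would treat as assembled from the stated results.
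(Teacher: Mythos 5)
Your high-level skeleton matches the paper (worst-case endpoint $\g_{worst}$ in the support of $\calE(2)$, Taylor polynomial in $\g$, interpolation from $\calA$'s evaluations, robust Berlekamp--Welch, Remez-type error control), but the step you flag as ``the main obstacle'' is precisely the content of the paper's proof, and the construction you sketch for it does not work. A univariate path $t\mapsto \g(t)$ from a random Gaussian point to $\g^\star$ ``keeping every intermediate point Gaussian-distributed'' does not exist: any such path is the deformation $\g_\theta=\theta\g+(1-\theta)\g_{worst}$ in disguise, whose intermediate points follow a shifted/scaled Gaussian rather than $\calN_l$ — this is exactly the RCS-style approach the paper explicitly rejects in \Cref{subsec:notsuffice}. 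The paper's actual resolution is a \emph{two-level nested} interpolation: sample a random plane $P$ through the origin and $\g_{worst}$, and note that the composition (Haar plane) $\times$ (radial marginal $\rho_R$) $\times$ (angular marginal $\rho_\Theta\propto\sin^{l-2}\theta$) reproduces $\calN_l$; counting/Markov arguments (\Cref{lemma:2}, \Cref{lemma:3}) then show a random plane and most random radii are ``good'' despite the correlations among queries. Crucially, the ray from the origin to $\g_{worst}$ is a \emph{worst-case} set once $\g_{worst}$ is fixed, so $\calA$ is never queried on it; instead, each needed value on that ray is itself obtained by a second univariate interpolation around the circumference $C_{r_i}$ to its north pole, after a $\theta\mapsto-\theta$ symmetrization that turns the restriction of $\calT_m$ to the circle into a univariate polynomial in $\cos\theta$ (\Cref{alg:2}). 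None of this double-interpolation structure, nor the symmetrization, appears in your proposal.

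A secondary quantitative error: you cannot make the Lebesgue constant $\poly(d)$. The Gaussian samples concentrate at radius $\Theta(1)$ (since $\sigma^2=1/l$), whereas $\g_{worst}$ sits at radius $\Theta(\sqrt{n})$, so the final step is an \emph{extrapolation} of a degree-$\Theta(n)$ polynomial far outside the sampling window; the paper accepts the resulting $(10/\delta)^{2n}=2^{\Theta(n\log n)}$ blow-up from \Cref{thm:REBW} and \Cref{lem:Remez} with $\delta=l^{-3/2}$-separated nodes, which is exactly why the average-case accuracy in the theorem statement is $2^{-n\log n}$ rather than $2^{-\Theta(n)}$. Your error budget happens to close anyway because $2^{\Theta(n\log n)}\cdot 2^{-n\log n}$ is still below the $2^{-\Theta(n)}$ worst-case threshold, but the claim of Chebyshev-like node placement achieving polynomial amplification should be dropped, and the node-separation analysis (the binning procedure producing $\Delta$-separated subsets with a bounded fraction of corrupted bins) is needed in its place.
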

The full proof of this Theorem is deferred to \Cref{sec:worst-av}. We first assume that there exists an algorithm $\calA$ that can perform the above task. Then, we approximate $p_H$ with an $m$-th order Taylor polynomial in $\mathbf{g} \in \mathbb{R}^l$, the coefficients of the Hamiltonian $H$. Call the approximating polynomial $\mathcal{T}_m(\bf g)$; it suffices to take $m= \Theta(n)$. Let $\bf{g}_{\text worst}$ be the parameters corresponding to $H_{\text{worst}}.$ 

Unlike previous works, we will not be defining a single-parameter `interpolation path' that smoothly runs from a single, randomly-sampled, average-case instance to the worst-case instance. In random circuit sampling works, significant effort went towards ensuring that this path did not leave the unitary group (it would be strange to require the average-case evaluator to work on non-unitaries!). We never have this problem. Our main innovation is a way to cleanly construct a {\em multi-parameter interpolation path}  on the fly, based on samples from (marginals of) the {\em exact} average-case ensemble $\calE(2)$. 

We will argue that by sampling multiple Hamiltonians $\mathbf{g}_i \sim \mathcal{E}(2)$ and running $\calA$ on each of them, the (possibly errored) evaluations $\{\calA(\g_i) \approx \mathcal{T}_m(\g_i)\}_i$ can be used to interpolate to the value of $\mathcal{T}_m(\g_{\text worst})$. The Berlekamp-Welch algorithm \cite{WB86} was used in previous works for this polynomial interpolation task, and it is natural to try to use it for our setting. This plan immediately runs into some barriers: 
\begin{enumerate}
    \item \textbf{Multivariate polynomials:} Since the polynomial of interest, $\calT_m$, is a multivariate polynomial in $\g$ (the vector of Hamiltonian coefficients) we would like to perform multivariate, not univariate, polynomial approximation;
    \item \textbf{Allowing errors in evaluations:} When using Berlekamp-Welch on $\calA$'s evaluations, we would like this procedure to be robust to allowing $\mathcal{A}$ to make {\em some} approximation error, instead of assuming that it either fails outright, or outputs {\em exact} evaluations of $p_H$.
\end{enumerate}
The Berlekamp-Welch algorithm, as originally stated, performs univariate polynomial interpolation and requires at least a certain number of evaluations to be {\em exact}. It is not straightforward to generalize this to the multivariate case, as a cornerstone of the proof collapses: while a univariate polynomial can have a number of roots that is at most its degree, a multivariate polynomial of total degree $d > 0$ can have infinitely many roots. 

We solve both of these problems, showing that we can reduce 
multivariate polynomial interpolation into a sequence of robust univariate polynomial interpolations. The proof crucially leverages the spherical symmetry of the multivariate Gaussian. A key contribution is also an algorithmic version of a variant of the Berlekamp-Welch algorithm that is robust in the sense described above, solving an open problem from the RCS literature.



\subsubsection*{Slicing and dicing the sphere}
Even though the average-case evaluator is guaranteed to succeed on a random point from $\calN_l$, we may enjoy its success guarantees without necessarily sampling from all of $\calN_l$, but only from a well-chosen marginal. This follows from a counting argument (\Cref{lemma:2}), and a similar observation in the context of random circuits was already made as early as \cite{BFNV18}. 

A similar logic holds for random Hamiltonians. Analogously to random circuits, the average-case Hamiltonian evaluator will succeed with high probability when fed with samples from $\calN_l$'s marginal on a random plane. Let $\g_{\text{worst}}$ be the worst-case point: we will choose a random plane containing $\g_{\text{worst}}$ and the origin. Because the plane contains $\g_{\text{worst}}$, we may 
interpolate to $\g_{\text{worst}}$ based on the evaluations of $\calA$ at those points. This is already progress: since all these points lie in a plane, and any polynomial supported on a plane may be re-written as a bivariate polynomial, we have reduced our fearsomely multivariate task to that of bivariate interpolation.

We further reduce bivariate interpolation on a plane to a series of univariate interpolations, as follows: Without loss of generality, we may assume that $\g_{\text{worst}}$ is on the $z$-axis of the plane's coordinate system. If we could obtain evaluations at other points $\{\bf{r}_i\}$ that are also on the $z$-axis of $P_{\text{good}}$, we could then interpolate along the $z$-axis -- a univariate interpolation -- to obtain the value of $\g_{\text{worst}}$. How then could we obtain an evaluation at each $\bf{r}_i$ that we have high confidence in? We could not run $\calA$ directly on $\bf{r}_i$ as we do not have success guarantees for $\calA$ on the $z$-axis of $P_{\text{good}}$. The reason is that the $z$-axis of $P_{\text{good}}$ is a worst-case ray, as it is fully fixed from the moment $\g_{\text{worst}}$ is specified.

Instead, to obtain an estimate of $p_{\bf{r}_i}$, we will perform another univariate polynomial interpolation: We sample points from the circumference  that consists of all points in $P_{\text{good}}$ at distance $|\bf{r}_i|$. Call this circumference $C_i$. If the points are sampled from the marginal of $\calN_l$ on $C_i$, a counting argument guarantees that $\calA$ succeeds with high probability on most $C_i$'s when fed with points sampled in this way. Since $\bf{r}_i$ is also in $C_i$, we may then interpolate on this $C_i$ based on $\calA$'s evaluations. Naively this requires a bivariate polynomial interpolation in the variables $\cos(\theta), \sin(\theta)$ where $\theta$ is the angular coordinate of $C_i$. However we perform a re-parametrization to convert the problem into one of univariate interpolation. 
\begin{figure}[!htbp]
    \centering
    \includegraphics[width=0.5\linewidth]{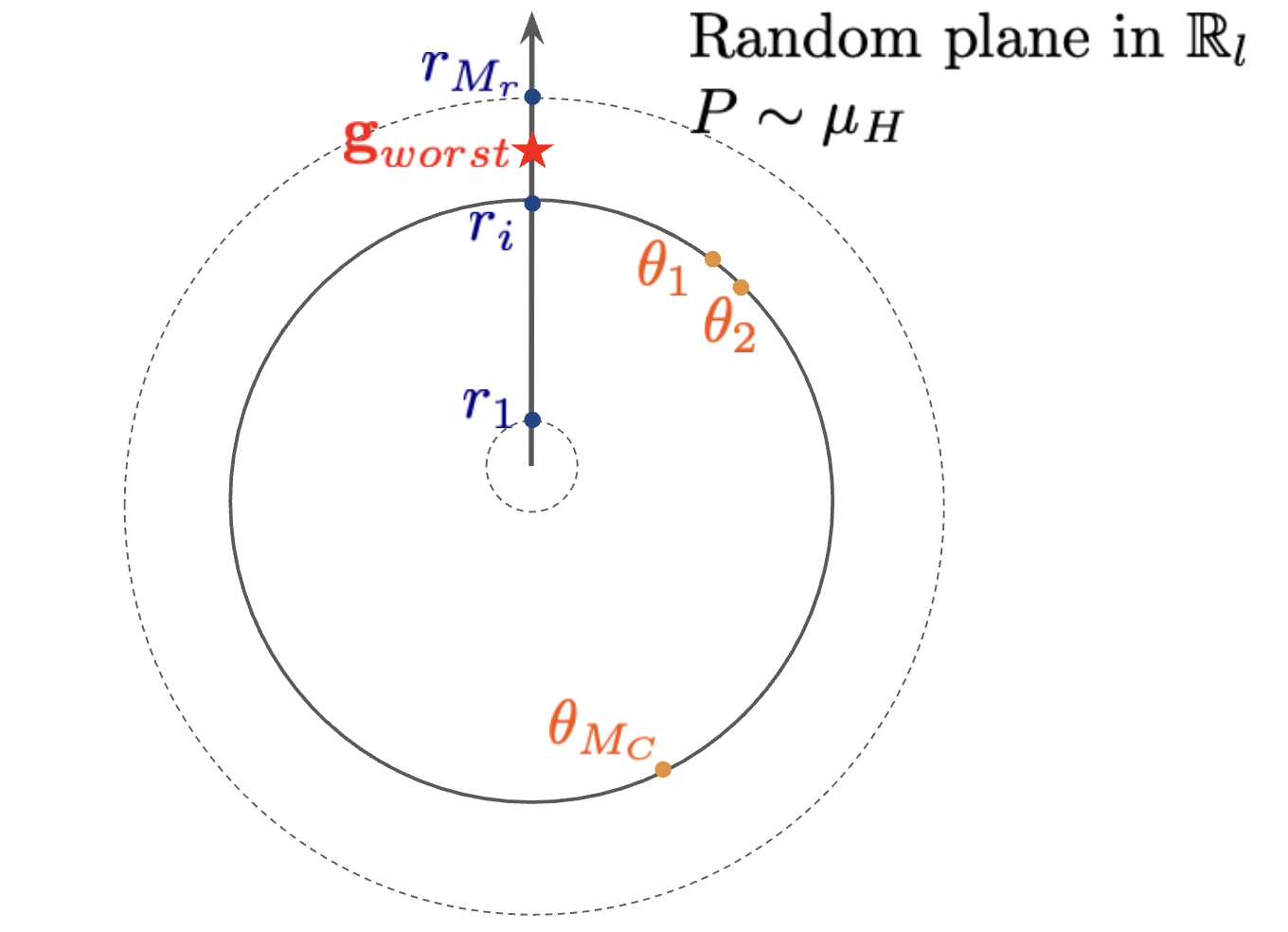}
    \caption{An illustration of steps 2 and 3 of our worst-to-average-case reduction, {\em slicing and dicing the sphere}. Having sampled a random plane (i.e. slice) $P$ containing the worst-case point $\g_{worst}$, the goal is now to interpolate to $\g_{worst}$ solely based on points in this slice.}
    \label{fig:slicingdicing_main}
\end{figure}

\medskip
All-in-all, the worst-to-average-case reduction consists in the following three steps, the last two of which are illustrated in \Cref{fig:slicingdicing_main}:
\begin{enumerate}
    \item \textbf{Sample a random plane $P$ containing $\g_{worst}$}.
    \item \textbf{Interpolation on the \(z\)-axis:} We interpolate a degree-$m$ polynomial on the points $\{(\rr_i, y_i)\}_i$  where $\rr_i\in P$ and is sampled from the radial distribution of $\calN_l$, and $y_i \approx \calT_m(\rr_i)$.
    \item \textbf{Interpolation on a circumference of radius \(|\rr_i|\):} To obtain $y_i$ at a given point $\rr_i$, we interpolate another degree-$m$ polynomial, $\calT_C$ on the points $\{(X_j^{(i)}, Y_j^{(i)})\}_j$ where the $X_j^{(i)}$ are sampled from the angular marginal of $\calN_l$ on the plane $P$, and exist on the circumference $C_i$ consisting of all the points on $P$ at distance $\rr_i$.
\end{enumerate}

\subsection{Algorithmic robust Berlekamp-Welch}\label{subsec:algorithmicRBW}
All in all, we have reduced multivariate interpolation on $\mathbb{R}^l$ to a polynomial number of univariate interpolations on a circumference and a univariate interpolation on a ray. Each of these univariate interpolations can be handled by the Berlekamp-Welch algorithm, of which we contribute a polynomial-time algorithmic version that is robust to two kinds of errors in its input points $(x_i, y_i)$: first, only a constant fraction of the $y_i$s are good approximations of $\calT_m(x_i)$; and secondly, the definition of a ``good" approximation is tolerant of some amount of imprecision in the evaluations. Previous work \cite{BFLL21} was only able to give a $\textsf{P}^{\textsf{NP}}$ algorithm for this problem, and left the elimination of the $\textsf{NP}$ oracle as an open problem. We solve this, giving a simple polynomial-time algorithm for the same problem and putting it in $\textsf{P}$.

\begin{theorem}[Robust efficient Berlekamp-Welch\label{thm:REBW}]
Let $x_1, \dots, x_n \in [-1,1]$ and let $y_1, \dots , y_n \in \R$.  Assume that $x_1,\dots , x_n$ are $\delta$-separated.  Let $0 <  k < n$ and assume that there exists a polynomial $p$ of degree at most $n - 2k - 1$ such that $|p(x_i) - y_i| \leq \eps$ for at least $n - k$ distinct indices $i$.  Then there is an algorithm that, on input $\{(x_i,y_i)\}_{i\in [n]}$, runs in time $\text{poly}(n)$ and outputs a polynomial $q$ of degree at most $n - 2k - 1$ such that  $|q(x_i) - p(x_i)| \leq (10/\delta)^{2n} \cdot \eps$ for at least $n-2k$ values $i \in [n]$.
\end{theorem}

The full proof of this theorem is deferred to \Cref{subsec:algorithmicRBW}. Let us jump ahead to its implications. The existence of a polynomial-time algorithm for robust Berlekamp-Welch implies that the task of evaluating output probabilities of random Hamiltonian time evolutions is $\# \mathsf{P}$-hard under $\textsf{BPP}$ reductions. Under the previous best algorithm for robust Berlekamp-Welch, one could at best say that the same task is $\# \mathsf{P}$-hard under $\textsf{BPP}^{\textsf{NP}}$ reductions. The former is stronger evidence for hardness: should probability evaluation turn out to be easy, the level of $\textsf{PH}$ to which we can force a collapse tightens by one oracle level. 

Since our algorithmic robust Berlekamp-Welch plugs directly into the complexity-theoretic arguments for the hardness of evaluating output probabilities of random circuits, Fermionic Linear Optics (FLO) circuits or permanents of BosonSampling setups, it also immediately implies a similar strengthening of the evidence for hardness for those tasks.

\begin{corollary}[$\# \textsf{P}$-hardness of SUPER in the saturated regime.] Let $\mid$ SUPER $\left.\right|_{ \pm} ^2$ be the problem of \textit{Sub-Unitary Permanent Estimation
with Repetitions}, defined in \cite{bouland2025complexitytheoreticfoundationsbosonsamplinglinear}. Let $m \geq 2.1 n$. In the regime $m=\Theta(n)|\mathrm{SUPER}|_{ \pm}^2$ is $\# \textsf{P}$-hard under $\textsf{BPP}$ reductions to additive error $\epsilon(S)=e^{-5 n \log (n)-O(n)}$. with probability at least $1-\delta$, with $\delta=1 / \operatorname{poly}((n)).$
\end{corollary}

\begin{corollary}[$\#\textsf{P}$ hardness of approximating random circuit output probabilities]
    Let $\mathcal{A}$ be a circuit architecture so that computing $\mathrm{p}_0(C) = |\bra{0}C\ket{0}|^2$ to within additive error $2^{-O(m)}$ is $\# \textsf{P}$-hard in the worst case. Then the following problem is $\# \textsf{P}$-hard under $\textsf{BPP}$ reductions: for any constant $\eta<\frac{1}{4}$, on input a random circuit $C \sim \mathcal{H}_{\mathcal{A}}$ with $m$ gates, compute the output probability $\mathrm{p}_0(C)$ up to additive error $\delta=\exp (-O(m \log m))$, with probability at least $1-\eta$ over the choice of $C$.
\end{corollary}

\section{Discussion}
\subsection{What does {\em not} suffice to prove hardness}\label{subsec:notsuffice}
The hardness of evaluating the output probabilities of random Hamiltonian dynamics {\em does not immediately follow} from the observation that every Hamiltonian time evolution can be approximated by a trotterized Hamiltonian time evolution, and thereafter seeking a recourse to random circuit sampling-style hardness tools. In general, this approach runs into the difficulty that we wish to rule out an evaluator that succeeds with high probability over the Gaussian geolocal distribution over Hamiltonians. Once we Trotterize this ensemble of random Hamiltonians, this induces a new distribution over the resulting circuits. This distribution over circuits is not natural in any sense, and in particular is not the same as the distribution over circuits obtained by sampling each gate from the Haar measure, which is the ensemble that the random circuit sampling proofs actually argue against. This could cause problems in using the hiding argument. In addition, to our knowledge, all existing proofs of hardness for evaluating the output probabilities of random circuits actually feed the evaluator inputs from a \textit{deformed }average-case distribution, whereupon they have to additionally argue that this deformed distribution is close to the Haar measure (see Fact 15 in \cite{BFNV18}, or Lemma 2 of \cite{movassagh2018efficientunitarypathsquantum}). Both of these arguments rely on properties of the Haar measure; it may be more difficult to make an analogous argument for the induced distribution over circuits obtained from Trotterization.

An alternative, ``RCS-style" tack for the worst-to-average-case reduction, would be to define a one-parameter path from a random coefficient vector $\g\sim \calE(2)$ to $\g_{worst}$ via
\begin{equation}
    \g_{\theta} = \theta \g + (1-\theta) \g_{worst}.
\end{equation}
Then, one would choose as interpolation points, the assumed evaluator's outputs when quizzed on the Hamiltonians $\{\g_{\theta_i}\}_i$ where $\theta_i\sim \text{Unif}([0,1])$. This is precisely the `deformed' average-case distribution alluded to many times in our text. 
This strategy, however, still encounters the second barrier above: to argue for TVD closeness of the deformed average-case distribution, one still has to find a way to relate the measurement output distribution for the evolution $e^{-iH(g_{\theta})t}$ to that for $e^{-iH(g_{0})t}$, which is arguably not simpler than the approach we have chosen. 

\subsection{Does anti-concentration hold?}
The reader may wonder if we can prove anti-concentration for our Hamiltonian ensemble, which is well-studied for random circuit sampling. Anti-concentration states that the output distribution of a random quantum circuit is ``spread out” – that most output probabilities are reasonably large. While not formally needed for our worst-to-average case proof, an anticoncentration statement would reduce our current notion of {\em additive} approximation that is necessary for the hardness proof to a more physically plausible one that involves only relative errors. 

We do not prove anticoncentration here. In fact, it is not clear to us if our ensemble of Hamiltonians time evolutions will anticoncentrate at all. We point out that random quantum circuits with two-local gates require $\Omega(\log(n))$ depth to anticoncentrate \cite{Dalzelletal}. Yet, it is debatable if this can be taken as heuristic evidence that our distributions (which result from running a Hamiltonian for constant time) do not anticoncentrate, as we cannot really equate `circuit depth' in that setting to runtime of our Hamiltonian time evolutions: runtime can always be rescaled by rescaling the distribution from which we sample the coefficients. 

Given that we have much fewer tools for analyzing random Hamiltonian time evolutions than for random circuits, we leave the question of whether our ensemble anticoncentrates as an intriguing open problem. 

\section{Preliminaries}
\subsection{The Gaussian geolocal Hamiltonian ensemble}\label{sec:preliminaries}
We introduce our geometrically $k$-local Hamiltonian ensemble. The notion of geometric locality relies on an underlying graph structure on which the qubits are arranged; we assume we have an $r$-dimensional lattice with $n$ qubits, but our results easily extend to Hamiltonians on arbitrary graphs. Let $\calP_k$ be the set of all geometrically-$k$-local Paulis on this lattice and set
$l := |\mathcal{P}_k|$; for fixed $r,k$ one has
$l = c_{r,k}\,n + O(1)$, i.e. $l=\Theta(n)$ with a constant
$c_{r,k}$ depending only on $r$ and $k$.

\begin{definition}[Gaussian geolocal Hamiltonian ensemble] With $l=|\calP_k|$, draw i.i.d. Gaussian coefficients
$\bm g=(g_P)_{P\in\mathcal{P}_k} \sim \mathcal N(0,\sigma^2 I_l)$
with $\sigma^2 = 1/l$ and define the following Hamiltonian ensemble, which we call $\calE(k)$:
\begin{equation}\label{eq:Hg}
  H(\bm g) \;:=\; \sum_{P\in\mathcal{P}_k} g_P\,P
  \;\stackrel{d}{=}\;
  \frac{1}{\sqrt{l}}\sum_{P\in\mathcal{P}_k} z_P\,P,
  \qquad z_P \stackrel{\text{i.i.d.}}{\sim} \mathcal N(0,1).
\end{equation}
\end{definition}

Often we will want to refer to the distribution from which the entire vector of coefficients $\g$ is sampled; we denote this as
\begin{equation}
    \calN_l := \mathcal{N}(0, \sigma^2 \mathbb{I}_{l}).
\end{equation}
where $\sigma^2 = \frac{1}{l}$. An alternative convention for the Hamiltonian's normalization could be to replace the $1/\sqrt{l}$ in \Cref{eq:Hg} with $1/\sqrt{n}$; this choice does not affect the asymptotics of our results because $l=\Theta(n)$. Alternatively, our entire proof machinery also goes through with the choice of normalization $1/n$, which would result in $\sigma^2=1/n^2$. 

In the rest of this paper, for simplicity, we will assume that the geometric locality is $k=2$ and that our Hamiltonians live on a rectangular 2D lattice of dimensions $n=L_x \times L_y$. Assuming periodic boundary conditions for simplicity, with this choice of parameters, 
\begin{equation}
    l = |\calP_2| \approx 3n+18n = 21n
\end{equation}
as there are 9 choices of 2-qubit Pauli terms per edge, for each of the $L_x(L_y-1)+L_y(L_x-1) \approx 2n$ edges, and 3 choices of 1-qubit Pauli terms per vertex, for each of the $L_xL_y =n$ vertices.

\subsection{Polynomial approximation of output probabilities}

We are interested in evaluating the probability of a sampled Hamiltonian time evolution outputting the bitstring $+^n$, i.e.
\begin{equation}
    D(\bm{g}) = |\bra{+^n} e^{-iH(\bm g)t} \ket{+^{n}}|^2,
\end{equation}
where $H(\g)\sim \calE(2).$ We will first approximate this probability by a polynomial in $\g$. 

Let us fix input and output states, calling them $\ket{\psi_{in}}$ and $\ket{\psi_{out}}$, a coefficient vector $\g$ and the corresponding Hamiltonian $H(\g)$ as in \Cref{eq:Hg}. The $m$-th order Taylor approximation of the Hamiltonian time evolution operator $e^{-iH(\g)t}$ is
\begin{align}
    \sum_{k=0}^{m} \frac{(-iH(\g)t)^k}{k!} = \sum_{k=0}^{m} \frac{(-it)^k}{k!} H(\g)^k = \sum_{k=0}^{m} c_k \sum_{\ii \in [|\calP_2|]^k} d_{\ii}^{(k)} S_{\ii}^{(k)},  
\end{align}
where we have assumed a canonical ordering of the Paulis in $\calP_2$ so that each can be referred to by an index in $\{1,\ldots |\calP_2|\} =: [|\calP_2|]$. Here 
\begin{align}
c_k &= \frac{-i^k t^k}{k!}\\
    d_{\ii}^{(k)} &= \prod_{j=1}^k g_{i_j} \qquad \text{(monomial in $\leq k$ variables of $\g$)}\\
    S_{\ii}^{(k)} &= P_{i_1}\ldots P_{i_k} \qquad \text{(ordered product of Paulis)}
\end{align}
We define the following $2m$-th order polynomial approximation of the output probability $D(\g)=|\bra{\psi_{out}} e^{-iH(\g)t}\ket{\psi_{out}}|^2$:
\begin{align}
    \mathcal{T}_{2m}(\g) &:= |\bra{\psi_{out}}\sum_{k=0}^{m} \frac{(-iH(\g)t)^k}{k!}\ket{\psi_{in}}|^2\\
    &=|\bra{\psi_{out}} \sum_{k=0}^{m} c_k \sum_{\ii \in [|\calP_2|]^k} d_{\ii}^{(k)} S_{\ii}^{(k)} \ket{\psi_{in}}|^2\\
    &= \sum_{k_1,k_2 =0}^m c_{k_1}c_{k_2}^{\ast}\sum_{\ii\in [|\mathcal{P}_2|]^{k_1}, \mathbf{j}\in [|\mathcal{P}_2|]^{k_2}} d_{\ii}^{(k_1)}d_{\mathbf{j}}^{(k_2)}\bra{\psi_{out}}S_{\ii}^{k_1} \ketbra{\psi_{in}}{\psi_{in}} S_{\mathbf{j}}^{k_2}  \ket{\psi_{out}}.\label{eq:polys}
\end{align}
Here, the only functions of $\g \in \mathbb{R}^l$ are the coefficients $d_{\ii}^{(k_1)},d_{\mathbf{j}}^{(k_2)}$ which is a monomial in at most $k_1+k_2$ variables. Therefore $\mathcal{T}_{2m}(\g)$ is a polynomial of degree-$2m$ in the vector $\g$. We note that the coefficients of $\mathcal{T}_{2m}$ will depend on $\ket{\psi_{in}}$, $\ket{\psi_{out}}$ and $t$, but we assume they are fixed. Let us now evaluate the quality of this polynomial approximation as a function of $m$.

\begin{theorem}[Polynomial approximation for $D$]\label{thm:polyapproxguarantees}
Let $\g \in \mathbb{R}^l$, and $H(\g):=\sum_{i\in \calP_2} g_i P_i$ where each $P_i$ is a unique $n$-qubit Pauli. Let $\bigl\{\mathcal{T}_{2m} : \mathbb{R}^l \!\to\! \mathbb{R}\bigr\}_{m\in\mathbb{N}}$
 be the family of degree $2m$ polynomial approximations given in \Cref{eq:polys}.We have
    \begin{equation}
    | D(\g) - \mathcal{T}_{2m}(\g) |\leq \eps
\end{equation}
as long as 
\begin{equation}\label{eq:polydegree}
    m \geq \Theta(\lVert H(\g)\rVert t + \log(1/\eps)).
\end{equation}

\end{theorem}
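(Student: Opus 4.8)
The plan is to reduce this probability-level approximation to an operator-norm statement about truncating the matrix exponential, and then to convert amplitude error into probability error by an elementary algebraic manipulation. Write $U := e^{-iH(\g)t}$ and let $V_m := \sum_{k=0}^{m}\frac{(-iH(\g)t)^k}{k!}$ be the degree-$m$ Taylor truncation that underlies \Cref{eq:polys}, so that, for unit vectors $\ket{\psi_{in}},\ket{\psi_{out}}$, we have $D(\g) = |a|^2$ and $\mathcal{T}_{2m}(\g) = |b|^2$ with $a := \bra{\psi_{out}}U\ket{\psi_{in}}$ and $b := \bra{\psi_{out}}V_m\ket{\psi_{in}}$.

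The first step is to bound $\lVert U - V_m\rVert$ in operator norm. Since $U - V_m = \sum_{k=m+1}^{\infty}\frac{(-iH(\g)t)^k}{k!}$, submultiplicativity and the triangle inequality give $\lVert U - V_m\rVert \le \sum_{k=m+1}^{\infty}\frac{(\lVert H(\g)\rVert t)^k}{k!}$. Using the Stirling-type bound $k! \ge (k/e)^k$, each summand is at most $\bigl(e\lVert H(\g)\rVert t/k\bigr)^k$; hence once $m+1 \ge 2e\lVert H(\g)\rVert t$, every term with $k\ge m+1$ is at most $2^{-k}$, and the tail is at most $2^{-m}$. Setting $\eps' := 2^{-m}$, we conclude $\lVert U - V_m\rVert \le \eps'$ whenever $m \ge 2e\lVert H(\g)\rVert t$; to additionally force $\eps' \le \eps/3$ it suffices that $m \ge \log_2(3/\eps)$. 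Taking the larger of the two conditions yields exactly $m \ge \Theta(\lVert H(\g)\rVert t + \log(1/\eps))$, matching \Cref{eq:polydegree}.

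The second step passes from operators to probabilities. Because $\ket{\psi_{in}},\ket{\psi_{out}}$ are unit vectors, $|a - b| = \bigl|\bra{\psi_{out}}(U - V_m)\ket{\psi_{in}}\bigr| \le \lVert U - V_m\rVert \le \eps'$, and since $U$ is unitary, $|a|\le 1$, so $|b| \le 1 + \eps'$. Then, using $x^2-y^2=(x-y)(x+y)$ and the reverse triangle inequality,
\[
|D(\g) - \mathcal{T}_{2m}(\g)| \;=\; \bigl||a|^2 - |b|^2\bigr| \;=\; \bigl||a|-|b|\bigr|\,\bigl(|a|+|b|\bigr) \;\le\; |a-b|\,(2+\eps') \;\le\; \eps'(2+\eps') \;\le\; 3\eps' \;\le\; \eps,
\]
where the last two steps use $\eps'\le 1$. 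This closes the argument.

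There is no genuine obstacle at the level of this statement: the whole thing is a routine consequence of submultiplicativity of the operator norm and the factorial growth of $k!$. The only point worth a moment's care is the Taylor-remainder estimate, where one must check that the ``burn-in'' threshold $m \gtrsim \lVert H(\g)\rVert t$ is precisely what forces the ratio of consecutive tail terms below $1/2$ and thereby triggers geometric decay. I would also flag that the bound is stated for a fixed realization $\g$ through $\lVert H(\g)\rVert$; concluding later that a polynomial of degree $\poly(n)$ suffices for a typical $\g\sim\calE(2)$ requires a separate high-probability operator-norm bound for random geometrically-local Hamiltonians, which is not part of this theorem.
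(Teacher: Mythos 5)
Your proof is correct, and for the step that converts amplitude error into probability error it takes a genuinely different and in fact tighter route than the paper. The Taylor-tail estimate is essentially the same in both arguments (triangle inequality, $k!\ge (k/e)^k$, then geometric decay of the tail once $m\gtrsim \lVert H(\g)\rVert t$). Where you diverge is the passage from $\lVert U-V_m\rVert\le\eps'$ to $|D(\g)-\mathcal{T}_{2m}(\g)|\le\eps$: the paper rewrites the difference of squared amplitudes as a trace over a tensor-product operator, applies H\"older's inequality and a telescoping bound, and controls $\bigl\lVert V_m\bigr\rVert$ only by the crude estimate $e^{\lVert H(\g)\rVert t}$, so its final error bound carries an extra factor of $2e^{\lVert H(\g)\rVert t}$ that must then be absorbed into the degree requirement. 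You instead use the scalar identity $\bigl||a|^2-|b|^2\bigr|=\bigl||a|-|b|\bigr|(|a|+|b|)$ together with $|a|\le 1$ (unitarity of $U$) and $|b|\le 1+\eps'$, which bounds the probability error by $3\eps'$ with no exponential prefactor. Both routes land on the same asymptotic condition $m\ge\Theta(\lVert H(\g)\rVert t+\log(1/\eps))$, so nothing is lost, but your intermediate bound is strictly sharper and the argument is more elementary. Your closing caveat --- that the theorem is a pointwise statement in $\g$ and that a typicality claim over $\g\sim\calE(2)$ requires a separate operator-norm concentration bound --- matches the remark the paper makes immediately after the theorem.
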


\begin{proof}
Let us start by bounding the operator norms of some relevant operators:
\begin{align}\label{eq:1}
    \left\lVert e^{-iH(\g)t} - \sum_{k=0}^{m} \frac{(-iH(\g)t)^k}{k!} \right\rVert  &=\left\|\sum_{k=m+1}^{\infty} \frac{(-i H(\g) t)^k}{k!}\right\| \\
    &\leq \sum_{k=m+1}^{\infty} \frac{\left\lVert H(\g) \right\rVert^k t^k}{k!} \\
    &\leq \sum_{k=m+1}^{\infty} \left(\frac{e\lVert H(\g) \rVert t}{k}\right)^k,\\
    &\leq \sum_{k=m+1}^{\infty} \left(\frac{e\lVert H(\g) \rVert t}{m}\right)^k 
\end{align}
where in the second inequality we have used $k! \geq (\frac{k}{e})^k$. Also note that 
\begin{equation}\label{eq:2}
    \left \lVert \sum_{k=0}^{m} \frac{(-i H(\g) t)^k}{k!} \right \rVert \leq \sum_{k=0}^{m} \frac{\lVert  H(\g)\rVert^k t^k}{k!} \leq e^{\lVert H(\g)\rVert t}
\end{equation}
Then at any $\g\in \mathbb{R}^l$
\begin{align}
    &|D(\g)-\mathcal{T}_{2m}(\g)|\\
    &= \left||\langle \psi_{out} |e^{-iH(\g)t} |\psi_{in} \rangle |^2 -  |\bra{\psi_{out}}\sum_{k=0}^{m} \frac{(-iH(\g)t)^k}{k!}\ket{\psi_{in}}|^2  \right|\\
    &= \left|\Tr\left(\bra{\psi_{out}}\otimes\bra{\psi_{in}}  e^{-iH(\g)t}\otimes  e^{iH(\g)t} - \sum_{k=0}^{m} \frac{(-iH(\g)t)^k}{k!} \otimes \left(\sum_{k=0}^{m} \frac{(-iH(\g)t)^k}{k!}\right)^{\dagger}\ket{\psi_{in}}\otimes \ket{\psi_{out}} \right ) \right|\\
    &\leq \left \lVert  e^{-iH(\g)t}\otimes  e^{iH(\g)t} - \sum_{k=0}^{m} \frac{(-iH(\g)t)^k}{k!} \otimes \left(\sum_{k=0}^{m} \frac{(-iH(\g)t)^k}{k!}\right)^{\dagger} \right\rVert\\
    &\leq \lVert e^{-iH(\g)t} \rVert \left\lVert e^{-iH(\g)t} - \sum_{k=0}^{m} \frac{(-iH(\g)t)^k}{k!}\right\rVert +  \left\lVert  \sum_{k=0}^{m} \frac{(-iH(\g)t)^k}{k!} \right\rVert  \left\lVert (e^{-iH(\g)t})^{\dagger} - (\sum_{k=0}^{m} \frac{(-iH(\g)t)^k}{k!})^{\dagger} \right\rVert\\
    & \leq 2 \max\left(\left\lVert  \sum_{k=0}^{m} \frac{(-iH(\g)t)^k}{k!} \right\rVert, \lVert e^{-iH(\g)t} \rVert\right) \left\lVert e^{-iH(\g)t} - \sum_{k=0}^{m} \frac{(-iH(\g)t)^k}{k!} \right\rVert\\
    &\leq 2e^{\lVert H(\g) \rVert t} \sum_{k=m+1}^{\infty} \left(\frac{e\lVert H(\g) \rVert t}{m}\right)^k\label{eq:3}
\end{align}
where all norms are the spectral norm. Here the first inequality is by H\"{o}lder's and the second inequality is by the following fact: for any matrices $A,B, C,D$ of the appropriate dimensions,
\begin{equation}
    \lVert A \otimes B - C\otimes D\rVert = \lVert A \otimes B - A\otimes D + A\otimes D - C\otimes D\rVert \leq \lVert A \otimes B - A\otimes D \rVert + \lVert A\otimes D - C\otimes D\rVert 
\end{equation}
for any subadditive matrix norm. The last inequality is by \Cref{eq:1} and \Cref{eq:2}. If $m \geq e\lVert H(\g)\rVert t$, then we may further bound \Cref{eq:3} using the standard formula for a geometric series, $\sum_{k=m+1}^{\infty} c^k = c^{m+1}/(1-c)$ for $c<1$, as
\begin{equation}
    2e^{\lVert H(\g) \rVert t}  \left(\frac{e\lVert H(\g) \rVert t}{m}\right)^{m+1}
\end{equation}

Let $c := \frac{e\lVert H(\g) \rVert t}{m}$, $c <1$. The above expression for the error of the Taylor approximation will be upper bounded by $\eps$ as long as the polynomial degree $m$ is chosen as
\begin{align}
    m+1 \geq \log_c(\eps e^{-\lVert H(\g) \rVert t}) = \frac{\log(1/\eps) +\lVert H(\g) \rVert t}{\log(1/c)}.
\end{align}
Combining the restrictions on $m$, and simplifying, yield the first part of the theorem \Cref{eq:polydegree}.
\end{proof}

Fix a polynomial degree $m$ and accuracy $\epsilon>0$. The polynomial $\mathcal{T}_{2 m}$ is an $\epsilon$-approximation to $D$ only on a restricted portion of $D$'s domain -- namely, the set of $\bf{g}$ such that \Cref{eq:polydegree} is satisfied. However, using concentration inequalities for Matrix Gaussian series \cite{Tropp15}, one could show that $\calT_{2m}$ is a good approximation to $D$ with overwhelming probability over $\g \sim \calN_l.$ We omit the formal statement and proof as it is not needed in this paper. 

\section{Complexity without hiding}

\subsection{Overview of argument}\label{sec:overview}
Our experimental task is \emph{sampling} from the output distribution of a random Hamiltonian evolution at constant time \(\tau=O(1)\). Our hardness proof, however, centers around the task of \emph{evaluating} specific output probabilities. This subsection explains, at a high level, why hardness of average-case \emph{evaluation} suffices to rule out efficient classical \emph{sampling}. In adapting this proof to random Hamiltonians, we address a technical obstacle: the lack of the ``hiding'' property that underpins analogous arguments for random circuit sampling.

\paragraph{Notation in this subsection.}
Let \(H\) be an \(n\)-qubit Hamiltonian, \(\ket{\psi_\beta}\) an efficiently preparable input state, and let measurements be performed in the \(X\) basis. For \(x\in\{0,1\}^n\), denote by \(\ket{x}_X := \bigotimes_{i=1}^n \ket{(-1)^{x_i}} \in \{\ket{+},\ket{-}\}^{\otimes n}\) the \(X\)-basis string. We write
\[
p_{H,y}(x)\ :=\ \left|\!\bra{x}_X\, e^{-iH\tau}\, Z^y\,\ket{\psi_\beta}\!\right|^2,\qquad
Z^y:=\bigotimes_{i:y_i=1} Z_i.
\]
Thus \(D_{H,y}\) is the distribution on outcomes \(x\) induced by the experiment.

\begin{definition}[Approximate (Hamiltonian) sampler]\label{def:sampler}
A (worst-case) \(\varepsilon\)-\emph{approximate sampler} at time \(\tau=O(1)\) is a probabilistic polynomial-time algorithm \(\mathcal{S}\) which, on input \((H,y,1^{1/\varepsilon})\), outputs samples from a distribution \(D'_{H,y}\) satisfying
\(
\|D'_{H,y}-D_{H,y}\|_1 \le \varepsilon
\)
for every \(n\)-qubit Hamiltonian \(H\) in the family under consideration and every \(y\in\{0,1\}^n\).
\end{definition}

\begin{definition}[Average-case approximate evaluator]\label{def:evaluator}
Fix a distribution \(\mathcal{D}\) over instances \(x\). A polynomial-time algorithm \(\mathcal{O}\) is an \emph{average-case} \((\varepsilon,\delta)\)-\emph{approximate evaluator} for a quantity \(p(x)\) with respect to $\mathcal{D}$ if
\[
\Pr_{x\sim \mathcal{D}}\!\left[\,\big|\mathcal{O}(1^{1/\varepsilon},1^{1/\delta},x)-p(x)\big| \le \tfrac{\varepsilon}{2^n}\,\right] \ge 1-\delta.
\]
When \(x\) parameterizes a Hamiltonian \(H(x)\), we call \(\mathcal{O}\) an average-case approximate \emph{Hamiltonian} evaluator if \(p(x)=\big|\!\bra{+^n}e^{-iH(x)\tau}\ket{+^n}\!\big|^2\).
\end{definition}

\paragraph{High-level pipeline.}
Our argument follows the now-standard three-step framework (cf.\ random circuit \cite{BFNV18,BFLL21}/boson \cite{AA11,BDFH24}/fermion sampling \cite{Oszmaniec_2022}), specialized and adapted to Hamiltonians:

\begin{enumerate}
\item \textbf{Worst-case \(\#\)P-hardness of evaluating probabilities.} We define a worst-case hard family \(\mathcal{H}_{\mathrm{\text{worst}}}\) of constant-time nearest-neighbor Ising evolutions (Procedure~\ref{proc:worstH}), and show that computing \(\left|\!\bra{+^n}e^{-iH_\mathcal{S}\tau}\ket{+^n}\!\right|^2\) to additive error \(2^{-n}\) is \(\#\)P-hard in the worst case (Theorem~\ref{thm:worst}).

\item \textbf{Worst-to-average reduction for probabilities.} Our main result is a worst-to-average reduction from \(\mathcal{H}_{\mathrm{\text{worst}}}\) to our Gaussian geo-local Hamiltonian ensemble \(\mathcal{E}(2)\): if one can, on average over \(H\sim\mathcal{E}(2)\), approximate \(\left|\!\bra{+^n}e^{-iH\tau}\ket{+^n}\!\right|^2\) within \(2^{-\Theta(n\log n)}\), then one can (with high probability) approximate the same quantity for any worst-case \(H_\mathcal{S}\in\mathcal{H}_{\mathrm{\text{worst}}}\) to additive error \(2^{-n}\) (Theorems~\ref{thm:outer_2} and~\ref{thm:outer_3}). This yields average-case \(\#\)P-hardness of evaluation up to additive precision $2^{-\Theta (n\log n)}$ over \(\mathcal{E}(2)\).

\item \textbf{From sampling to evaluation in \(\textsf{BPP}^{\textsf{NP}^{\calS}}\).} Given a sampler \(\mathcal{S}\) (Def.~\ref{def:sampler}), Stockmeyer’s approximate counting turns oracle access to \(\mathcal{S}\) into multiplicative estimates of its output probabilities. A Markov-inequality argument then yields \emph{additive} \(O(2^{-n})\)-accurate estimates of the \emph{true} probabilities with high probability over outcomes \(x\). This must then be converted into an evaluator that succeeds over most instances, and this evaluator is in $\textsf{BPP}^{\textsf{NP}^{\calS}}$ (Lemma~\ref{lem:samplerimpliesevaluator}), often via what is known as a hiding property.
\end{enumerate}

\paragraph{Overcoming the lack of hiding for typical Hamiltonians in Step 3.}
For random circuits, one converts “success over most \emph{outcomes} for a fixed circuit” into “success over most \emph{circuits} for a fixed outcome” via a \emph{hiding} symmetry: appending random \(X\)’s permutes outputs while preserving the distribution over circuits (problem instances). Random Hamiltonians do not enjoy this symmetry: consider that for $H\sim \mathcal{E}(2)$ and $y\in \{0,1\}^n$, the random matrix $e^{-iHt}Z^y$ does not have the same distribution as $e^{-iHt}$. Our remedy is to \emph{randomize the input} by drawing \(y\sim\{0,1\}^n\) and preparing \(Z^y\ket{\psi_\beta}\). We then exploit a simple but crucial invariance:

\begin{quote}
\emph{If \(H\sim\mathcal{E}(2)\), then for any \(y\in\{0,1\}^n\), the conjugated Hamiltonian \(Z^yHZ^y\) has the same distribution as \(H\).}
\end{quote}

The upshot is a `joint hiding' property (\Cref{thm:hiding}) which holds jointly over the input (parametrized by $y\sim \text{Unif}(\{0,1\}^n)$) and the distribution over Hamiltonian instances ($H\sim \calE(2)$). 
Using the identity
\(
Z^y e^{-iH\tau} = e^{-i(Z^yHZ^y)\tau} Z^y,
\)
we successfully map ``success over most outcomes” to ``success over most random input-instance pairs” (Theorem~\ref{thm:prfixedoutput}). This replaces circuit hiding and completes the sampling\(\Rightarrow\)evaluation step for random Hamiltonians.

\medskip
Returning to the three-step pipeline, steps 1 and 2 yield average-case \(\#\)P-hardness of evaluation over \(\mathcal{E}(2)\) up to additive precision $2^{-\Theta (n\log n)}$, and this is proven more rigorously in \Cref{sec:sharpP}. This provides strong evidence in support of \Cref{conj:avcasesP}, which is the same statement but with a more generous additive precision of $2^{-\Theta (n)}$. Step 3 says that the above task (with additive precision $2^{-\Theta (n)}$) is in $\textsf{BPP}^{\textsf{NP}^{\calS}}$ -- this is proven more rigorously in \Cref{sec:BPPNP}. Putting these two statements together yields the contradiction (up to \Cref{conj:avcasesP})\footnote{While there is still a gap between the additive precision up to which we can prove $\#P$ hardness and that required by \Cref{conj:avcasesP}, we note that all existing proposals for quantum advantage via sampling have a similar gap. For some more established proposals this gap has been narrowed, for example in \cite{BDFH24}.}:

\begin{tcolorbox}[
  colback=blue!10,
  colframe=blue!80!black,
  rounded corners,
  shadow={drop},
  fontupper=\bfseries,
  boxrule=3pt,
  arc=5mm
]
Average-case approximate Hamiltonian evaluation is \(\#\)P-hard, yet an efficient sampler $\calS$ -- worst-case or average-case over the same ensemble -- would place it in \(\textsf{BPP}^{\textsf{NP}^\calS}\).
\end{tcolorbox}
By Toda’s theorem, this would collapse the polynomial hierarchy. Assuming PH does not collapse, we conclude that no such sampler exists.

\subsection{Average-case evaluation of output probabilities is in $\# \textsf{P}$}\label{sec:sharpP}

In this section, we argue that average-case evaluation of output probabilities of Hamiltonian time evolutions is in $\# \textsf{P}$. This will cover the first two steps of the three-step pipeline outlined in the `Overview' subsection, where step 2 contains the technical meat of our contribution and is the subject of \Cref{sec:worst-av}.

Ref.\cite{BermejoVega18} proposed a family of Ising-quench experiments, whose output probabilities are worst-case hard to evaluate up to exponentially small additive precision.
\begin{enumerate}
    \item Arrange $n = m\times m'$ qubits side-by-side on an $m$ row, $n$ column square lattice, with vertices $V$ and edges $E$, and prepare the product-state input 
\begin{equation}
\left|\psi_{\bm{\beta}}\right\rangle=\bigotimes_{i=1}^n\left(|0\rangle+\mathrm{e}^{\mathrm{i} \beta_{i}}|1\rangle\right), 
\end{equation}
parametrized by vector $\bm{\beta}$ where each $\beta_i \in \{0,\pi/4\}.$     
\item Let the system time-evolve under a nearest-neighbor translationally-invariant Ising Hamiltonian 
    \begin{equation}
H:=\sum_{(i, j) \in E} J_{ij} Z_i Z_j-\sum_{i \in V} h_i Z_i .
    \end{equation}
    and a fixed evolution time $\tau = 1$. Here, the parameters $J_{ij}, h_i = O(1)$.
    \item Measure all qubits in the $X$ basis.
\end{enumerate}
They map this architecture to a family of random circuits which, in turn, can implement dense Instantaneous Quantum Polynomial-time (IQP) circuits on a register via a measurement-based quantum computation (MBQC) argument. A consequence is that classical computation of these output probabilities is $\# \mathsf{P}$-hard, up to exponentially-small additive error. Intuitively, the hardness arises from the fact that these constant–depth evolutions can encode amplitudes of arbitrary polynomial-time quantum computations, and so even very fine additive estimates of those amplitudes would solve $\# \textsf{P}$–hard problems.

We adapt the worst-case hardness results of Ref.\cite{BermejoVega18} by pushing the hardness into the choice of Hamiltonians, rather than the input state. The main observation is that, in the prescription of Ref.\cite{BermejoVega18}, for indices $i\in [n]$ such that $\beta_i=1$, the corresponding single-qubit input state can be prepared via a $Z$-rotation acting on a $\ket{+}$ state:
\begin{equation}
    \ket{\psi_{\beta_i}} =\frac{\ket{0}+e^{i\pi/4}\ket{1}}{\sqrt{2}} = R_Z(\pi/4)\ket{+}.
\end{equation}
To prepare the full $n$-qubit input state $\ket{\psi_{\bm{\beta}}} =\bigotimes_{i=1}^n \ket{\psi_{\beta_i}} $ , then, one would have to initialize all states in $\ket{+}$, then act with a tensor product of $R_Z$ rotations on the qubits in $\calS$. But this may be re-written as a 1-local Hamiltonian time-evolution:
\begin{equation}\label{eq:1local}
    \bigotimes_{j\in \calS} R_Z^{(j)}(\theta) = \exp\left(-i\frac{\theta}{2}\sum_{j\in \calS} Z_j\right).
\end{equation}
Since Pauli $Z$s commute with Ising Hamiltonians, we may consider the composition of the $R_Z$ rotations used in state preparation, with the subsequent Ising Hamiltonian evolution, as the time evolution of a single effective Hamiltonian: the sum of the original Ising Hamiltonian with the 1-local Z-Hamiltonian of \Cref{eq:1local}. The full procedure is: 

\begin{procedure}[$\mathcal{H}_{\text{worst}}$, a worst-case hard class of Hamiltonian time-evolutions]\label{proc:worstH}
Consider the following procedure:
\begin{enumerate}
    \item Arrange $n = m\times m'$ qubits side-by-side on an $m$ row, $n$ column square lattice, with vertices $V$ and edges $E$, and prepare the product-state input $\ket{+}^{\otimes n}.$
    \item Pick $\calS \subseteq n$; let the system time-evolve for fixed evolution time $\tau = 1$, under a nearest-neighbor translationally-invariant Ising Hamiltonian 
    \begin{equation}
H_{\calS}:=\sum_{(i, j) \in E} J_{ij} Z_i Z_j-\sum_{i \in V} h_i Z_i + \frac{\pi}{8\tau}\sum_{j\in \calS} Z_j,
    \end{equation}
    with fixed parameters $J_{ij}, h_i = O(1).$ 
    \item Measure all qubits in the $X$ basis.
\end{enumerate}
\end{procedure}


Since every choice of $H_{\calS}$ describes exactly the same time evolution as that resulting from some choice of input state in \cite{BermejoVega18}, the following theorem follows immediately from their results:
\begin{theorem}[Worst-case hardness of Hamiltonian ensemble $\calH_{\text{worst}}$]\label{thm:worst}
    Fix evolution time $\tau=O(1)$. Let $\mathcal{H}_{\text{worst}} = \{H_{\calS}\}_{\calS \subseteq [n]}.$ Upon input of any Hamiltonian $H_\calS \sim \mathcal{H}_{\text{worst}}$, it is $\# \textsf{P}$-hard, in the worst-case, to approximately estimate the probability that time evolution corresponding to $H_{\calS}$ via \Cref{proc:worstH} will output the bitstring $+^n$, i.e.
    \begin{equation}
        D_{H_\calS}:=|\bra{+^n} e^{-iH_{\calS}\tau} \ket{+^{n}}|^2,
    \end{equation} 
    to additive error $2^{-\Theta(n)}.$
\end{theorem}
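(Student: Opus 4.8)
The plan is to reduce \Cref{thm:worst} to the worst-case hardness established in \cite{BermejoVega18} by exhibiting an exact correspondence between time evolutions in $\mathcal{H}_{\text{worst}}$ and those in the Ising-quench family of \cite{BermejoVega18}. The key observation, already flagged in the text preceding the theorem, is that the measurement output distribution of \Cref{proc:worstH} on input $\ket{+}^{\otimes n}$ with Hamiltonian $H_{\calS}$ is \emph{identical} to that of the \cite{BermejoVega18} experiment with input state $\ket{\psi_{\bm\beta}}$ where $\beta_i = \pi/4$ for $i\in\calS$ and $\beta_i = 0$ otherwise, and with the bare Ising Hamiltonian $H = \sum_{(i,j)\in E}J_{ij}Z_iZ_j - \sum_{i\in V}h_iZ_i$. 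So first I would make this equivalence precise.

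\textbf{Step 1: Commuting the $Z$-rotations through the Ising evolution.} Since $\sum_{j\in\calS}Z_j$ commutes with every term of the Ising Hamiltonian $H$ (all terms are diagonal in the computational basis), the evolution operator factorizes exactly:
\begin{equation}
e^{-iH_{\calS}\tau} = \exp\!\left(-i\tau\Big(H + \tfrac{\pi}{8\tau}\textstyle\sum_{j\in\calS}Z_j\Big)\right) = e^{-iH\tau}\cdot \exp\!\left(-i\tfrac{\pi}{8}\textstyle\sum_{j\in\calS}Z_j\right) = e^{-iH\tau}\bigotimes_{j\in\calS}R_Z^{(j)}(\pi/4).
\end{equation}
Hence $e^{-iH_{\calS}\tau}\ket{+^n} = e^{-iH\tau}\big(\bigotimes_{j\in\calS}R_Z^{(j)}(\pi/4)\ket{+}\big)\otimes\big(\bigotimes_{j\notin\calS}\ket{+}\big) = e^{-iH\tau}\ket{\psi_{\bm\beta}}$ with $\bm\beta$ as above. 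Therefore $D_{H_\calS} = |\bra{+^n}e^{-iH_\calS\tau}\ket{+^n}|^2 = |\bra{+^n}e^{-iH\tau}\ket{\psi_{\bm\beta}}|^2$, which is exactly the probability of outputting $+^n$ in the \cite{BermejoVega18} experiment (with their fixed Ising parameters, which we inherit as the $O(1)$ constants $J_{ij},h_i$).

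\textbf{Step 2: Invoking the worst-case hardness of \cite{BermejoVega18}.} Ref.~\cite{BermejoVega18} shows, via their MBQC-to-IQP encoding, that computing $|\bra{+^n}e^{-iH\tau}\ket{\psi_{\bm\beta}}|^2$ to additive error $2^{-\Theta(n)}$ is $\#\textsf{P}$-hard in the worst case over the choice of $\bm\beta\in\{0,\pi/4\}^n$. Since the map $\calS\mapsto\bm\beta$ (setting $\beta_i=\pi/4$ iff $i\in\calS$) is a bijection between subsets of $[n]$ and the set of allowed angle vectors, and since it is trivially polynomial-time computable in both directions, an algorithm that estimates $D_{H_\calS}$ to additive error $2^{-\Theta(n)}$ for every $\calS$ immediately yields one that estimates the \cite{BermejoVega18} output probability to the same precision for every $\bm\beta$. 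The $\#\textsf{P}$-hardness of \Cref{thm:worst} follows.

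\textbf{Main obstacle.} There is no serious mathematical obstacle here; the content is entirely in the reduction bookkeeping. The one point requiring care is verifying that the hardness in \cite{BermejoVega18} is stated (or can be restated without loss) with the Ising couplings $J_{ij},h_i$ \emph{fixed} to particular $O(1)$ values while the hardness ranges over the input parameters $\bm\beta$ — this is what lets us absorb all the problem-dependent structure into $\calS$ and keep the Ising part of $H_{\calS}$ translationally invariant with constant coefficients. Assuming their construction indeed fixes the Hamiltonian and varies the input (which is the setting their MBQC argument naturally produces), the theorem is immediate. A secondary, purely cosmetic check is that the additive-error regime matches: \cite{BermejoVega18} works at additive precision $2^{-\Theta(n)}$, which is exactly the precision claimed in \Cref{thm:worst}, so no amplification or rescaling is needed.
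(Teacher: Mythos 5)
Your proposal is correct and follows exactly the paper's argument: the paper likewise absorbs the $R_Z(\pi/4)$ state-preparation rotations into a commuting $\frac{\pi}{8\tau}\sum_{j\in\calS}Z_j$ term, identifies each $H_\calS$ evolution on $\ket{+^n}$ with the \cite{BermejoVega18} Ising quench on input $\ket{\psi_{\bm\beta}}$ via the bijection $\calS\mapsto\bm\beta$, and imports their worst-case $\#\textsf{P}$-hardness at additive error $2^{-\Theta(n)}$. Your flagged caveat (that \cite{BermejoVega18} fixes the Ising couplings and varies only the input angles) is indeed satisfied by their construction, so nothing further is needed.
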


Leveraging the results of \Cref{sec:worst-av}, we can convert the above {\em worst-case} hardness statement to an average-case hardness statement over the ensemble $\calE(2)$:

\begin{theorem}[Worst-to-average-case reduction for evaluating probabilities (\Cref{thm:outer}, adapted)]\label{thm:outer_2}
Assume access to an average-case solver $\calA$ that, upon input $H\sim \calE(2)$, outputs $\hat{D}_H$ such that 
\begin{equation}
    \Pr_{H\sim \calE(2)}[|\hat{D}_H- |\bra{+^n} e^{-iH\tau} \ket{+^{n}}|^2\leq 2^{-\Theta(n\log n)}] \geq 1-\delta.
\end{equation}
Then for any $H_{\calS}\in \calH_{\text{worst}}$, with probability at least $1-O(\delta^{1/4})$, we can in $\textsf{BPP}$ output $\hat{p}$ satisfying
    \begin{equation}
        |\hat{p}- |\bra{+^n} e^{-iH_{\calS}\tau} \ket{+^{n}}|^2| \leq 2^{-\Theta(n)}.
    \end{equation}
\end{theorem}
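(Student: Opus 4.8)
\textbf{Proof plan for Theorem~\ref{thm:outer_2}.}

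The plan is to build, from the average-case solver $\calA$, a $\textsf{BPP}$ algorithm that on input any $H_{\calS}\in \calH_{\text{worst}}$ outputs an additive $2^{-\Theta(n)}$-approximation to $D_{H_{\calS}}=|\bra{+^n}e^{-iH_{\calS}\tau}\ket{+^n}|^2$ with probability $1-O(\delta^{1/4})$. The backbone is polynomial interpolation: by \Cref{thm:polyapproxguarantees}, the degree-$2m$ Taylor polynomial $\calT_{2m}(\g)$ in the coefficient vector $\g\in\R^l$ approximates $D(\g)=|\bra{+^n}e^{-iH(\g)\tau}\ket{+^n}|^2$ to error $2^{-\Theta(n)}$ once $m=\Theta(n)$ (using that $\|H_{\calS}\|=O(n)$ and that $\|H(\g)\|=O(\text{poly}(n))$ with overwhelming probability over $\g\sim\calN_l$, so all sampled instances are in the regime where the Taylor approximation is valid). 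Thus it suffices to recover $\calT_{2m}(\g_{\text{worst}})$ to additive error $2^{-\Theta(n)}$, where $\g_{\text{worst}}$ is the coefficient vector of $H_{\calS}$ (Ising terms plus the $\frac{\pi}{8\tau}$ single-$Z$ terms; note this lies in the support of $\calE(2)$ up to the overall normalization, which we absorb by rescaling).

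The heart of the argument is \emph{slicing and dicing the sphere}, reducing this $l$-variate interpolation to a polynomial number of univariate ones, each solvable by \Cref{thm:REBW}. First, sample a uniformly random 2-plane $P\subseteq\R^l$ through the origin and $\g_{\text{worst}}$; restricted to $P$, the polynomial $\calT_{2m}$ becomes bivariate, and we set coordinates so $\g_{\text{worst}}$ lies on the $z$-axis of $P$. A counting argument (\Cref{lemma:2}) guarantees that for most such planes $P$, feeding $\calA$ points drawn from the restriction of $\calN_l$ to $P$ — in particular along each circumference $C_i\subseteq P$ of radius $|\rr_i|$ — yields evaluations that are $2^{-\Theta(n\log n)}$-accurate for all but a small fraction of points; this is because $\calN_l$'s marginal on $P$, further decomposed into its radial and angular parts, is exactly what a random plane-and-radius draw produces, so $\calA$'s average-case guarantee transfers. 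For each target radius $|\rr_i|$ (we will pick $O(m)$ distinct radii), we sample $n'=\text{poly}(m)$ angles from the angular marginal on $C_i$, run $\calA$, reparametrize the circle so that the bivariate-in-$(\cos\theta,\sin\theta)$ restriction of $\calT_{2m}$ becomes a univariate polynomial of degree $O(m)$ in a single variable (e.g. via the tangent half-angle substitution, clearing denominators), and apply \Cref{thm:REBW} to obtain $y_i\approx\calT_{2m}(\rr_i)$ with error blown up by the $(10/\delta)^{2n'}$ factor. Finally, with the points $\{(\rr_i,y_i)\}_i$ all on the $z$-axis, a last invocation of \Cref{thm:REBW} interpolates the degree-$O(m)$ univariate polynomial along the $z$-axis and evaluates it at $\g_{\text{worst}}$.

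The main obstacle — and where care is needed — is \textbf{error propagation through the nested interpolations}. \Cref{thm:REBW} amplifies the input error by $(10/\delta_{\text{sep}})^{2n'}$ where $\delta_{\text{sep}}$ is the separation of the sample points; since we have two layers of Berlekamp-Welch, the final error is roughly $(10/\delta_{\text{sep}})^{4n'}\cdot 2^{-\Theta(n\log n)}$, and to keep this below $2^{-\Theta(n)}$ we must ensure $n'=\Theta(n)$ suffices while $\delta_{\text{sep}}$ stays at least $1/\text{poly}(n)$. The separation is controlled by choosing the sample points on each circumference/ray to be well-spread (e.g. deterministically spaced, or by rejection-sampling to enforce $\delta_{\text{sep}}\geq 1/\text{poly}(n)$ while staying statistically close to the target marginal, so the counting argument still applies). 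One also tracks the evaluator's failure probability across the $\text{poly}(n)$ calls and the random choices of $P$ and radii: a union bound plus Markov across the three independent random choices (plane, radii, angles) gives the $1-O(\delta^{1/4})$ success probability, exactly as the statement claims. The reparametrization of the circle must be done so that the degree does not blow up super-linearly and so that the $\delta$-separation of the original angular points is preserved (up to $\text{poly}$ factors) under the substitution; checking this is the one genuinely technical calculation. Everything else — the Taylor bound, the counting lemma, and the $\textsf{BPP}$ bookkeeping — is routine given the tools already assembled.
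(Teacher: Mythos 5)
Your overall architecture --- Taylor-approximate $D$ by a degree-$\Theta(n)$ polynomial in $\g$, slice to a random plane through $\g_{\text{worst}}$, dice into circumferences, reduce each circle to a univariate interpolation, chain two layers of robust Berlekamp--Welch, and account for failure probability by Markov plus union bounds --- is exactly the paper's proof (Algorithms~\ref{alg:1}--\ref{alg:2}, \Cref{thm:outer}). However, two steps you leave underspecified are precisely where the work lies. First, \Cref{thm:REBW} only guarantees that the recovered polynomial $q$ is close to the true polynomial \emph{at $n-2k$ of the sample points}; it says nothing about $q(L)$ at a target $L$ outside the sample set. You need a separate extrapolation bound --- the discrete Remez inequality (\Cref{lem:Remez}) and its modified version on $[-1,1]$ --- to transfer closeness at the $\Delta$-separated sample points to closeness at $x=1$ (the north pole) on each circumference and, more seriously, at $\|\g_{\text{worst}}\|_2=\Theta(\sqrt{n})$ on the $z$-axis, which lies far outside the interval $[1-1/\sqrt{l},\,1]$ where the sampled radii concentrate (typical draws from $\calN_l$ have norm $\approx 1$, while the worst-case Ising point has $\Theta(n)$ coefficients each of size $\Theta(1)$). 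This extrapolation contributes its own $2^{\Theta(n\log n)}$ blow-up, which your error budget omits; your parenthetical about absorbing the normalization ``by rescaling'' does not address it.

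Second, your three candidate mechanisms for enforcing $\delta$-separation are not interchangeable, and two of them would undermine the argument. Deterministically spaced points are not samples from the marginal, so $\calA$'s average-case guarantee tells you nothing about them; rejection sampling justified by ``staying statistically close to the target marginal'' reintroduces exactly the TVD-deformation argument this construction is designed to avoid. The paper's mechanism (\Cref{proc:delta_separated} with \Cref{lem:delta_separated}, \Cref{lem:delta_separated_andgood_c}, \Cref{lem:delta_separated_andgood_radii}) draws $O(l\log(1/\delta'))$ genuine i.i.d.\ samples from the exact marginal, partitions the relevant interval into $\Theta(l)$ bins of width $\Delta=\Theta(l^{-3/2})$, and keeps one sample per occupied bin; a bad-bin counting argument plus a Chernoff bound then certifies that at most a constant fraction of the retained points are corrupted, which is the hypothesis \Cref{thm:REBW} actually needs. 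Relatedly, the paper's circle reparametrization is $x=\cos\theta$ combined with symmetrization over $\theta\mapsto-\theta$ (averaging $\calA(\g(r,\theta))$ and $\calA(\g(r,-\theta))$ to cancel the part odd in $\sin\theta$), which keeps the degree at $m$ and places the target at $x=1$; the tangent half-angle substitution you suggest introduces the factor $(1+t^2)^m$, doubles the degree, and sends $\theta\to\pi$ to $t\to\infty$, so it requires additional restrictions that the $\cos\theta$ route avoids.
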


\Cref{thm:outer_2} says that access to an average-case solver (over $\calE(2)$) allows to estimate the probabilities of worst-case time-evolution over $\calH_{\text{worst}}$, a $\,\# \textsf{P}$-hard task according to \Cref{thm:outer_2}. We conclude that it must be hard to implement the assumed average-case solver $\calA$, or more rigorously:
\begin{theorem}[Average-case \(\#\)P-hardness over \(\mathcal{E}(2)\)]
    It is $\# \textsf{P}$-hard to, upon input $H\sim \calE(2)$, output $\hat{D}_H$ such that 
\begin{equation}
    \Pr_{H\sim \calE(2)}[|\hat{D}_H- |\bra{+^n} e^{-iH\tau} \ket{+^{n}}|^2|\leq 2^{-\Theta(n\log n)}] \geq 1-\delta.
\end{equation}
\end{theorem}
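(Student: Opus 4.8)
This theorem is an almost-immediate corollary of the two results that precede it: the worst-case $\#\textsf{P}$-hardness of $\calH_{\text{worst}}$ (Theorem~\ref{thm:worst}) and the worst-to-average-case reduction of Theorem~\ref{thm:outer_2}. The plan is simply to compose the two reductions and do the bookkeeping needed to turn the composition into a bona fide $\textsf{BPP}$ reduction from a $\#\textsf{P}$-complete problem to the average-case task.

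Concretely, I would suppose we are handed a polynomial-time algorithm $\calA$ that solves the average-case task in the statement for some constant $\delta$: on input $H\sim\calE(2)$ it outputs $\hat D_H$ with $|\hat D_H-|\bra{+^n}e^{-iH\tau}\ket{+^n}|^2|\le 2^{-\Theta(n\log n)}$ with probability at least $1-\delta$ over $H$. Feeding $\calA$ into Theorem~\ref{thm:outer_2} produces a $\textsf{BPP}$ procedure with oracle access to $\calA$ that, for \emph{any} $H_\calS\in\calH_{\text{worst}}$, outputs $\hat p$ with $|\hat p-|\bra{+^n}e^{-iH_\calS\tau}\ket{+^n}|^2|\le 2^{-\Theta(n)}$ with probability at least $1-O(\delta^{1/4})$ over its internal randomness. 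Note the additive precision $2^{-\Theta(n)}$ coming out of Theorem~\ref{thm:outer_2} matches the precision threshold at which Theorem~\ref{thm:worst} asserts worst-case $\#\textsf{P}$-hardness, so these two statements chain together cleanly.

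Next I would fix $\delta$ to be a small enough constant that $O(\delta^{1/4})<1/3$, so the worst-case procedure succeeds with probability bounded away from $1/2$. Since the $\#\textsf{P}$-hardness of Theorem~\ref{thm:worst} (inherited from \cite{BermejoVega18}) operates by extracting an exact integer count from a sufficiently fine additive estimate of $D_{H_\calS}$, I would then amplify in the standard way for a $\#\textsf{P}$ function with a unique correct value: run the worst-case procedure $O(n)$ times and return the median of the estimates. A Chernoff bound gives that the median lies within $2^{-\Theta(n)}$ of $D_{H_\calS}$ except with probability $2^{-\Omega(n)}$, which is ample precision and confidence for the count-extraction step. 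Composing (i) the randomized polynomial-time reduction from a $\#\textsf{P}$-complete problem to worst-case evaluation of $D_{H_\calS}$ (Theorem~\ref{thm:worst}) with (ii) the amplified reduction from worst-case evaluation to the average-case task over $\calE(2)$ just described, exhibits a $\textsf{BPP}$ reduction from a $\#\textsf{P}$-complete problem to the average-case task — which is exactly the claimed $\#\textsf{P}$-hardness. Equivalently, a polynomial-time $\calA$ would place $\#\textsf{P}\subseteq\textsf{BPP}$, collapsing $\textsf{PH}$ by Toda's theorem.

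I do not expect a genuine obstacle here; all the technical weight sits in Theorems~\ref{thm:worst} and~\ref{thm:outer_2}, which may be assumed. The only points warranting care are the precision matching just noted (both precisions are $2^{-\Theta(n)}$, and constants in the $\Theta(\cdot)$ can be absorbed for $n$ large), the validity of the median-amplification step (legitimate precisely because the target is a unique-valued $\#\textsf{P}$ quantity, not an arbitrary search problem), and the quantifier on $\delta$: the statement should be read as holding for all sufficiently small constant $\delta$, inheriting its admissible range from Theorem~\ref{thm:outer_2}.
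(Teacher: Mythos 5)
Your proposal is correct and follows essentially the same route as the paper, which states this theorem as an immediate consequence of composing the worst-case hardness of $\calH_{\text{worst}}$ (Theorem~\ref{thm:worst}) with the worst-to-average-case reduction (Theorem~\ref{thm:outer_2}) and does not even write out a separate proof. Your added bookkeeping (precision matching, median amplification, the quantifier on $\delta$) is standard and consistent with what the paper implicitly relies on.
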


In fact, we will need an even stronger average-case hardness statement as part of our hiding workaround, where we consider choosing not only a random time evolution, but also a random input state from an ensemble of product states we now define:
\begin{equation}
    \calE_{\psi}:=\{\ket{\psi_y}:=Z^{\bf y}\ket{+^n}\}_{{\bf y}\in \{0,1\}^n},
\end{equation}
so that $\ket{\psi_{\bf 0}}=\ket{+^n}.$ For every $H_{\calS}\in \calH_{\text{worst}}$, given a particular input state parametrized by $y$, we can construct a `shifted' Hamiltonian $H_{\calS,y}$ such that
\begin{equation}\label{eq:equivalence}
    |\bra{+^n} e^{-i\tau H_{\calS,y}}\ket{\psi_y}|^2 = |\bra{+^n} e^{-i\tau H_{\calS}}\ket{\psi_0}|^2; 
\end{equation}
one can check that 
\begin{equation}
    H_{\calS,y} = H_{\calS} +\frac{\pi}{2\tau}\sum_{k=1}^{n}y_k Z_k
\end{equation}
works. So for every input state $\ket{\psi_y}$ we may associate a corresponding worst-case `shifted' ensemble,
\begin{equation}\label{eq:H_y}
    \calH_{\text{worst}, y} := \{H_{\calS,y}:\,H_{\calS}\in \calH_{\text{worst}}\}.
\end{equation}
For each $y$, the corresponding shifted ensemble has the following property:
\begin{corollary}[Worst-case hardness of shifted ensembles \label{corr:worst_shifted}]
Fix any \(y\in\{0,1\}^n\). For \(\tau=O(1)\), approximating
\(
D_{H_{\mathcal{S}},y}:=\left|\!\bra{+^n}e^{-iH_{\mathcal{S},y}\tau}\ket{\psi_y}\!\right|^2
\)
to additive error \(2^{-\Theta(n)}\) is \(\#\)P-hard in the worst case over \(H_{\mathcal{S},y}\in\calH_{\text{worst}, y}\).
\end{corollary}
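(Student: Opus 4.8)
The plan is to derive \Cref{corr:worst_shifted} directly from \Cref{thm:worst} by exhibiting, for each fixed $y$, an equality between the ``shifted'' output probability and an ``unshifted'' one from the worst-case family. The key observation is the algebraic identity already flagged in \Cref{eq:equivalence}: for $\ket{\psi_y} = Z^{\mathbf y}\ket{+^n}$, we have
\[
\bra{+^n} e^{-i\tau H_{\calS,y}}\ket{\psi_y} \;=\; \bra{+^n} e^{-i\tau H_{\calS,y}} Z^{\mathbf y}\ket{+^n},
\]
and since $H_{\calS,y}$ is a purely $Z$-type (Ising-plus-field) Hamiltonian, it commutes with $Z^{\mathbf y}$, so the $Z^{\mathbf y}$ can be pushed through the exponential. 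The point of choosing $H_{\calS,y} = H_{\calS} + \frac{\pi}{2\tau}\sum_k y_k Z_k$ is precisely that conjugating (equivalently, multiplying, since everything commutes) by $Z^{\mathbf y}$ supplies phases $e^{-i\tau \cdot \frac{\pi}{2\tau} y_k} = e^{-i\pi y_k/2}$ that, together with the extra $Z^{\mathbf y}$ from the input state, collapse back to the original evolution $e^{-i\tau H_{\calS}}$ acting on $\ket{+^n}$. First I would verify this identity by a short direct computation in the $Z$-eigenbasis: write $\ket{+^n} = 2^{-n/2}\sum_{x} \ket{x}$, note $Z^{\mathbf y}\ket{x} = (-1)^{\mathbf y \cdot x}\ket{x}$ and $e^{-i\tau H_{\calS,y}}\ket{x} = e^{-i\tau \lambda_{\calS}(x)} e^{-i\frac{\pi}{2}\mathbf y\cdot x}\ket{x}$ where $\lambda_{\calS}(x)$ is the eigenvalue of $H_{\calS}$ on $\ket{x}$, and check that the $\mathbf y$-dependent phases cancel against $(-1)^{\mathbf y\cdot x}$ when we also account for the $Z^{\mathbf y}$ coming from $\ket{\psi_y}$ — more carefully, one tracks $e^{-i\frac\pi2 \mathbf y \cdot x}\cdot e^{-i\frac\pi2 \mathbf y\cdot x}$ against the input phase, and the arithmetic works out mod $2\pi$ because $\mathbf y \cdot x$ is an integer. (The cleanest bookkeeping notes that $Z^{\mathbf y}$ applied twice — once as the shift in the exponent after pushing through, once from the input state — gives the identity up to the scalar phases, which then cancel.)

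Once \Cref{eq:equivalence} is established as an exact equality of the two output probabilities, the corollary is immediate: for any fixed $y$, an algorithm that approximates $D_{H_\calS,y} = |\bra{+^n} e^{-i\tau H_{\calS,y}}\ket{\psi_y}|^2$ to additive error $2^{-\Theta(n)}$ in the worst case over $H_{\calS,y}\in \calH_{\text{worst},y}$ can be composed with the trivial bijection $H_{\calS}\mapsto H_{\calS,y}$ (which is efficiently computable — just add a fixed $Z$-field determined by $y$) to obtain an algorithm approximating $D_{H_\calS} = |\bra{+^n} e^{-i\tau H_{\calS}}\ket{+^n}|^2$ to the same additive error in the worst case over $H_{\calS}\in\calH_{\text{worst}}$. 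By \Cref{thm:worst} the latter task is $\#\textsf{P}$-hard, hence so is the former. Note the map is a bijection between $\calH_{\text{worst}}$ and $\calH_{\text{worst},y}$ by the very definition in \Cref{eq:H_y}, so ``worst case over $\calH_{\text{worst},y}$'' translates exactly to ``worst case over $\calH_{\text{worst}}$'' with no loss.

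I do not anticipate a serious obstacle here; this is essentially a change-of-variables argument and the only thing to be careful about is the phase bookkeeping modulo $2\pi$ — in particular making sure the coefficient $\frac{\pi}{2\tau}$ (rather than, say, $\frac{\pi}{4\tau}$ as appears in \Cref{proc:worstH} for a different purpose) is the right one so that the twice-applied $Z$-phase is trivial. One should also double-check that $H_{\calS,y}$ still has $O(1)$-bounded coefficients (it does, since we add at most $\frac{\pi}{2\tau} = O(1)$ to each single-qubit $Z$ term) and is still geometrically $2$-local on the same lattice, so that it is a legitimate member of the shifted worst-case family and the downstream use of this corollary — feeding it into the hiding workaround where both Hamiltonian and input are randomized — goes through unchanged. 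If one wanted to be fully rigorous about the phase cancellation, the one-line version is: since $[H_{\calS}, Z^{\mathbf y}]=0$ and $Z^{\mathbf y} \cdot Z^{\mathbf y} = I$, we have $e^{-i\tau H_{\calS,y}} Z^{\mathbf y} = e^{-i\tau H_\calS} e^{-i\frac\pi2 \sum_k y_k Z_k} Z^{\mathbf y}$, and $e^{-i\frac\pi2 \sum_k y_k Z_k} = \prod_{k: y_k=1} e^{-i\frac\pi2 Z_k} = \prod_{k:y_k=1}(-i Z_k) = (-i)^{|\mathbf y|} Z^{\mathbf y}$, so the product becomes $(-i)^{|\mathbf y|} e^{-i\tau H_\calS}$, a global phase that vanishes upon taking $|\cdot|^2$.
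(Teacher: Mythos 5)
Your proposal is correct and takes essentially the same route as the paper: the paper's proof of this corollary is exactly the one-line observation that it follows from \Cref{eq:equivalence} together with \Cref{thm:worst}, and your "one-line version" at the end (using $[H_{\calS},Z^{\mathbf y}]=0$ and $e^{-i\frac{\pi}{2}Z_k}=-iZ_k$ to reduce $e^{-i\tau H_{\calS,y}}Z^{\mathbf y}$ to $(-i)^{|\mathbf y|}e^{-i\tau H_{\calS}}$) is precisely the verification of \Cref{eq:equivalence} that the paper leaves to the reader. The bijection argument transferring worst-case hardness from $\calH_{\text{worst}}$ to $\calH_{\text{worst},y}$ is the intended (implicit) content of the paper's citation of \Cref{thm:worst}.
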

\begin{proof}
    This follows from \Cref{eq:equivalence} and \Cref{thm:worst}.
\end{proof}
Now we notice that \Cref{thm:outer_2} is agnostic to the choice of input state in $\calE_{\psi}.$ More precisely, 
\begin{theorem}[Worst-to-average-case reduction with arbitrary inputs]\label{thm:outer_3}
Fix $y\in \{0,1\}^n$ (which fixes the input state). Assume access to an average-case solver $\calA$ that, upon input $H\sim \calE(2)$, outputs $\hat{D}_H$ such that 
\begin{equation}
    \Pr_{H\sim \calE(2)}[|\hat{D}_H- |\bra{+^n} e^{-iH\tau} \ket{\psi_y}|^2\leq 2^{-\Theta(n\log n)}] \geq 1-\delta.
\end{equation}
Then for any $H_{\calS,y}\in \calH_{\text{worst}, y}$, with probability at least $1-O(\delta^{1/4})$, we can in poly$(n)$ time output $\hat{p}$ satisfying
    \begin{equation}
        |\hat{p}- D_{H_{\calS,y}}| \leq 2^{-\Theta(n)}.
    \end{equation}
\end{theorem}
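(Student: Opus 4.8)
The plan is to instantiate the general worst-to-average-case reduction (\Cref{thm:mainthm_informal}, proved in full as \Cref{thm:outer} in \Cref{sec:worst-av}) with the boundary states set to $\ket{\psi_{in}}=\ket{\psi_y}=Z^y\ket{+^n}$ and $\ket{\psi_{out}}=\ket{+^n}$, taking as worst-case target the shifted Hamiltonian $H_{\calS,y}$. The key observation is that that reduction never uses a property of the boundary states beyond their being fixed: the objects it manipulates are the Taylor polynomial $\calT_{2m}(\g)$ in the Hamiltonian coefficients and various marginals of $\calN_l$, both of which are insensitive to which fixed states bookend the amplitude. So the only hypotheses I must discharge are (a) the worst-case coefficient vector lies in the support of $\calE(2)$, and (b) the polynomial-approximation degree and the error bookkeeping survive the $Z$-shift.

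First I would fix the target. Given $H_{\calS,y}\in\calH_{\text{worst},y}$, let $\g_{\text{worst}}\in\R^l$ be the coefficient vector with $H(\g_{\text{worst}})=H_{\calS,y}$; since $H_{\calS,y}=H_{\calS}+\tfrac{\pi}{2\tau}\sum_k y_k Z_k$ is geometrically $2$-local, such a $\g_{\text{worst}}$ exists, and since $\calN_l$ is a full-support Gaussian it lies in the support of $\calE(2)$. Its operator norm is $\|H_{\calS,y}\|=O(n)$: it is a sum of $O(n)$ Pauli terms with $O(1)$ coefficients (the $J_{ij},h_i=O(1)$, and each shift coefficient is $\tfrac{\pi}{2\tau}=O(1)$). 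Hence, by \Cref{thm:polyapproxguarantees} with $t=\tau=O(1)$, the degree-$2m$ polynomial $\calT_{2m}$ built from $\ket{\psi_{in}}=\ket{\psi_y}$, $\ket{\psi_{out}}=\ket{+^n}$ already approximates $D(\g)=|\bra{+^n}e^{-iH(\g)\tau}\ket{\psi_y}|^2$ to additive $2^{-\Theta(n)}$ at $\g_{\text{worst}}$ for $m=\Theta(n)$; the same holds with overwhelming probability at a random $\g\sim\calN_l$, where $\|H(\g)\|$ concentrates at $O(\sqrt n)\ll n$ by matrix-Gaussian-series concentration. And by \Cref{corr:worst_shifted}, evaluating $D(\g_{\text{worst}})=D_{H_{\calS,y}}$ to additive $2^{-\Theta(n)}$ is exactly the $\#\textsf{P}$-hard quantity we wish to reach.

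With (a) and (b) in hand I would run \Cref{thm:outer} verbatim: slice a random plane through $\g_{\text{worst}}$ and the origin; sample the radial and angular marginals of $\calN_l$ on that plane; query $\calA$ at the corresponding Hamiltonians; and carry out the two nested univariate robust Berlekamp-Welch interpolations of \Cref{thm:REBW} (one on each circumference to recover a value at each radial node, one along the $z$-axis to reach $\g_{\text{worst}}$). Each interpolation inflates additive error by a $\poly(n)^{\Theta(n)}=2^{\Theta(n\log n)}$ factor -- precisely why $\calA$ is assumed accurate to $2^{-\Theta(n\log n)}$ -- so with the hidden constants tuned, the interpolated value lands within $2^{-\Theta(n)}$ of $\calT_{2m}(\g_{\text{worst}})$, and adding the polynomial-approximation error gives $|\hat p-D_{H_{\calS,y}}|\le 2^{-\Theta(n)}$. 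For the success probability, the counting argument of \Cref{lemma:2} turns $\calA$'s per-instance failure rate $\delta$ into a small failure rate on a random slice and on a random circumference; a union bound over the $\poly(n)$ interpolation nodes at both levels keeps the corrupted fraction below the Berlekamp-Welch threshold and yields total failure probability $O(\delta^{1/4})$. The whole procedure is classical $\poly(n)$ time with oracle calls to $\calA$.

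The one genuinely new thing to verify -- and the step I expect to need the most care -- is that the proof of \Cref{thm:outer} really is state-agnostic. Two spots to audit: (i) that the imported worst-case hardness is for $D_{H_{\calS,y}}=|\bra{+^n}e^{-iH_{\calS,y}\tau}\ket{\psi_y}|^2$ and not the unshifted quantity -- which \Cref{corr:worst_shifted} guarantees via the identity $H_{\calS,y}=H_{\calS}+\tfrac{\pi}{2\tau}\sum_k y_k Z_k$; and (ii) that the norm bound fixing the polynomial degree survives the $Z$-shift, which the computation above confirms. Everything else in the reduction is a statement about polynomials in $\g$ and about marginals of the spherically symmetric $\calN_l$, neither of which sees $y$ at all; hence \Cref{thm:outer_3} follows.
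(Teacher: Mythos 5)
Your proposal is correct and mirrors the paper's argument: the paper itself proves \Cref{thm:outer_3} only by noting that \Cref{thm:outer_2} (equivalently, \Cref{thm:outer}) "is agnostic to the choice of input state in $\calE_{\psi}$," and your checks (a) and (b) — that $H_{\calS,y}$ is geometrically $2$-local with norm $O(n)$ so its coefficient vector lies in the support of $\calE(2)$ at radius $\Theta(\sqrt n)$, and that the Taylor-polynomial machinery of \Cref{thm:polyapproxguarantees} depends on $\ket{\psi_{in}},\ket{\psi_{out}}$ only through being fixed — are precisely what that remark compresses. One small simplification: you don't need matrix-Gaussian concentration to control $\|H(\g)\|$ at the interpolation nodes, since \Cref{lem:bins_rad} already truncates the radial samples to $r\le 1$, giving the deterministic bound $\|H(\g)\|\le\sqrt{l}\,\|\g\|_2\le\sqrt{l}$.
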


Combining Corollary~\ref{corr:worst_shifted} and Theorem~\ref{thm:outer_3} yields average-case \(\#\)P-hardness when both the Hamiltonian and the input are randomized:

\begin{theorem}[Average-case (over $H\sim\calE(2), \, \ket{\psi_y}\sim\calE_{\psi}$) evaluation of probabilities is $\# \mathsf{P}$-hard]\label{thm:avcasesP}
    It is $\# \textsf{P}$-hard to, upon input $H\sim \calE(2)$ and $y\sim \text{Unif}\{0,1\}^n$, output $\hat{D}_H$ such that 
\begin{equation}
    \Pr_{H\sim \calE(2),y\sim \text{Unif}\{0,1\}^n}[|\hat{D}_H- |\bra{+^n} e^{-iH\tau} \ket{\psi_y}|^2\leq 2^{-\Theta(n\log n)}] \geq 1-\delta.
\end{equation}
\end{theorem}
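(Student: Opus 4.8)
The plan is to combine the two ingredients the paper has already assembled into a single average-case hardness claim by a simple union-bound-over-$y$ argument. The target statement says that \emph{jointly} randomizing $H\sim\calE(2)$ and $y\sim\text{Unif}\{0,1\}^n$ makes probability evaluation $\#\textsf{P}$-hard. The natural route is: assume for contradiction that there is a polynomial-time algorithm $\calB$ succeeding with probability $\ge 1-\delta$ over the joint distribution; show this implies, for a noticeable fraction of fixed $y$'s, an algorithm $\calA_y$ succeeding with probability $\ge 1-\sqrt{\delta}$ (say) over $H\sim\calE(2)$ alone with that $y$ held fixed; then feed each such $\calA_y$ into \Cref{thm:outer_3} to solve worst-case instances in $\calH_{\text{worst},y}$, which by \Cref{corr:worst_shifted} is $\#\textsf{P}$-hard. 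Since this works for at least one $y$ (in fact most), we get a $\textsf{BPP}$ reduction from a $\#\textsf{P}$-hard problem to $\calB$, forcing $\calB$ itself to be $\#\textsf{P}$-hard.

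More concretely, I would carry out the following steps. First, by an averaging (Markov) argument on the failure probability: if $\Pr_{H,y}[\text{fail}]\le\delta$, then $\Pr_{y}\big[\Pr_{H}[\text{fail}\mid y] > \sqrt{\delta}\big] \le \sqrt{\delta}$, so for at least a $1-\sqrt{\delta}$ fraction of $y\in\{0,1\}^n$ the conditional algorithm $\calA_y(\cdot):=\calB(\cdot,y)$ is an average-case $(\,\cdot\,,\sqrt{\delta})$-evaluator over $\calE(2)$ with input state $\ket{\psi_y}$ and additive precision $2^{-\Theta(n\log n)}$. Second, for any such "good" $y$, apply \Cref{thm:outer_3} verbatim: $\calA_y$ lets us approximate $D_{H_{\calS,y}}$ to additive error $2^{-\Theta(n)}$ for every $H_{\calS,y}\in\calH_{\text{worst},y}$, with probability $1-O(\delta^{1/8})$ over the internal randomness of the reduction. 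Third, invoke \Cref{corr:worst_shifted}: this task is $\#\textsf{P}$-hard in the worst case over $H_{\calS,y}\in\calH_{\text{worst},y}$. Chaining these, a hypothetical efficient $\calB$ would yield a $\textsf{BPP}$ algorithm for a $\#\textsf{P}$-hard problem, establishing that computing the joint-average-case probabilities is itself $\#\textsf{P}$-hard under $\textsf{BPP}$ reductions.

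One subtlety to handle carefully is that the reduction in \Cref{thm:outer_3} needs the algorithm $\calA_y$ to be given (or to be able to identify) a \emph{specific} $y$ on which it is good; but since a $1-\sqrt\delta$ fraction of $y$'s are good, we may simply fix any particular $y$ — e.g. $y=0^n$ — a priori, provided $\delta<1$ is small enough that $y=0^n$ is among the good ones, or more robustly run the reduction for a random good $y$ and note that we only need \emph{one} to succeed. A second point is bookkeeping of the success probabilities: the $\sqrt{\delta}$ loss from the Markov step composes with the $O(\delta^{1/4})$ loss already inside \Cref{thm:outer_3}, giving an overall $O(\delta^{1/8})$-type failure probability, which is still $o(1)$ for $\delta$ a sufficiently small constant (or inverse polynomial) — this is what "$\#\textsf{P}$-hard under $\textsf{BPP}$ reductions" tolerates.

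I do not expect a genuine obstacle here: the theorem is essentially a corollary, and the only "work" is the $\Pr_y[\cdot]$ averaging and the observation that \Cref{thm:outer_3} is uniform in $y$ (which the paper has already flagged: "\Cref{thm:outer_2} is agnostic to the choice of input state in $\calE_\psi$"). The conceptually delicate part — engineering a \emph{hiding}-like invariance without an actual hiding symmetry — has been pushed into the construction of $\calH_{\text{worst},y}$ and the conjugation identity $Z^yHZ^y\stackrel{d}{=}H$; by the time we reach this statement, that difficulty has been discharged, and what remains is the routine assembly described above. If anything, the one place requiring a sentence of care is making sure the precision parameters match: the evaluator is assumed accurate to $2^{-\Theta(n\log n)}$ (the slack needed to absorb the $(10/\delta)^{2n}$ blow-up from \Cref{thm:REBW} inside the interpolation), and the worst-case hardness from \Cref{corr:worst_shifted} only needs $2^{-\Theta(n)}$, so the implication goes through with room to spare.
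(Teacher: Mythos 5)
Your proposal is correct and takes essentially the same route as the paper's proof: condition on a good $y$, invoke Theorem~\ref{thm:outer_3}, and contradict Corollary~\ref{corr:worst_shifted}. The paper's averaging step is even simpler than your Markov argument --- it just observes that a joint success probability $\ge 1-\delta$ implies some single $y'$ has conditional success $\ge 1-\delta$ over $H$, avoiding the $\sqrt{\delta}$ degradation --- but your version makes the $\textsf{BPP}$ reduction more explicit by showing that a uniformly random $y$ is good with high probability, which addresses the ``how do we find the good $y'$'' subtlety that the paper's one-paragraph proof leaves implicit.
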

\begin{proof}
    Suppose to the contrary there were an algorithm $\mathcal{A}$ that evaluates $|\bra{+^n} e^{-iH\tau} \ket{\psi_y}|^2$ with probability at least $1-\delta$ over $H\sim \mathcal{E}(2)$, $y \sim \text{Unif}(\{0,1\}^n)$. There must exist at least one $y'\in \{0,1\}^n$ for which $\mathcal{A}$ succeeds in evaluating $|\bra{+^n} e^{-iH\tau} \ket{\psi_{y'}}|^2$ with probability at least $1-\delta$ over $H\sim \mathcal{E}(2)$ while the input state is held fixed at $\ket{\psi_{y'}}$. But by \Cref{thm:outer_3}, for any $H_{\calS, y'} \in \calH_{\text{worst},y'}$, with probability at least $1-O(\delta^{1/4})$, we can in $\poly(n)$ time approximate $D_{H_{\calS,y'}}$ to precision $2^{-\Theta(n)}$. But this contradicts \Cref{corr:worst_shifted}.
\end{proof}

\Cref{thm:avcasesP} gives strong evidence in support of our main conjecture which concerns evaluation of probabilities up to a smaller additive precision of $2^{\Theta(n)}$. This value of additive precision is necessary in order to connect up with \Cref{thm:samplerimpliesevaluator}.

\begin{conjecture}[Average-case (over $H\sim\calE(2), \, \ket{\psi_y}\sim\calE_{\psi}$) evaluation of probabilities up to precision $2^{-\Theta(n)}$ is $\# \mathsf{P}$-hard]\label{conj:avcasesP}
    It is $\# \textsf{P}$-hard to, upon input $H\sim \calE(2)$ and $y\sim \text{Unif}\{0,1\}^n$, output $\hat{D}_H$ such that 
\begin{equation}
    \Pr_{H\sim \calE(2),y\sim \text{Unif}\{0,1\}^n}[|\hat{D}_H- |\bra{+^n} e^{-iH\tau} \ket{\psi_y}|^2\leq 2^{-\Theta(n)}] \geq 1-\delta.
\end{equation}
\end{conjecture}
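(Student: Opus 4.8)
\noindent\emph{Towards a proof of \Cref{conj:avcasesP}.} The plan is to rerun the pipeline behind \Cref{thm:avcasesP} -- worst-case $\#\textsf{P}$-hardness of the shifted ensembles $\calH_{\text{worst},y}$ (\Cref{corr:worst_shifted}) composed with the worst-to-average-case reduction (\Cref{thm:outer_3}) -- but assuming only a solver $\calA$ of additive accuracy $2^{-\Theta(n)}$ in place of $2^{-\Theta(n\log n)}$. Exactly as in the proof of \Cref{thm:avcasesP}, a solver succeeding with probability $1-\delta$ over the joint draw $(H\sim\calE(2),\,y\sim\text{Unif}\{0,1\}^n)$ must succeed over $H\sim\calE(2)$ for some fixed $y'$, whereupon ``slicing and dicing the sphere'' is run with input state $\ket{\psi_{y'}}$. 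The whole question is whether that reduction can tolerate a solver this imprecise and still recover worst-case probabilities to additive error $2^{-\Theta(n)}$, contradicting \Cref{corr:worst_shifted}.

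The obstruction is a single quantitative budget inside the interpolation. The surrogate polynomial $\calT_{2m}$ must be accurate both at the Gaussian interpolation nodes -- where a standard matrix-Gaussian concentration bound gives $\lVert H(\g)\rVert = O(\sqrt n)$ with overwhelming probability -- \emph{and} at $\g_{\text{worst}}$, where the worst-case Ising Hamiltonian $H_{\calS}$ has extensive norm $\lVert H_{\calS}\rVert = \Theta(n)$, being a sum of $\Theta(n)$ constant-strength terms with no cancellation guaranteed. By \Cref{thm:polyapproxguarantees} this pins the surrogate degree at $\Theta(n)$ (the truncation error only starts to decay once the order exceeds $\Theta(\lVert H_{\calS}\rVert\tau) = \Theta(n)$), hence forces $\Theta(n)$ interpolation nodes in each univariate slice handed to \Cref{thm:REBW}. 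Those nodes are $\Theta(n)$ samples from a radial or angular marginal of $\calN_l$, whose mass lies in a ball of radius $O(1)$; packing $\Theta(n)$ points into a bounded set leaves their separation at $\delta_{\mathrm{sep}} = 1/\poly(n)$ with high probability, so the amplification $(10/\delta_{\mathrm{sep}})^{\Theta(n)} = 2^{\Theta(n\log n)}$ from \Cref{thm:REBW} is exactly what drags $\calA$'s required precision down to $2^{-\Theta(n\log n)}$ in order to land the final estimate at $2^{-\Theta(n)}$. The surplus $\log n$ is precisely the gap between \Cref{thm:avcasesP} and this conjecture, mirroring the analogous open precision gap in the random-circuit-sampling literature.

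Closing the conjecture therefore amounts to removing that $\log n$, and I see three routes. (i) \emph{A lower-degree surrogate}: a worst-case-hard family of constant-time geometrically-$2$-local evolutions with merely \emph{sublinear} norm $\lVert H_{\calS}\rVert\tau = n^{1-\Omega(1)}$ would already pull the Taylor degree down to $\Theta(n/\log n)$ and hence the amplification to $2^{\Theta(n)}$; the obstruction is that the known hard constructions (the IQP/MBQC encoding underlying \Cref{proc:worstH}) are built from $\Theta(n)$ bounded-strength terms with worst-case sign structure, which pins $\lVert H_{\calS}\rVert = \Theta(n)$ at $\tau = O(1)$, and it is unclear whether a sparser or more cancellation-friendly family can be simultaneously geometrically local and $\#\textsf{P}$-hard to additive precision $2^{-\Theta(n)}$. (ii) \emph{Better-conditioned nodes}: $\Omega(1)$-separated nodes would also give amplification $2^{\Theta(n)}$, but $\Theta(n)$ such nodes cannot fit inside the $O(1)$-radius support of $\calN_l$, and the reparametrizations that spread them out (e.g.\ a Weierstrass substitution on a circumference) only re-introduce heavy tails; one would instead have to argue that $\calA$ still succeeds on a $1-O(\delta)$ fraction of some structured, well-separated node set, which does not follow from a guarantee stated over the exact Gaussian marginal alone. (iii) \emph{A genuinely better robust interpolator}: a routine recovering a degree-$\Theta(n)$ polynomial from $1/\poly(n)$-separated, partially corrupted, $\eps$-errored evaluations with error only $2^{O(n)}\cdot\eps$ rather than $2^{O(n\log n)}\cdot\eps$. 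I expect (iii) -- ideally combined with a self-correction step that lets one detach $\calA$'s success region from the exact Gaussian, i.e.\ a genuine smoothed-analysis statement -- to be the real crux, and it is this missing ingredient that makes the result a conjecture rather than a theorem. (Anticoncentration, discussed earlier, does not obviously help here: it would trade additive precision for relative precision, but does not by itself shrink $n\log n$ to $n$.)
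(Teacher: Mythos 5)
The statement you were handed is labelled a \emph{conjecture} in the paper, and the paper deliberately does not prove it: the text proves only the $2^{-\Theta(n\log n)}$-precision version (Theorem~\ref{thm:avcasesP}) and then states \Cref{conj:avcasesP}, with the $\log n$-narrower precision $2^{-\Theta(n)}$, as an open target whose necessity comes from matching the $\eps/2^n$ accuracy in Theorem~\ref{thm:samplerimpliesevaluator}. You recognize this correctly, decline to manufacture a proof, and instead give a diagnosis of the gap. That diagnosis matches the internal arithmetic of the paper's own worst-to-average-case reduction: in the proof of Theorem~\ref{thm:outer}, the surrogate degree is $m = \Theta(n)$, the node separation is $\Delta = \Theta(l^{-3/2}) = \Theta(n^{-3/2})$ (bins of width $\Delta$ packed into the $\chi_l$-shell of width $O(n^{-1/2})$), and the robust Berlekamp--Welch (Theorem~\ref{thm:REBW}) plus Remez amplification produce a $(10/\Delta)^{\Theta(n)} = 2^{\Theta(n\log n)}$ blowup, which is exactly the exponent that downgrades the final precision from the sampler's $\eps_{\calA}$ to $2^{\Theta(n\log n)}\eps_{\calA}$. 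Your observation that anticoncentration does not touch this $\log n$ is also consistent with \Cref{subsec:notsuffice} and the paper's discussion of anticoncentration.

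One technical point worth making precise about your route (i). You argue that a worst-case-hard family with $\lVert H_{\calS}\rVert\tau = n^{1-\Omega(1)}$ would pull the Taylor degree to $\Theta(n/\log n)$. The coarse form of Theorem~\ref{thm:polyapproxguarantees}, ``$m \geq \Theta(\lVert H\rVert t + \log(1/\eps))$,'' at first glance suggests $\log(1/\eps) = \Theta(n)$ alone would still pin $m = \Theta(n)$, contradicting your claim. But the sharper inequality in the paper's proof,
\begin{equation}
m+1 \;\ge\; \frac{\log(1/\eps) + \lVert H\rVert t}{\log\!\bigl(m/(e\lVert H\rVert t)\bigr)},
\end{equation}
does yield $m = \Theta(n/\log n)$ once $\lVert H\rVert t = n^{1-\Omega(1)}$, since the denominator becomes $\Theta(\log n)$ rather than $\Theta(1)$. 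So your route (i) is internally consistent, though you are right that it is in tension with the $\Theta(n)$ bounded-strength Ising terms that make the $\mathcal{H}_{\text{worst}}$ construction of Procedure~\ref{proc:worstH} geometrically local and $\#\mathsf{P}$-hard. Your routes (ii) and (iii) are likewise genuine obstructions, and your framing of (iii) (a robust interpolator with $2^{O(n)}$ rather than $2^{O(n\log n)}$ amplification for $1/\poly(n)$-separated nodes) as the real crux is a fair reading of where the paper's machinery tops out.
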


\subsection{Sampling implies average-case evaluation of output probabilities in $\textsf{BPP}^{\textsf{NP}^{\calS}}$}\label{sec:BPPNP}

This subsection proves that an algorithm that samples from the output of Hamiltonian time evolutions can be converted into an average-case evaluator in $\textsf{BPP}^{\textsf{NP}^{\calS}}$. This is the third step of the three-step pipeline outlined in the `Overview' subsection.

\begin{theorem}[Sampling \(\Rightarrow\) average-case evaluation\label{thm:samplerimpliesevaluator}]
The existence of a Hamiltonian sampler $\calS$ implies that average-case (over $H\sim \mathcal{E}(2)$, $\ket{\psi_y}\sim \mathcal{E}_{\psi}$) approximate evaluation of the probability of outputting $\ket{+^n}$ is in $\textsf{BPP}^{\textsf{NP}^{\calS}}$, i.e. there is an algorithm $\mathcal{A}$ in $\textsf{BPP}^{\textsf{NP}^{\calS}}$ such that
\begin{equation}
    \Pr_{H\sim \calE(2),y\sim \text{Unif}\{0,1\}^n}[|\mathcal{A}(H,y)-|\bra{+^n} e^{-iH\tau} \ket{\psi_y}|^2|\leq \eps/2^n] \geq 1-\gamma.
\end{equation}
\end{theorem}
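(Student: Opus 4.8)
The plan is to follow the now-standard Stockmeyer-based template \cite{AA11,BFNV18}, adapted to Hamiltonians via the joint hiding property. The overall chain of reasoning is: (i) a sampler $\calS$ plus Stockmeyer's approximate counting gives a $\textsf{BPP}^{\textsf{NP}^\calS}$ algorithm producing multiplicative estimates of the sampled probabilities $p_{H,y}(x)$ for a \emph{fixed} instance $(H,y)$ and most \emph{outcomes} $x$; (ii) a Markov/averaging argument turns multiplicative estimates of $D'_{H,y}$ (the sampler's distribution) into additive $O(\eps/2^n)$ estimates of the \emph{true} probabilities $p_{H,y}(x)$, still on average over $x$; (iii) the joint hiding property (\Cref{thm:hiding}) converts ``success over most outcomes $x$, for fixed $(H,y)$'' into ``success over most input-instance pairs $(H,y)$, for the fixed outcome $x=+^n$''. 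Chaining these yields exactly the claimed algorithm $\mathcal{A}\in\textsf{BPP}^{\textsf{NP}^\calS}$.

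\medskip
\textbf{Step-by-step.} First I would fix $(H,y)$ in the support and invoke Stockmeyer: since $D'_{H,y}(x)$ is a probability computable (as an average of accept-probabilities) by a $\textsf{BPP}^\calS$ machine, the approximate-counting theorem \cite{Stockmeyer85} gives a $\textsf{BPP}^{\textsf{NP}^\calS}$ procedure outputting $\tilde q(x)$ with $(1\pm 1/\poly(n))$ multiplicative accuracy for $D'_{H,y}(x)$, for every $x$. Second, relate $D'_{H,y}$ to the true $D_{H,y}$: by \Cref{def:sampler}, $\sum_x |D'_{H,y}(x)-p_{H,y}(x)|\le\varepsilon$, so by Markov's inequality, for a $1-O(\sqrt{\varepsilon})$ fraction of $x$ drawn from the appropriate reference distribution we have $|D'_{H,y}(x)-p_{H,y}(x)|\le O(\sqrt{\varepsilon}/2^n)$; combining with the multiplicative guarantee (and using that $p_{H,y}(x)=O(1/2^n)$ typically, so the multiplicative error is also $O(1/(2^n\poly n))$ in absolute terms) gives additive $O(\varepsilon'/2^n)$ estimates of $p_{H,y}(x)$ for most $x$. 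Third, apply joint hiding: for the target outcome $+^n$, the pair $(e^{-iH\tau},\ket{\psi_y})$ with $H\sim\calE(2)$, $y\sim\mathrm{Unif}\{0,1\}^n$ induces, via $Z^y e^{-iH\tau}=e^{-i(Z^yHZ^y)\tau}Z^y$ and $Z^yHZ^y\stackrel{d}{=}H$, the same distribution on $(p_{H',y'}(x))_{x}$ as a uniformly random outcome $x$ for fixed instance; so ``most $x$'' translates to ``most $(H,y)$''. This produces $\mathcal{A}(H,y)$ with $\Pr_{H,y}[|\mathcal{A}(H,y)-|\bra{+^n}e^{-iH\tau}\ket{\psi_y}|^2|\le\varepsilon/2^n]\ge 1-\gamma$, with $\gamma$ controlled by the hiding loss plus the Stockmeyer and Markov failure probabilities, all $O(\sqrt{\varepsilon})+1/\poly(n)$.

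\medskip
\textbf{Main obstacle.} The delicate step is the third one: making the joint hiding argument rigorous so that the reference distribution over outcomes $x$ against which Stockmeyer/Markov succeed (which is essentially uniform, or at least the sampler's own output distribution) maps \emph{exactly} onto the product distribution $H\sim\calE(2)$, $y\sim\mathrm{Unif}\{0,1\}^n$ with outcome pinned to $+^n$. Unlike the circuit case, where appending random $X$'s is a clean relabeling that leaves the ensemble invariant, here the randomization lives jointly on the Hamiltonian (via $Z$-conjugation) and on the input state (via $Z^y\ket{+^n}$), and one must check that the map $x\mapsto$ (conjugated instance, shifted input) is measure-preserving in the right sense and that no correlation between the Stockmeyer ``bad outcome'' set and the induced instance distribution is introduced. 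I would handle this by first proving the exact distributional identity $p_{H,y}(x)\stackrel{d}{=}p_{H,0}(+^n)$ jointly over $(H\sim\calE(2),y\sim\mathrm{Unif},x)$ with $x$ the measured outcome (this is \Cref{thm:hiding} / \Cref{thm:prfixedoutput}), and only then composing with the earlier steps; the composition is then a routine union bound over the three failure events.
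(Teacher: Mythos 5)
Your proposal follows essentially the same route as the paper: Stockmeyer plus Markov to get an average-over-outcomes evaluator (the paper's Lemma~\ref{lem:samplerimpliesevaluator}), then the joint hiding property $Z^y e^{-iH\tau}=e^{-i(Z^yHZ^y)\tau}Z^y$ with $Z^yHZ^y\stackrel{d}{=}H$ (Theorems~\ref{thm:hiding} and~\ref{thm:prfixedoutput}) to convert success over outcomes into success over random input-instance pairs. The decomposition, the key identities, and the identified obstacle all match the paper's argument, so the proposal is correct.
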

While Appendix A.4 of \cite{BFNV18} has a similar statement to our \Cref{thm:samplerimpliesevaluator} (Theorem 22), they took an average only over circuits, whereas our average-case is over both the input and the evolution. The difference arises random circuits enjoy a `hiding' property which random Hamiltonian evolutions do not. We elaborate on this in due course.

As an intermediate step, we observe that the existence of a worst-case classical sampler already implies that for any given Hamiltonian, the probability of a random outcome $y \sim \{0,1\}^n$ can be estimated with additive precision $\eps/2^n$. This immediately follows from similar arguments in the literature, for example Lemma 23 of \cite{BFNV18}. The proof uses Stockmeyer's reduction \cite{Stockmeyer85} to obtain a relative error approximation, and then uses Markov's inequality to translate that into the desired additive error approximation. 
\begin{lemma}[From worst-case sampling to average-case (over outcomes) evaluation\label{lem:samplerimpliesevaluator}]
    Given a worst-case (over 2-geolocal Hamiltonians) $\eps$-approximate Hamiltonian sampler $\mathcal{S}$ at fixed $\tau=O(1)$, there exists an average-case (over outcomes) approximate Hamiltonian evaluator in $\textsf{BPP}^{\textsf{NP}^{\calS}}$, that is, one that, upon input of the desired bitstring $x \in \{0,1\}^n$, outputs $\tilde{q}_{x}$ such that
    \begin{equation}\label{eq:lemma23}
        \Pr_{x\sim \{0,1\}^n}[|\tilde{q}_{x} - |\bra{+^{x}}e^{-iHt}\ket{\psi_{\beta}}|^2| \geq \eps/2^n]\leq \delta.
    \end{equation}
\end{lemma}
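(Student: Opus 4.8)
\textbf{Proof plan for Lemma~\ref{lem:samplerimpliesevaluator}.} The plan is to run the standard Stockmeyer-plus-Markov argument, following the template of Lemma~23 of \cite{BFNV18}, carefully adapted to the Hamiltonian setting. Fix a (2-geolocal) Hamiltonian $H$ and the input state $\ket{\psi_\beta}$, and let $\mathcal{S}$ be the assumed worst-case $\eps$-approximate sampler, which on input $(H,1^{1/\eps'})$ produces samples from a distribution $D'_H$ with $\|D'_H - D_H\|_1 \le \eps'$, where $D_H(x) = |\bra{+^x} e^{-iHt}\ket{\psi_\beta}|^2$. We will set $\eps'$ to be a suitable polynomially-small quantity at the end.

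First I would invoke Stockmeyer's approximate counting theorem \cite{Stockmeyer85}: since $\mathcal{S}$ is a probabilistic poly-time algorithm, the quantity $q'_x := \Pr[\mathcal{S}(H,1^{1/\eps'}) = x]$ can, using an $\textsf{NP}^{\calS}$ oracle, be estimated to within a multiplicative factor $(1 \pm 1/\poly(n))$ by a $\textsf{BPP}^{\textsf{NP}^{\calS}}$ machine; call the estimate $\tilde{q}_x$, so $|\tilde{q}_x - q'_x| \le q'_x/\poly(n)$. Next, a Markov-inequality argument over the random outcome $x \sim \{0,1\}^n$ controls the error $|q'_x - D_H(x)|$: since $\sum_x |q'_x - D_H(x)| = \|D'_H - D_H\|_1 \le \eps'$ and $x$ is drawn uniformly from a set of size $2^n$, the expected value of $2^n|q'_x - D_H(x)|$ is at most $\eps'$, so by Markov, $\Pr_x[|q'_x - D_H(x)| \ge \eps'/(2^n \delta')] \le \delta'$ for any $\delta'>0$. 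Finally, I would combine the two error sources: with probability at least $1-\delta'$ over $x$, we have $|\tilde{q}_x - D_H(x)| \le |\tilde{q}_x - q'_x| + |q'_x - D_H(x)| \le q'_x/\poly(n) + \eps'/(2^n\delta')$. The first term also needs to be folded in; using $q'_x \le D_H(x) + \eps'/(2^n\delta') \le 1/2^n + \eps'/(2^n\delta')$ on the good event (or a second Markov bound on $q'_x$ itself, since $\sum_x q'_x = 1$), this is $O(1/(2^n \poly(n)))$. Choosing $\eps' = \Theta(\eps\delta')$ and $\delta' = \delta$ makes the total at most $\eps/2^n$ with probability at least $1-\delta$, which is exactly \Cref{eq:lemma23}. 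The whole procedure runs in $\textsf{BPP}^{\textsf{NP}^{\calS}}$ since the only non-$\textsf{BPP}$ ingredient is the Stockmeyer estimation call.

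The only mild subtlety — and the step I would be most careful about — is bookkeeping the interaction between the multiplicative Stockmeyer error and the additive Markov error, so that both collapse to a single clean $\eps/2^n$ additive bound; this is why one applies Markov twice (once to $|q'_x - D_H(x)|$ and once implicitly to bound $q'_x$ itself), or equivalently works on the intersection of two good events and takes a union bound on their (small) failure probabilities. Everything else is a verbatim transcription of the RCS argument, since nothing here uses any structural property of circuits versus Hamiltonian evolutions — the sampler is treated as a black box and $D_H$ is just some efficiently-describable distribution on $n$-bit strings. I note that this lemma deliberately produces an evaluator that succeeds on average over \emph{outcomes} $x$ for a \emph{fixed} instance; converting this to success over \emph{instances} (needed to contradict \Cref{thm:avcasesP}) is precisely where the joint-hiding workaround of \Cref{thm:hiding} enters, and is handled separately in the proof of \Cref{thm:samplerimpliesevaluator}.
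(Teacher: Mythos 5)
Your proposal is correct and follows exactly the route the paper takes: the paper does not write out a full proof but explicitly defers to Lemma~23 of \cite{BFNV18}, describing it as ``Stockmeyer's reduction to obtain a relative error approximation, and then Markov's inequality to translate that into the desired additive error approximation'' — precisely your two-step argument. Your extra bookkeeping of the multiplicative-vs-additive error interaction and your closing remark that the conversion to success over \emph{instances} is deferred to the hiding workaround in \Cref{thm:prfixedoutput} are both consistent with how the paper structures this section.
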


To continue from this step, the proofs of average-case hardness for random circuit sampling crucially leverage what they call the ``hiding property" of random circuits, which states that the distribution over random circuits is invariant under appending any $\bigotimes_{i:y_i=1}X^{y_i}$ after the last layer. Noting that $X$ flips a computational basis state, this property allows them to convert success over a random outcome (given by \Cref{lem:samplerimpliesevaluator}) into success over a random circuit. 

Unfortunately, hiding fails for random Hamiltonians: for $y\sim \{0,1\}^n$, letting 
\[
Z^y := \bigotimes_{i:y_i=1} Z_i,
\]
for $H\sim \mathcal{E}(2)$, the random matrix $e^{-iHt}Z^y$ at $t=O(1)$ does not have the same distribution as $e^{-iHt}$. To remedy this, we will instead make the following observation: 

\begin{theorem}[Random 2-local Hamiltonians together with random input state satisfy hiding\label{thm:hiding}]
Given 
any $x\in \{0,1\}^n$, let 
\[
Z^x := \bigotimes_{i:x_i=1} Z_i.
\]
We have
\begin{equation}
    \Pr_{(x,H)\sim\{0,1\}^n\times \mathcal{E}(2)}[(x,H)] = \Pr_{(x,H)\sim\{0,1\}^n\times \mathcal{E}(2)}[(x,Z^xHZ^x)]
\end{equation}
\end{theorem}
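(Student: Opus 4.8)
The plan is to prove the claimed identity by exhibiting an explicit measure-preserving bijection on the sample space $\{0,1\}^n \times \calE(2)$ that fixes the first coordinate and sends $H \mapsto Z^x H Z^x$. Since $x$ is sampled uniformly and the map acts as the identity on $x$, it suffices to show that for each \emph{fixed} $x \in \{0,1\}^n$, conjugation by $Z^x$ is a measure-preserving transformation of the Gaussian geolocal ensemble $\calE(2)$; the joint statement then follows by averaging over $x$ (or, equivalently, because the product measure factorizes). So the crux is the single-instance invariance
\[
H \sim \calE(2) \quad\Longrightarrow\quad Z^x H Z^x \overset{d}{=} H.
\]

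First I would observe that conjugation by $Z^x$ acts diagonally on the Pauli basis: for each $P \in \calP_2$ there is a sign $\epsilon_P(x) \in \{+1,-1\}$ with $Z^x P Z^x = \epsilon_P(x)\, P$ (this sign is $-1$ iff $P$ anticommutes with $Z^x$, i.e.\ iff $P$ has an odd number of $X$ or $Y$ factors on the support of $x$, and $+1$ otherwise — but the precise combinatorial description is not needed). Crucially, $Z^x$ is a Pauli, so conjugation by it permutes the Pauli group and in particular maps each geometrically-$2$-local Pauli $P$ to $\pm P$, another geometrically-$2$-local Pauli on the same lattice; hence the conjugation preserves the index set $\calP_2$. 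Writing $H(\g) = \sum_{P \in \calP_2} g_P P$, we then get
\[
Z^x H(\g) Z^x \;=\; \sum_{P \in \calP_2} g_P\, \epsilon_P(x)\, P \;=\; H(\g'),\qquad g'_P := \epsilon_P(x)\, g_P.
\]
Thus conjugation by $Z^x$ corresponds, at the level of coefficient vectors, to the sign-flip map $T_x : \g \mapsto (\epsilon_P(x) g_P)_{P}$, which is a coordinatewise reflection of $\R^l$.

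Next I would invoke the symmetry of the i.i.d.\ centered Gaussian: each $g_P \sim \calN(0,\sigma^2)$ is symmetric about $0$, so $\epsilon_P(x) g_P \overset{d}{=} g_P$ independently across $P$; equivalently, the product measure $\calN_l = \calN(0,\sigma^2 \mathbb I_l)$ is invariant under every coordinatewise sign flip, in particular under $T_x$. Therefore $H(T_x \g) \overset{d}{=} H(\g)$ when $\g \sim \calN_l$, which is exactly $Z^x H Z^x \overset{d}{=} H$. Finally, to assemble the joint statement: the pushforward of the product measure $\mathrm{Unif}(\{0,1\}^n) \times \calN_l$ under the map $(x,\g) \mapsto (x, T_x \g)$ equals $\mathrm{Unif}(\{0,1\}^n) \times \calN_l$ again, since for each fixed $x$ the $\g$-marginal is preserved by $T_x$ and the $x$-coordinate is untouched; translating back through $H \leftrightarrow \g$ gives the stated equality of probabilities. (One subtlety worth a line: one should note $T_x$ is a bijection of $\R^l$ with $T_x^{-1} = T_x$, so ``pushforward equals the measure'' is the same as the density identity, avoiding any Jacobian bookkeeping.)

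I do not expect a serious obstacle here — the proof is essentially a change-of-variables argument — but the one step that deserves care is verifying that conjugation by $Z^x$ genuinely preserves the \emph{geometrically-$2$-local} index set $\calP_2$ and acts on it merely by a permutation-with-signs (here trivially the identity permutation, up to sign). This is immediate because $Z^x$ commutes or anticommutes with every Pauli and never changes its support, so geometric $2$-locality is preserved; but it is exactly the place where one uses that the randomizing operator is built from $Z$'s (a Pauli) rather than an arbitrary unitary, which is precisely why the analogous ``hiding'' statement with $X$'s appended at the output fails. I would also remark that the same argument shows invariance under conjugation by $X^x$ or $Y^x$, and more generally under conjugation by any Pauli, which may be useful elsewhere.
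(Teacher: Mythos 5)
Your proposal is correct and takes essentially the same approach as the paper: both observe that conjugation by $Z^x$ only flips the signs of the Pauli coefficients (preserving the index set $\calP_2$), and then invoke the sign-flip symmetry of the centered Gaussian to conclude distributional invariance, extending to the joint distribution by the factorization of the product measure. Your write-up is somewhat more explicit (naming the coordinatewise reflection $T_x$, noting it is an involution, and framing the argument as a pushforward identity), but the underlying idea and proof structure are the same.
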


\begin{proof}
Given any 2-local Pauli term $P_iP_j$, and for any $Z^{x_i}Z^{x_j}$, conjugating the term by $Z^{x_i}Z^{x_j}$ either does not change the term or simply flips its sign:
\begin{equation}
Z^{x_i}Z^{x_j}(P_iP_j)Z^{x_i}Z^{x_j} = \pm P_iP_j.
\end{equation}
Since $H$ is a sum of 2-local Pauli terms, each with an independent Gaussian-distributed coefficient, conjugation by any $Z^x$ flips the sign of some of the coefficients of $H$. However, because the multivariate Gaussian is symmetric about zero, the distribution of coefficients of $Z^x H Z^x$ is identical to that of $H$. Formally, 
\begin{align}
    \Pr_{(x,H)\sim\{0,1\}^n\times \mathcal{E}(2)}[(x,Z^xHZ^x)] &= \Pr_{x\sim \{0,1\}^n}[x \Pr_{H\sim \mathcal{E}(2)}[Z^x H Z^x]]= \Pr_{x\sim \{0,1\}^n}[x \Pr_{H\sim \mathcal{E}(2)}[H]]\\
    &= \Pr_{x\sim \{0,1\}^n}[x]\Pr_{H\sim \mathcal{E}(2)}[H]
\end{align}
as stated.
\end{proof}

This finally allows us to prove \Cref{thm:prfixedoutput}, which connects up with the $\# \textsf{P}$-hardness statement of the previous subsection. 
\begin{theorem}[Converting success over random outcomes to success over random inputs and Hamiltonians\label{thm:prfixedoutput}]
    Any approximate Hamiltonian evaluator $\mathcal{A}$ that succeeds for most outcomes will also succeed over a random input state and Hamiltonian, for a fixed outcome. 
    
    i.e. any approximate Hamiltonian evaluator 
    fulfilling the guarantees of \Cref{eq:lemma23} will also output an $\epsilon/2^n$-estimate of $|\langle +^{n}|e^{-iHt}|\psi_{\beta,y}\rangle|^2$ with probability $1-\delta$ over $H\sim \mathcal{E}(2)$, $y\sim \{0,1\}^n.$
\end{theorem}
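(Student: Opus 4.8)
The plan is to chain together the two ingredients that the excerpt has just assembled: the average-over-outcomes evaluator guaranteed by \Cref{lem:samplerimpliesevaluator} (equivalently \Cref{eq:lemma23}) and the joint hiding symmetry of \Cref{thm:hiding}. The key identity is the ``push-through'' relation $Z^y e^{-iH\tau} = e^{-i(Z^yHZ^y)\tau} Z^y$, together with the fact that $Z^y\ket{\psi_\beta} = \ket{\psi_{\beta,y}}$ (up to the convention fixing $\ket{\psi_\beta}=\ket{+^n}$, so $Z^y\ket{+^n}=\ket{\psi_y}$). First I would observe that for any $x,y\in\{0,1\}^n$ and any Hamiltonian $H$,
\begin{equation}
    \big|\bra{x}_X e^{-iH\tau}\ket{\psi_\beta}\big|^2
    = \big|\bra{+^n} Z^x e^{-iH\tau}\ket{\psi_\beta}\big|^2
    = \big|\bra{+^n} e^{-i(Z^xHZ^x)\tau} Z^x\ket{\psi_\beta}\big|^2
    = \big|\bra{+^n} e^{-i(Z^xHZ^x)\tau}\ket{\psi_{\beta,x}}\big|^2,
\end{equation}
so that the outcome-$x$ probability for Hamiltonian $H$ equals the outcome-$+^n$ probability for the conjugated Hamiltonian $Z^xHZ^x$ evolved on the shifted input $\ket{\psi_{\beta,x}}$.

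Next I would feed this into the evaluator. Given the evaluator $\mathcal{O}$ from \Cref{lem:samplerimpliesevaluator}, which on input $x$ returns an $\eps/2^n$-accurate estimate of $|\bra{x}_X e^{-iH\tau}\ket{\psi_\beta}|^2$ for a $(1-\delta)$-fraction of $x\sim\text{Unif}(\{0,1\}^n)$, I define the new evaluator $\mathcal{A}(H',y)$ as follows: sample $x\sim\text{Unif}(\{0,1\}^n)$, set $H := Z^xH'Z^x$ and run $\mathcal O$ on input $x$ with the sampler oracle instantiated for Hamiltonian $H$; output $\mathcal O$'s answer as the estimate of $|\bra{+^n}e^{-iH'\tau}\ket{\psi_y}|^2$ — wait, more carefully, the roles must be matched up: given the target pair $(H',y)$ I want to realize it as $(Z^xHZ^x, \ket{\psi_{\beta,x}})$ for a uniformly random $x$, which forces $x=y$ and $H = Z^yH'Z^y$. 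So the construction is: on input $(H',y)$, form $H:=Z^yH'Z^y$ and return $\mathcal O$'s estimate on outcome $y$ for the experiment with Hamiltonian $H$. By the identity above this is an $\eps/2^n$-accurate estimate of $|\bra{+^n}e^{-iH'\tau}\ket{\psi_y}|^2$ exactly when $\mathcal O$ succeeds on outcome $y$ for Hamiltonian $H=Z^yH'Z^y$.

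The crux is to argue that this succeeds with probability $1-\delta$ over the \emph{new} randomness, namely $(H',y)\sim \calE(2)\times\text{Unif}(\{0,1\}^n)$. This is precisely where \Cref{thm:hiding} enters: the map $(y,H')\mapsto (y, Z^yH'Z^y)$ pushes forward the product distribution $\text{Unif}(\{0,1\}^n)\times\calE(2)$ to itself. Hence the pair $(y, H)$ with $H=Z^yH'Z^y$ is distributed as a fresh draw from $\text{Unif}(\{0,1\}^n)\times\calE(2)$, and in particular $y$ alone is uniform and $H$ alone is distributed as $\calE(2)$. Since $\mathcal O$ from \Cref{lem:samplerimpliesevaluator} is a worst-case statement in $H$ — it succeeds for a $(1-\delta)$ fraction of outcomes for \emph{every} Hamiltonian in the family — and $y$ is uniform, we get $\Pr_{(H',y)}[\mathcal O\text{ succeeds on }(H,y)] \ge 1-\delta$, which transfers verbatim to $\mathcal A$. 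I would also note that $\mathcal A$ remains in $\textsf{BPP}^{\textsf{NP}^{\calS}}$ since forming $Z^yH'Z^y$ is a polynomial-time classical operation (it just flips signs of a subset of the Gaussian coefficients).

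The main obstacle I anticipate is bookkeeping around the two distinct uses of randomness and the exact hypotheses on $\mathcal O$: \Cref{lem:samplerimpliesevaluator} gives success over \emph{outcomes} for an \emph{arbitrary fixed} Hamiltonian, and one must be careful that the Hamiltonian $H=Z^yH'Z^y$ being handed to $\mathcal O$ is, conditioned on $y$, a perfectly legitimate (indeed still Gaussian-geolocal) instance so that the worst-case-over-Hamiltonians guarantee applies. A second subtlety is that the sampler $\mathcal S$ underlying $\mathcal O$ must itself be a \emph{worst-case} sampler (or an average-case one over the same ensemble, which by hiding is the same ensemble after conjugation) so that it is correct on the conjugated instance $H=Z^yH'Z^y$; this is exactly the reason the boxed conclusion in the overview allows ``worst-case or average-case over the same ensemble.'' Modulo these checks, the proof is a short and clean composition.
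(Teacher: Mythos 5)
Your proposal is correct and follows essentially the same route as the paper's proof: the push-through identity $Z^y e^{-iH\tau}=e^{-i(Z^yHZ^y)\tau}Z^y$ converts ``outcome $y$ for $H$'' into ``outcome $+^n$ for $(Z^yHZ^y,\ket{\psi_y})$'', the worst-case-in-$H$/average-over-outcomes guarantee of \Cref{lem:samplerimpliesevaluator} is applied for each fixed $H$, and \Cref{thm:hiding} transfers the resulting joint distribution of $(y,Z^yHZ^y)$ back to the product distribution $\text{Unif}(\{0,1\}^n)\times\calE(2)$. The only cosmetic difference is that you package the argument as an explicit reduction defining a new evaluator, whereas the paper phrases it as a chain of equalities between ``goodness'' events.
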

\begin{proof}
    For any evaluator $\mathcal{A}$, call a probability 
    $D_{x,H,y}:=\left|\bra{+^n}Z^x e^{-iHt}Z^y \ket{\psi_{\beta}}\right|^2$ \emph{good} if the estimate of that probability output by $\mathcal{A}$, $
    \hat{D}_{x,H,y}$ satisfies 
\begin{equation}
\left|\hat{D}_{x,H,y} - D_{x,H,y} \right| < \epsilon/2^n.
\end{equation}
\Cref{lem:samplerimpliesevaluator} states that if an approximate sampler $\mathcal{S}$ exists, then the associated evaluator $\mathcal{A}$ succeeds over most outcomes,
\begin{align}\label{eq:ev1}
\forall H\in \mathcal{E}(2),\quad  \Pr_{x\sim \{0,1\}^n}\Big[D_{x,H,\mathbf{0}} \,\, \text{is good} \Big] \geq 1-\delta.
\end{align}
Note, however, that 
\begin{align}
D_{x,H,\mathbf{0}} &:= \left|\bra{+^n} Z^x e^{-iHt}\ket{\psi_{\beta}}\right|^2 = \left|\bra{+^n} e^{-iZ^x H Z^x t} Z_x\ket{\psi_{\beta}}\right|^2 \\
&= D_{\mathbf{0},Z^xHZ^x,x}.
\end{align}
where we have used the identity:
\begin{equation}
Pe^{-iHt} = e^{-iPHPt}P
\end{equation}
for any $n$-qubit Pauli $P$ and any Hamiltonian $H$ consisting of Pauli terms. Thus \Cref{eq:ev1} implies: 
\begin{align}
&\forall H\in \mathcal{E}(2),\quad \Pr_{x\sim \{0,1\}^n}\Big[D_{x,H,\mathbf{0}} \text{ is good }\Big] \geq 1-\delta\\
&\rightarrow \forall H\in \mathcal{E}(2),\quad \Pr_{x\sim \{0,1\}^n}\Big[D_{\mathbf{0},Z^xHZ^x,x} \text{ is good }\Big] \geq 1-\delta\\
&\rightarrow  \Pr_{x\sim \{0,1\}^n, \,H\sim \mathcal{E}(2)}\Big[D_{\mathbf{0},Z^xHZ^x,x} \text{ is good}\Big] \geq 1-\delta\\
&\rightarrow \Pr_{x\sim \{0,1\}^n, \,H\sim \mathcal{E}(2)}\Big[D_{\mathbf{0},H,x} \text{ is good}\Big] \geq 1-\delta
\end{align}
where the last implication is by \Cref{thm:hiding}.
\end{proof}

\section{Algorithmic robust Berlekamp-Welch}
\subsection{Polynomial interpolation lemmas}
For the polynomial interpolation, we cannot use a direct adaptation of Berlekamp-Welch, as that requires the correct points to be exact evaluations of the polynomial whereas we only assume our average-case solver can approximately evaluate the polynomial. Instead we will use

\begin{lemma}[Discrete Remez inequality.]\label{lem:Remez} Let $\left\{x_j\right\}_{j=0}^d \subset[0,b]$ be a $\delta$-separated set of points, meaning that $\left|x_i-x_j\right| \geq \delta$ for $i \neq j$. Then if $p$ is a degree-d polynomial and $L \geq b$,

$$
|p(L)| \leq\left(e^2(\delta d)^{-1} L\right)^d \max _{0 \leq j \leq d}\left|p\left(x_j\right)\right| .
$$
\end{lemma}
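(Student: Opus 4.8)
The plan is a direct Lagrange-interpolation argument, the same technique used to prove the univariate "robustness" lemmas in the random-circuit-sampling literature. Since $p$ has degree at most $d$ and we are handed $d+1$ points, $p$ is reconstructed exactly from its values there. First I would relabel the points so that $0\le x_0<x_1<\cdots<x_d\le b$; this is without loss of generality since neither side of the inequality depends on the ordering. Then write
\[
p(L)=\sum_{j=0}^{d}p(x_j)\prod_{k\neq j}\frac{L-x_k}{x_j-x_k},
\]
so that $|p(L)|\le\bigl(\max_{0\le j\le d}|p(x_j)|\bigr)\cdot\Lambda$ where $\Lambda:=\sum_{j=0}^{d}\prod_{k\neq j}\frac{|L-x_k|}{|x_j-x_k|}$, and the whole task reduces to bounding this Lebesgue-type constant $\Lambda$.

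For the numerator I would use $L\ge b\ge x_k\ge 0$, so $|L-x_k|=L-x_k\le L$ and hence $\prod_{k\neq j}|L-x_k|\le L^{d}$. For the denominator I would use $\delta$-separation together with the sorted order: for the $j$-th point, the $i$-th nearest point lying to its left is at distance at least $i\delta$, and likewise on the right, so $\prod_{k\neq j}|x_j-x_k|\ge\delta^{d}\,j!\,(d-j)!$. Combining these two bounds, each summand of $\Lambda$ is at most $\dfrac{L^{d}}{\delta^{d}\,j!\,(d-j)!}=\dfrac{L^{d}}{\delta^{d}\,d!}\binom{d}{j}$, and summing the binomial coefficients gives
\[
\Lambda\ \le\ \frac{L^{d}}{\delta^{d}\,d!}\sum_{j=0}^{d}\binom{d}{j}\ =\ \frac{(2L)^{d}}{\delta^{d}\,d!}.
\]
Finally, to match the stated constant I would apply $d!\ge (d/e)^{d}$, giving $\Lambda\le\bigl(\tfrac{2eL}{\delta d}\bigr)^{d}$, and then note that $2e<e^{2}$ (i.e.\ $2<e$), so $\Lambda\le\bigl(e^{2}(\delta d)^{-1}L\bigr)^{d}$, which is exactly the claimed bound (in fact with a little slack).

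I do not expect a genuine obstacle: the only step that needs care is the denominator estimate, where one must correctly split the sorted points into those left and right of $x_j$ to extract the product $j!\,(d-j)!$ of factorials; everything else — the numerator bound, the binomial sum, and the clean-up of the constant via Stirling-type bounds — is routine bookkeeping. An alternative, should one wish to avoid the sorted-order combinatorics, is to invoke the known relation between $\Lambda$ and the continuous Remez inequality, but the elementary route above is shorter and self-contained.
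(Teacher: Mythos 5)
Your proof is correct. The Lagrange expansion, the numerator bound $\prod_{k\neq j}|L-x_k|\le L^{d}$ (valid since $L\ge b\ge x_k\ge 0$), the denominator bound $\prod_{k\neq j}|x_j-x_k|\ge\delta^{d}\,j!\,(d-j)!$ from sorting and $\delta$-separation, the binomial sum, and the clean-up via $d!\ge(d/e)^{d}$ and $2e<e^{2}$ all check out; you in fact obtain the slightly sharper constant $\bigl(2eL/(\delta d)\bigr)^{d}$.

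Your route differs from the paper's only in that the paper does not prove the lemma from scratch: it simply invokes Lemma~B.1 of \cite{BDFH24} together with a variable rescaling $x'=x/b$, outsourcing the interpolation argument to that reference. Your self-contained derivation is essentially the same Lagrange/Lebesgue-constant argument that the paper \emph{does} write out for the adjacent ``Modified Discrete Remez on $[-1,1]$'' lemma (where $L$ lies inside the node range and the numerator is bounded by $1$ rather than $L^{d}$, yielding $2^{d}/(\delta^{d}d!)$), so your proof has the virtue of unifying the two statements under one elementary technique and making the extrapolation case ($L\ge b$) independent of the external citation. The only thing the citation-based route buys is brevity; your version buys self-containedness and a marginally better constant.
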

\begin{proof}
    Follows from, e.g., Lemma B.1 of \cite{BDFH24} via a variable rescaling $x'=x/b$ and $p'(x'b) = p(x)$ (where $p,x$ are from the original lemma and $p', x'$ are the new polynomial and variable of interest).
\end{proof}

We will also need a version of the Discrete Remez inequality where we interpolate to a point that is within the support of the set of evaluations we are given.

\begin{lemma}[{Modified Discrete Remez on $[-1,1]$}]
Let $\left\{x_j\right\}_{j=0}^d \subset[0,b]$ be a $\delta$-separated set of points.
If $p$ is a real polynomial of degree $\le d$ and $L\in[0,1]$, then
\[
\quad
|p(L)|\;\le\;\frac{2^{\,d}}{\delta^{\,d}\,d!}\;\max_{0\le j\le d}|p(x_j)|
\]
\end{lemma}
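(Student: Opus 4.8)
The plan is to deduce the Modified Discrete Remez inequality directly from the standard Lagrange interpolation formula rather than from the Remez-type bound of \Cref{lem:Remez}, since here the target point $L$ lies inside $[0,1]$ and we only need control by the maximum of $|p|$ over the sample points. First I would write, for a polynomial $p$ of degree at most $d$ and any $d+1$ distinct nodes $x_0,\dots,x_d$, the exact identity
\[
p(L)\;=\;\sum_{j=0}^{d} p(x_j)\,\prod_{\substack{0\le k\le d\\ k\ne j}}\frac{L-x_k}{x_j-x_k}.
\]
Taking absolute values and pulling out $\max_j |p(x_j)|$, it suffices to bound $\sum_{j} \prod_{k\ne j}\frac{|L-x_k|}{|x_j-x_k|}$.

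Next I would bound the numerator and denominator separately. For the numerator, since all $x_k\in[0,b]\subseteq[0,1]$ and $L\in[0,1]$, we have $|L-x_k|\le 1$, so $\prod_{k\ne j}|L-x_k|\le 1$. For the denominator, I would use $\delta$-separation: order the nodes so that consecutive ones differ by at least $\delta$; then for a fixed $j$, the distances $\{|x_j-x_k|\}_{k\ne j}$ are, up to reordering, bounded below by $\delta,2\delta,\dots$ on each side of $x_j$, so $\prod_{k\ne j}|x_j-x_k|\ge \delta^{\,d}\,\big(a!\,(d-a)!\big)$ where $a$ is the number of nodes below $x_j$. Using $a!\,(d-a)!\ge \lfloor d/2\rfloor!\,\lceil d/2\rceil!$ would give one clean bound, but the cleanest route to exactly the stated constant $2^d/(\delta^d d!)$ is to instead observe $\sum_{j=0}^d \frac{1}{a_j!\,(d-a_j)!} = \frac{1}{d!}\sum_{j=0}^{d}\binom{d}{a_j} \le \frac{2^d}{d!}$, since the $a_j$ range over $0,1,\dots,d$ and $\sum_{i=0}^d \binom{d}{i}=2^d$. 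Combining, $|p(L)|\le \big(\max_j|p(x_j)|\big)\cdot \sum_j \frac{1}{\delta^d a_j!(d-a_j)!}\le \frac{2^d}{\delta^d d!}\max_j |p(x_j)|$, as claimed.

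The one technical point that needs care — and the step I expect to be the main obstacle — is the lower bound $\prod_{k\ne j}|x_j - x_k| \ge \delta^{\,d}\, a_j!\,(d-a_j)!$. This requires arguing that, although the nodes need not be equally spaced, $\delta$-separation forces the $k$-th nearest node on either side of $x_j$ to be at distance at least $k\delta$; this is a short monotonicity argument (the nodes to the left of $x_j$, listed in decreasing order, drop by at least $\delta$ each step, and similarly to the right). I would state this as a small sub-claim and verify it, then feed it into the product bound. Everything else is the routine Lagrange estimate, and I would not belabor it. A minor caveat: the lemma as stated writes $\{x_j\}\subset[0,b]$ without explicitly constraining $b$, but the proof only uses $b\le 1$ (so that $|L-x_k|\le 1$), which is consistent with the ``on $[-1,1]$'' in the lemma's title and with how it is invoked later; I would note this normalization explicitly at the start.
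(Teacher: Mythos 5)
Your proof is correct and follows essentially the same route as the paper's: Lagrange interpolation, the trivial bound $|L-x_k|\le 1$ on the numerators, the $\delta$-separation lower bound $\prod_{k\ne j}|x_j-x_k|\ge \delta^d\,j!\,(d-j)!$ (after sorting the nodes, which is what your index $a_j$ encodes), and then summing the binomial coefficients to get $2^d/d!$. Your observation that the lemma implicitly needs $b\le 1$ is a fair reading of the statement, and matches how the paper's proof silently uses $x_k\in[0,1]$.
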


\begin{proof}
By the Lagrange interpolation formula,
\[
p(L)=\sum_{j=0}^d p(x_j)\prod_{k\ne j}\frac{L-x_k}{x_j-x_k}.
\]
Since $L,x_k\in[0,1]$, we have $|L-x_k|\le 1$, hence
\[
|p(L)|\le \max_j|p(x_j)|\sum_{j=0}^d \prod_{k\ne j}\frac{1}{|x_j-x_k|}.
\]
Reorder the nodes so that $x_0\le x_1\le \cdots\le x_d$. For a fixed $j$,
\[
\prod_{k\ne j}|x_j-x_k|
=\Big(\prod_{m=1}^{j} (x_j-x_{j-m})\Big)\Big(\prod_{n=1}^{d-j} (x_{j+n}-x_j)\Big)
\;\ge\; \Big(\prod_{m=1}^{j} m\delta\Big)\Big(\prod_{n=1}^{d-j} n\delta\Big)
=\delta^{\,d}\,j!\,(d-j)! .
\]
Therefore
\[
|p(L)|\le \max_j|p(x_j)|\,\delta^{-d}\sum_{j=0}^d \frac{1}{j!\,(d-j)!}
=\max_j|p(x_j)|\,\delta^{-d}\,\frac{1}{d!}\sum_{j=0}^d \binom{d}{j}
=\frac{2^{\,d}}{\delta^{\,d}\,d!}\,\max_j|p(x_j)|,
\]
as claimed.
\end{proof}

\begin{lemma}\label{lemma:leading-coeff}
Let $a,b\in \mathbb{R}$, and let $\left\{x_j\right\}_{j=0}^d \subset[a,b]$ be a $\delta$-separated set of points, meaning that $\left|x_i-x_j\right| \geq \delta$ for $i \neq j$. Then if $p$ is a degree-d polynomial with leading coefficient $1$, then there must be some point $x_j$ with
\[
|p(x_j)| \geq \frac{\delta^d}{d+1} \,.
\]
\end{lemma}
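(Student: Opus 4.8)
The plan is to recover the (known, unit) leading coefficient of $p$ from its values at the $d+1$ nodes $x_0,\dots,x_d$, and then argue that if every $|p(x_j)|$ were below $\delta^d/(d+1)$ this reconstruction would be impossible. First I would use the fact that, since $\deg p\le d$ and the $x_j$ are distinct, $p$ coincides with its Lagrange interpolant through the $d+1$ points, whose $x^d$-coefficient — equivalently, the divided difference $p[x_0,\dots,x_d]$ — is exactly $\sum_{j=0}^d p(x_j)\big/\prod_{k\ne j}(x_j-x_k)=1$. Taking absolute values yields
\[
1\;\le\;\Big(\max_j|p(x_j)|\Big)\sum_{j=0}^d\frac{1}{\prod_{k\ne j}|x_j-x_k|},
\]
so the whole statement reduces to the purely combinatorial estimate $\sum_{j}\big(\prod_{k\ne j}|x_j-x_k|\big)^{-1}\le 2^d/(\delta^d\,d!)$.

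For that estimate I would relabel the nodes so that $x_0<x_1<\cdots<x_d$ — the one step that matters for getting the constant right — and use $\delta$-separation in the form $x_j-x_{j-m}\ge m\delta$ and $x_{j+m}-x_j\ge m\delta$. This gives $\prod_{k<j}(x_j-x_k)\ge \delta^{j}\,j!$ and $\prod_{k>j}(x_k-x_j)\ge \delta^{d-j}(d-j)!$, hence $\prod_{k\ne j}|x_j-x_k|\ge \delta^{d}\,j!\,(d-j)!$ for every $j$. Summing, $\sum_{j=0}^d\big(\prod_{k\ne j}|x_j-x_k|\big)^{-1}\le \delta^{-d}\sum_{j=0}^d\frac1{j!(d-j)!}=\frac{1}{\delta^{d}d!}\sum_{j=0}^d\binom dj=\frac{2^d}{\delta^{d}d!}$, exactly as needed — this is the same gap-counting argument already used in the proof of the modified discrete Remez inequality just above. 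Combining, $\max_j|p(x_j)|\ge \delta^d d!/2^d$, and since $(d+1)!\ge 2^d$ for every $d\ge 0$ (trivial induction on $d$), this is at least $\delta^d/(d+1)$, which proves the claim, in fact with room to spare.

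I do not expect a genuine obstacle; the only thing to watch is the size of the constant. A tempting shortcut — writing $|p(x)|=\prod_i|x-r_i|$ over the roots $r_i$, noting each root is within $\delta/2$ of at most one node, so some node avoids all $d$ roots and gives $|p(x_j)|\ge(\delta/2)^d$ — is too lossy, since $(\delta/2)^d<\delta^d/(d+1)$ once $d\ge 2$. Likewise, bounding $\prod_{k\ne j}|x_j-x_k|$ uniformly (without first sorting) only yields $\delta^d d!/2^d$, which loses a further factor of $d+1$ in the sum and is not enough. Sorting the nodes before invoking $\delta$-separation is precisely what makes the binomial sum appear and keeps the estimate tight.
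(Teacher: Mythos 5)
Your proof is correct and follows the same route the paper sketches (Lagrange interpolation to express the unit leading coefficient as $\sum_j p(x_j)/\prod_{k\ne j}(x_j-x_k)$, then the triangle inequality together with the sorted gap bound $\prod_{k\ne j}|x_j-x_k|\ge\delta^d j!(d-j)!$), and it in fact yields the stronger constant $\delta^d d!/2^d\ge\delta^d/(d+1)$. One small quibble with your closing remarks: the crude uniform bound $\prod_{k\ne j}|x_j-x_k|\ge\delta^d$ is \emph{not} "not enough" --- it gives $1\le\max_j|p(x_j)|\cdot(d+1)/\delta^d$, i.e.\ exactly the stated $\delta^d/(d+1)$, so sorting is an optimization rather than a necessity here.
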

\begin{proof}
This follows by using Lagrange interpolation as above and then using triangle inequality to bound the leading coefficient.    
\end{proof}

\subsection{Robust Berlekamp-Welch}
We also have the following algorithmic variant of the Berlekamp–Welch algorithm that can tolerate additive errors in the values we are trying to interpolate. 

While standard Berlekamp-Welch solves a set of linear equations to find an ``error-locator" polynomial that vanishes exactly at corrupted positions, our robust algorithmic version of Berlekamp-Welch solves two linear programs, both of which have $O(n)$ variables and constraints and are solvable in poly $(n)$ time with standard interior-point methods. Because our non-corrupted evaluations can now have additive errors, the first linear program is to find a {\em soft} error-locator $s$ which only approximately vanishes at corrupted positions, and the second linear program fits a signal to $s$. 

\begin{theorem}[Robust efficient Berlekamp-Welch\label{thm:REBW}]
Let $x_1, \dots, x_n \in [-1,1]$ and let $y_1, \dots , y_n \in \R$.  Assume that $x_1,\dots , x_n$ are $\delta$-separated.  Let $0 <  k < n$ and assume that there exists a polynomial $p$ of degree at most $n - 2k - 1$ such that $|p(x_i) - y_i| \leq \eps$ for at least $n - k$ distinct indices $i$.  Then there is an algorithm that, on input $\{(x_i,y_i)\}_{i\in [n]}$, runs in time $\text{poly}(n)$ and outputs a polynomial $q$ of degree at most $n - 2k - 1$ such that  $|q(x_i) - p(x_i)| \leq (10/\delta)^{2n} \cdot \eps$ for at least $n-2k$ values $i \in [n]$.
\end{theorem}
\begin{proof}
Let $r(x) = r_{n-k-1} x^{n - k-1} + \dots + r_0$ be a polynomial of degree $n-k-1$ and $s(x) = x^{k} + s_{k-1}x^{k-1} + \dots + s_0$ be a polynomial of degree $k$.  We treat the coefficients $r_j, s_j$ as unknowns and then we set up the following linear program.  We enforce that
\[
\begin{split}
&|s_{k-1}| \leq \binom{k}{k-1} , \dots , |s_0| \leq \binom{k}{0} \\
&- 2^{k} \eps  \leq r(x_i) - y_i \cdot s(x_i) \leq 2^{k}  \eps \quad \forall i \in [n] 
\end{split}
\]
First, to see why this is feasible, let $\{ i_1, \dots , i_k \}$ contain the $k$ indices where $|p(x_i) - y_i| \geq \eps$.  Then let $s(x) = (x-x_{i_1}) \cdots (x - x_{i_k})$.  Let $r(x) = s(x) p(x)$. The first condition then follows from expanding $s(x)$ and using triangle inequality.  The second condition  holds because for $i \in \{i_1, \dots , i_k \}$, $r(x_i) - y_i \cdot s(x_i) = 0$ and otherwise 
\[
|r(x_i) - y_i \cdot s(x_i)| = |s(x_i)| \cdot |p(x_i) - y_i| \leq 2^k \eps \,.
\]

Now we consider any feasible solution to the linear program, say $r(x), s(x)$.  Given this feasible solution, we solve a second linear program for a polynomial $q(x) = q_{n - 2k - 1}x^{n - 2k - 1} + \dots + q_0$ such that 
\[
- (10/\delta)^n \eps \leq (q(x_i)  - y_i) s(x_i) \leq  (10/\delta)^n \eps \quad \forall i \in [n] \,.
\]
Note that now the coefficients of $s(x)$ are constants, fixed to the values of the solution from the first linear program. To see why this second program is feasible, consider  $q(x) = p(x)$.  Then by the feasibility of $r(x), s(x)$ in the first linear program, we know for all $i \in [n] \backslash \{i_1, \dots , i_k \}$,
\begin{align}
|r(x_i) - p(x_i)s(x_i)| &\leq |r(x_i) - y_i\cdot s(x_i)| + |s(x_i)| \cdot |p(x_i) - y_i| \\
&\leq (b-a)^k \eps + |s(x_i)| \cdot |p(x_i) - y_i| \leq 2^{k+1} \eps \,.
\end{align}
Now the polynomial $r(x) - p(x)s(x)$ has degree at most $n - k - 1$ and thus  Lemma~\ref{lem:ModifiedRemez} implies that for all $x \in [-1,1]$,
\[
|r(x) - p(x)s(x)| \leq (8/\delta)^n \eps \,.
\]
Combining with the feasibility of $r(x), s(x)$ in the first linear program, we deduce that for all $i \in [n]$,
\[
|(p(x_i) - y_i)s(x_i)| \leq (10/\delta)^n\eps \,.
\]
This completes the proof of feasibility for the second program. Now, we will show that any feasible polynomial $q$ in the second linear program will have the desired final property.  For such a feasible $q$, for all $i \in [n] \backslash \{i_1, \dots , i_k \}$,
\[
|(q(x_i) - p(x_i))s(x_i)| \leq (10/\delta)^n \eps + |s(x_i)| \cdot |p(x_i) - y_i| \leq (2^k + (10/\delta)^n)\eps \,.
\]
Finally, among all $i \in [n] \backslash \{i_1, \dots , i_k \}$, Lemma~\ref{lemma:leading-coeff} implies that there are at most $k$ indices such that $|s(x_i)| \leq \delta^k/(k+1)$.  Thus, there must be at least $n - 2k$ indices $i \in [n]$ such that 
\[
|q(x_i) - p(x_i)| \leq (10/\delta)^{2n} \eps \,.
\]
This completes the proof.
\end{proof}

We observe that our above theorem also immediately applies to other quantum advantage settings that rely on the hardness of approximately evaluating probabilities to argue for hardness of sampling, and consequently also improves the evidence for hardness there:

\begin{corollary}[$\# \textsf{P}$-hardness of SUPER in the saturated regime.] Let $\mid$ SUPER $\left.\right|_{ \pm} ^2$ be the problem of \textit{Sub-Unitary Permanent Estimation
with Repetitions}, defined in \cite{bouland2025complexitytheoreticfoundationsbosonsamplinglinear}. Let $m \geq 2.1 n$. In the regime $m=\Theta(n)|\mathrm{SUPER}|_{ \pm}^2$ is $\# \textsf{P}$-hard under $\textsf{BPP}$ reductions to additive error $\epsilon(S)=e^{-5 n \log (n)-O(n)}$. with probability at least $1-\delta$, with $\delta=1 / \operatorname{poly}((n)).$
\end{corollary}

\begin{corollary}[$\#\textsf{P}$ hardness of approximating random circuit output probabilities]
    Let $\mathcal{A}$ be a circuit architecture so that computing $\mathrm{p}_0(C) = |\bra{0}C\ket{0}|^2$ to within additive error $2^{-O(m)}$ is $\# \textsf{P}$-hard in the worst case. Then the following problem is $\# \textsf{P}$-hard under $\textsf{BPP}$ reductions: for any constant $\eta<\frac{1}{4}$, on input a random circuit $C \sim \mathcal{H}_{\mathcal{A}}$ with $m$ gates, compute the output probability $\mathrm{p}_0(C)$ up to additive error $\delta=\exp (-O(m \log m))$, with probability at least $1-\eta$ over the choice of $C$.
\end{corollary}
\section{Worst-to-Average-Case Reduction}\label{sec:worst-av}

\textbf{Notation:} In this subsection, we will use the symbol $\x$ in place of $\g$ which was previously used to denote the vector of coefficients of the Hamiltonian.

Recall that the function of interest is $D:\mathbb{R}^l\rightarrow [0,1]$ which is the probability that the time evolution of the Hamiltonian $H(\x)$ (see \Cref{eq:Hg}), initialized on the state $\ket{+^n}$, will output $\ket{+^n}$ when measured in the X basis:
\begin{equation}
    D(\x) = |\bra{+^n} e^{-iH(\bm \x)t} \ket{+^{n}}|^2,
\end{equation}
where $H(\x)\sim \calE(2).$ 

Given some worst-case instance from $\mathcal{E}_{worst}$, the goal of the worst-to-average-case reduction is to estimate the value of $D(\x_{worst})$, given access to an efficient algorithm $\mathcal{A}$ such that
\begin{equation}\label{eq:guarantee_A}
    \operatorname{Pr}_{\x \sim \mathcal{N}_l}\left[\left|\mathcal{A}(\x)-D(\x)\right| \leq \epsilon_{\mathcal{A}}\right] \geq 1-\delta.
\end{equation}
That is, $\mathcal{A}$ is an algorithm that succeeds ``on average". 

To summarize the situation, we have at our disposal the ability to run $\calA$, which succeeds with high probability on inputs sampled from $\calN_l$. The goal of the worst-to-average-case reduction is to show that we may run $\calA$ on some set of input points $\bf{x}_i\in \R^l$, and process the $\{\calA(\x_i)\}$ to obtain an estimate of $D(\g_{worst})$.This then implies that estimating $D(\g_{worst})$ is no harder than estimating $D(\x)$ for `most' $\x \sim \calN_l.$ In other words, even the worst-possible point is no harder than `most' of them. 

It is natural to use polynomial interpolation for the processing step, where given the knowledge that $\calT_m$ approximates $D$, we interpolate a degree $m$ polynomial through the known points $\{(\x_i,\calA(\x_i))\}$. We thus face a problem of robust multivariate polynomial interpolation of the polynomial $\calT_m: \R^l \rightarrow [0,1]$. But univariate polynomial interpolation with error correction is much more tractable. Could we try to compose the desired multivariate polynomial interpolation out of interpolations on lower-dimensional manifolds within $\mathbb{R}^l$? This would reduce the number of variables we have to interpolate in. However, this idea runs into a different problem: inputs restricted to lie in the same plane or circumference are by this restriction correlated with each other. So, adopting this approach means we cannot use $\mathcal{A}$'s success guarantee out-of-the-box, as the inputs restricted to the same manifold are no longer independent. 

Nevertheless, a variation of this idea eventually works. The key realization is that, if we feed $\calA$ points that are sampled from the marginal of $\calN_l$ on the desired manifold, a counting argument ensures that a randomly-sampled manifold will already be one on which $\calA$ succeeds with high probability, regardless of any correlations within that manifold. Using this, we show how to reduce multivariate polynomial interpolation to multiple rounds of univariate polynomial interpolation. It turns out that this seemingly formidable task in high-dimensional geometry can be accomplished by working on a single slice of $\mathbb{R}_l$, which  is visualized in \Cref{fig:slicingdicing}. For these reasons, we call our technique ``slicing and dicing the sphere".
\begin{figure}[!htbp]
    \centering
    \includegraphics[width=0.5\linewidth]{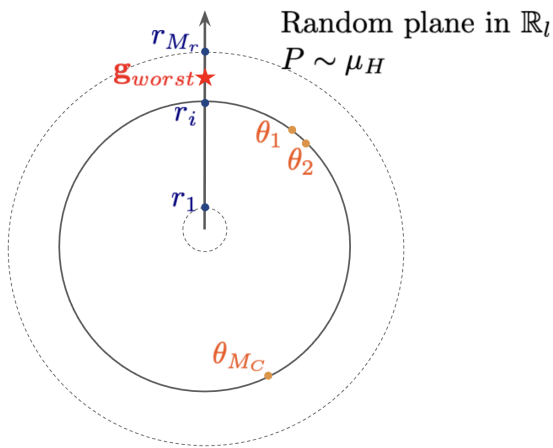}
    \caption{Slicing and dicing the sphere}
    \label{fig:slicingdicing}
\end{figure}

\subsection{Slicing and dicing the sphere}

First, we go from a problem on the entire $l$-dimensional sphere to one on a single slice, that is, a random $2D$ plane. That is, we reduce our problem to \textbf{bivariate interpolation on a random plane containing the origin and $\g_{worst}$ suffices}. The core observation is that a random plane is `good for $\mathcal{A}$' with high probability, meaning that $\mathcal{A}$ is likely to succeed on inputs from $\calN_l$ restricted to such a plane.

To see this more formally, notice that a point sampled from $\calN_l$ may equivalently be sampled as follows:
\begin{enumerate}
\item Sample a plane $P$ that contains the origin and $\g_{worst}$ from $\mu_H$, the Haar-random distribution over planes containing these two points.\footnote{To sample from $\mu_H$, we may draw a unit vector $q$ uniformly on $S^{l-1}$ and set $P=\text{span}(\g_{worst}, q).$ Sampling $q$ isotropically in the above fashion makes the construction invariant under any rotation that fixes $\g_{worst}$. That symmetry is exactly the one that permutes the set of planes containing $\g_{worst}$.} 
\item Sample from the marginal distribution of $\calN_l$ on $P$, which we call $\rho_P$.
\end{enumerate}  
 We may thus re-write the success guarantee of $\calA$, \Cref{eq:guarantee_A} as
\begin{equation}\label{eq:assumption}
    \operatorname{Pr}_{\x \sim \mathcal{N}_l}\left[ \calA\text{ succeeds}\right] = \operatorname{Pr}_{P \sim \mu_H}[\operatorname{Pr}_{\x \sim \rho_P}[\calA(x) \text{ succeeds}]] \geq 1-\delta.
\end{equation}
Thus $\calA$'s success guarantees on $P$ are:
\begin{lemma}[Sampling a good plane with high probability]\label{lemma:2}
Fix $\g_{\text{worst}} \in \mathbb{R}^l$,and let $P \sim \mu_H$ as defined above. Then for any $\delta_1>\delta$, 
\begin{equation}\label{eq:guarantee_outer}
    \operatorname{Pr}_{P \sim \mu_H}[\operatorname{Pr}_{\x \sim \rho_P}[\calA(x) \text{ succeeds}]\geq 1-\delta_1] \geq  1-\delta/\delta_1
\end{equation}
\end{lemma}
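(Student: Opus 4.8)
The plan is to obtain this lemma from a single application of Markov's inequality to the conditional failure probability, regarded as a random variable in the plane $P$. First I would reinterpret the average-case guarantee \eqref{eq:guarantee_A} through the two-stage decomposition of Gaussian sampling recorded in \eqref{eq:assumption}: writing $q(P) := \Pr_{\x \sim \rho_P}[\calA(\x)\text{ fails}]$, which is manifestly a $[0,1]$-valued, hence nonnegative, random variable when $P \sim \mu_H$, the decomposition says exactly that $\mathbb{E}_{P \sim \mu_H}[q(P)] = \Pr_{\x \sim \calN_l}[\calA(\x)\text{ fails}] \le \delta$.

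Next, since $q(P) \ge 0$, Markov's inequality gives, for any threshold $\delta_1 > 0$,
\[
\Pr_{P \sim \mu_H}\bigl[\, q(P) \ge \delta_1 \,\bigr] \;\le\; \frac{\mathbb{E}_{P\sim\mu_H}[q(P)]}{\delta_1} \;\le\; \frac{\delta}{\delta_1}.
\]
Taking the complementary event and unfolding the definition of $q(P)$ yields precisely $\Pr_{P \sim \mu_H}\bigl[\Pr_{\x\sim\rho_P}[\calA(\x)\text{ succeeds}] \ge 1-\delta_1\bigr] \ge 1-\delta/\delta_1$, which is \eqref{eq:guarantee_outer}; the hypothesis $\delta_1 > \delta$ is used only to make the bound non-vacuous. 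This is the ``counting argument'' alluded to in the statement, and it mirrors the corresponding step for random circuits in \cite{BFNV18}. There is nothing further to compute.

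The only place that genuinely requires care — and it is already carried out in the paragraph preceding the lemma — is justifying the decomposition \eqref{eq:assumption}, i.e. that drawing $\x \sim \calN_l$ is equivalent to drawing $P \sim \mu_H$ and then $\x \sim \rho_P$. I would argue this from the rotational invariance of the isotropic Gaussian together with the fact that the stabilizer of $\g_{\text{worst}}$ in $O(l)$ acts transitively on the $2$-planes through $\g_{\text{worst}}$ and the origin, so that averaging the per-plane conditional laws $\rho_P$ against $\mu_H$ reconstitutes $\calN_l$ exactly — which in turn guarantees that the nested probability in \eqref{eq:assumption} really coincides with the unconditional failure probability $\mathbb{E}_{P\sim\mu_H}[q(P)]$. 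So the ``main obstacle'' is purely this coupling bookkeeping; once it is in place, Markov closes the argument in one line.
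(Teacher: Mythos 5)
Your proposal is correct and is essentially the paper's argument: the paper's proof is precisely Markov's inequality written out by hand (define $k=\Pr_{P\sim\mu_H}[P\in R]$ for the set $R$ of good planes, split the integral over $R$ and its complement, bound each piece, and solve for $k$), whereas you invoke Markov directly on the failure probability $q(P)$, which is cleaner but arithmetically identical. Your brief remark justifying the two-stage decomposition of $\calN_l$ via rotational invariance matches the paper's footnote preceding the lemma, so nothing is missing.
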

\begin{proof}
    Let $R:= \{P:\Pr_{x\sim \rho_P} [\calA(x) \text{ succeeds}] \geq 1-\delta_1\}$ and let
    \begin{equation}\label{eq:assumption_2}
    \operatorname{Pr}_{P \sim \mu_H}[P\in R] = k
\end{equation}
Then
\begin{align}
     1-\delta &\leq \operatorname{Pr}_{\x \sim \mathcal{N}_l}\left[ \calA\text{ succeeds}\right] \\
     &= \int_{P: \, P\in R} \Pr_{x\sim \rho_P}[\calA(x) \text{ succeeds}] \,\, d\mu_H(P)+ \int_{P: \, P\not\in R} \Pr_{x\sim \rho_P}[\calA(x) \text{ succeeds}] d\mu_H(P)\\
     &\leq k \max_{P\in R} \Pr_{x\sim \rho_P}[\calA(x) \text{ succeeds}] + (1-k) \max_{P\not\in R} \Pr_{x\sim \rho_P}[\calA(x) \text{ succeeds}]\\
     &\leq k + (1-k) (1-\delta_1) = 1-\delta_1+ \delta_1 k.
\end{align}

Here the first inequality follows by assumption (\Cref{eq:assumption}). Solving for $k$ yields the lemma.
\end{proof}

\Cref{lemma:2} says that if we sample a random $P$, then feed $\calA$ with points sampled from $\rho_P$, we would have high confidence that $\calA$ would evaluate correctly on those points. But the restriction of any polynomial to $P$ is a bivariate polynomial, because a plane only requires two parameters to describe fully. Moreover, the target point $\g_{worst}$ also lies in $P$. Thus, we could use $\calA$'s evaluations on points in $P$ as interpolation points for the bivariate polynomial $\calT_m$ restricted to $P$. 

We have successfully gone from a $l$-dimensional sphere to a  $2$-dimensional slice of it, but it is possible to go even further.  


\subsubsection{From a slice to a single ray}

We will further reduce the problem to univariate interpolation. Without loss of generality, we may assume that $\g_{\text{worst}}$ is on the $z$-axis of plane $P$'s coordinate system. (If not, we change coordinates; the polynomial in the new coordinates will have the same degree.) Instead of interpolating on $P$, we will actually interpolate on the $z$-axis. The polynomial we will interpolate is the restriction of $\calT_m$ to this axis, which is a univariate polynomial that we call $\calT_Z:[0,\infty)_Z\rightarrow \mathbb{R}$. By doing a more intricate version of the argument above, we will show how to obtain evaluations of $\calT_m$ on the $z$-axis, i.e. $\calT_Z(r_i)$ where $r_i\in [0,\infty)_Z.$

What will not work to obtain interpolation points is to run $\calA$ directly at $r_i$. We have no handle on how $\calA$ performs on the $z$-axis, as this is a worst-case collection of points -- recall that what we are calling the $z$-axis is simply the ray out from the origin to $\g_{worst}$. However, just as with the $l$-dimensional Gaussian $\calN_l$ itself, we may observe that $\calN_l$'s marginal distribution on $P$, $\rho_P$, further decomposes into a radial component and an angular component; indeed to sample a point $(r,\theta)$ from $\calN_l$ one can equivalently
\begin{enumerate}
    \item Sample its radial coordinate $r \sim \rho_R$, where $\rho_{R}$ is also the radial distribution of $\mathcal{N}(0,\sigma^2\mathbb{I}_l)$, which is a scaled $\chi(l)$-distribution: 
    \begin{equation}\label{eq:radial}
\rho_R^{(P)}(r)=\frac{1}{2^{l / 2-1} \Gamma(l / 2) \sigma^{l}} r^{l-1} \exp \left(-\frac{r^2}{2 \sigma^2}\right), \quad r \geq 0 .
\end{equation}
Equivalently, this step samples a random circumference of radius $r$ on the plane $P$. We denote such a circumference by $C_r$. 
    \item Sample a point in $C_r$ by sampling its angular coordinate $\theta\in (0, \pi],$ from the angular distribution $\rho_{\Theta},$ defined as\footnote{For intuition on why the angular distribution is given by \Cref{eq:polar}, we may envision the three-dimensional analog of this procedure: sample a random plane, $P$, passing through the z-axis, by uniformly sampling a random azimuthal angle $\varphi$ (angle from the $+Y$ axis) in $[0,2\pi)$. Then, sample $(r,\theta)\in P$ by sampling a radius from $\rho_R(r) = \frac{4}{\sigma^3\sqrt{\pi}} r^2 e^{-r^2/\sigma^2}$, and the polar angle $\theta$ from $\rho_{\Theta}(\theta)\propto \sin(\theta)$. Notice that the density of $\rho_{\Theta}$ peaks near $\theta = \pi/2$, i.e. points at radius $r$ near the equator ($\theta=\pi/2$) have a higher probability of being sampled than points at radius $r$ nearer to the poles ($\theta=0$ and $\theta=\pi$). This is because, if we allow $\varphi$ to vary between $(0,2\pi)$ but track the path of the point $(r,\theta)$, it traces out a ring of radius $r \sin(\theta)$ centered on the point $r\cos(\theta)$ on the $+Z$ axis. But the ring of radius $r$ between $\theta$ and $\theta+d\theta$, $Rg(\theta)$, has a larger area at $\theta=\pi/2$ than that at $\theta=0$; in fact, at any $r$, $\rho_{\Theta}$ is uniquely determined by imposing the requirement
\begin{equation}
    \frac{\text{Probability of sampling }\theta_1}{\text{Probability of sampling }\theta_2} = \frac{\text{area of $Rg(\theta_1)$}}{\text{area of $Rg(\theta_2)$} }
\end{equation}
which ensures that we are sampling from the right marginal. But this is nothing but
\begin{equation}
\frac{2\pi r\sin(\theta_1) d\theta}{2\pi r\sin(\theta_2) d\theta} = \frac{\sin(\theta_1)}{\sin(\theta_2)} = \frac{\rho_{\Theta}(\theta_1)}{\rho_{\Theta}(\theta_2)}
\end{equation}
as claimed. Extending this to $l$ dimensions, the path traced out by the point $(r,\theta)$ when one allows $\mathbf\varphi$ to take on all values in $(0,2\pi)^{l-1}$ is simply the set of all points in $l-1$ dimensions that are distance $r\sin(\theta)$ away from the point $r\cos(\theta)$ on the $+Z$ axis; this is the surface area of a $l-1$ dimensional sphere which is proportional to $(r\sin(\theta))^{l-2}$. Consequently
we must have 
\begin{equation}
\frac{\rho_{\Theta}(\theta_1)}{\rho_{\Theta}(\theta_2)} = \frac{\sin(\theta_1)^{l-2}}{\sin(\theta_2)^{l-2}}
\end{equation}
as claimed.}
    \begin{align}\label{eq:polar}
    \rho_{\Theta}^{(P)}(\theta) &\propto \sin^{l-2}(\theta), \quad \theta \in [0,\pi).
\end{align}
\end{enumerate}

The upshot is that, on plane $P$, a random circumference of radius $r$ is likely to be `good for $\calA$' with high probability. Letting $\x(r,\theta) \in P$ denote the point whose angular coordinates are $r,\theta$, we write this more rigorously as the following lemma:
\begin{equation}\label{eq:assumption_2}
    \operatorname{Pr}_{\x \sim \rho_P}\left[ \calA(\x) \text{ succeeds}\right] = \operatorname{Pr}_{r \sim \rho_R^{(P)}}[\operatorname{Pr}_{\theta \sim \rho_{\Theta}^{(P)}}[\calA(\x(r,\theta)) \text{ succeeds}]] \geq 1-\delta,
\end{equation}
and a counting argument similar to that of \Cref{lemma:2} reveals that the success guarantee of $\calA$ on $\rho_P$ proven there, translates into success when $\calA$ is fed with points from a random circumference $C_{r_i}$ where $r_i \sim \rho^{(P)}_R$:
\begin{lemma}[Sampling a good radius with high probability]\label{lemma:3}
If $\operatorname{Pr}_{\x \sim \rho_P}[\calA(x) \text{ succeeds}]\geq 1-\delta_1$, then for any $\delta_2>\delta_1$, 
\begin{equation}\label{eq:guarantee_outer}
    \operatorname{Pr}_{r \sim \rho_{R}^{(P)}}[\operatorname{Pr}_{\theta \sim \rho_{\Theta}^{(P)}}[\calA(\x(r,\theta)) \text{ succeeds}]\geq 1-\delta_2] \geq  1-\delta_1/\delta_2.
\end{equation}
\end{lemma}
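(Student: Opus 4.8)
This is a one-step averaging (Markov-type) argument, structurally identical to the proof of \Cref{lemma:2} but applied one level down: there we passed from the full Gaussian $\calN_l$ to a random plane $P$, and here we pass from the planar marginal $\rho_P$ to a random circumference $C_r$, i.e. a random radius $r$. The only structural input is the radial--angular factorization of $\rho_P$ displayed just before the lemma statement (and justified via \Cref{eq:radial} and \Cref{eq:polar}): sampling $\x\sim\rho_P$ is the same as first drawing $r\sim\rho_R^{(P)}$ and then $\theta\sim\rho_\Theta^{(P)}$, because the marginal of the isotropic Gaussian $\calN_l$ on a linear $2$-plane through the origin is again an isotropic $2$D Gaussian, whose radial part (a scaled $\chi(l)$ law) and angular part ($\propto\sin^{l-2}\theta$) are independent.

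\textbf{Steps.} I would proceed as follows.
\begin{enumerate}
\item Define the per-radius success probability $f(r):=\Pr_{\theta\sim\rho_\Theta^{(P)}}\!\left[\calA(\x(r,\theta))\text{ succeeds}\right]$, a measurable function $[0,\infty)\to[0,1]$.
\item Using the factorization of $\rho_P$, rewrite the hypothesis $\Pr_{\x\sim\rho_P}[\calA(\x)\text{ succeeds}]\ge 1-\delta_1$ as $\mathbb{E}_{r\sim\rho_R^{(P)}}[f(r)]\ge 1-\delta_1$, equivalently $\mathbb{E}_{r\sim\rho_R^{(P)}}[1-f(r)]\le\delta_1$.
\item Apply Markov's inequality to the non-negative random variable $1-f(r)$ with threshold $\delta_2>0$: $\Pr_{r}[\,1-f(r)\ge\delta_2\,]\le \mathbb{E}_r[1-f(r)]/\delta_2\le \delta_1/\delta_2$, hence $\Pr_{r}[\,f(r)\ge 1-\delta_2\,]\ge 1-\delta_1/\delta_2$, which is exactly the claimed bound. (Alternatively, one may copy the bookkeeping of \Cref{lemma:2} verbatim: set $R:=\{r:f(r)\ge 1-\delta_2\}$, let $k:=\Pr_{r\sim\rho_R^{(P)}}[r\in R]$, split $1-\delta_1\le\int f\,d\rho_R^{(P)}$ over $R$ and its complement, bound $f\le 1$ on $R$ and $f<1-\delta_2$ off $R$ to get $1-\delta_1\le k+(1-k)(1-\delta_2)=1-\delta_2+\delta_2 k$, and solve for $k$.)
\end{enumerate}

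\textbf{Main obstacle.} Honestly there is no deep obstacle; the proof is two lines of Markov once the setup is in place. The only point deserving a sentence of care is step 2 --- that $\rho_P$ really is the pushforward of the product measure $\rho_R^{(P)}\otimes\rho_\Theta^{(P)}$ under $(r,\theta)\mapsto\x(r,\theta)$ --- but this is precisely the standard fact, already recorded in the surrounding text, that an isotropic Gaussian restricted to a $2$-plane through the origin has independent radial and angular components. The hypothesis $\delta_2>\delta_1$ is exactly what makes the conclusion $1-\delta_1/\delta_2>0$ non-vacuous. It is worth phrasing the argument in this recursive form, since the identical template is reused immediately afterward to pass from a good circumference $C_{r_i}$ to an individual evaluation point on it.
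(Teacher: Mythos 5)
Your argument is correct and matches what the paper intends: the paper omits the proof of \Cref{lemma:3} precisely because it is ``virtually identical to that of \Cref{lemma:2},'' which is the same Markov/averaging step you carry out. Both routes you give — the direct Markov inequality on $1-f(r)$ and the verbatim transcription of the \Cref{lemma:2} bookkeeping — are equivalent and correct, so this is the paper's proof in all but notation.
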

We omit the proof because it is virtually identical to that of \Cref{lemma:2}. This lemma suggests the following natural strategy: we may recover \(\mathcal{T}_Z(r_i)\) by restricting the degree-\(m\) polynomial \(\mathcal{T}_m\) to \(C_{r_i}\) and interpolating from \(\mathcal{A}\)’s evaluations on that circle. 

Parametrize \(C_{r_i}\) by the polar angle \(\theta\), so the restriction of the bivariate degree-\(m\) polynomial \(\mathcal{T}_C\) to \(C_{r_i}\) is the expression \(\mathcal{T}_C(\cos\theta,\sin\theta)\). Our target is the value at the “North pole’’ (by convention, \(\theta=0\)):
\[
\mathcal{T}_C(\cos 0,\sin 0)=\mathcal{T}_C(1,0).
\]
Using the identity \(\sin^{2}\theta=1-\cos^{2}\theta\), rewrite every even power \(\sin^{2k}\theta\) as \((1-\cos^{2}\theta)^k\). Consequently, each monomial \(\cos^{a}\theta\,\sin^{b}\theta\) with even \(b\) becomes a polynomial in \(\cos\theta\), whereas with odd \(b\) it factors as \(\sin\theta\) times a polynomial in \(\cos\theta\). Therefore there exist univariate polynomials \(A,B\) with \(\deg A,\deg B\le m\) such that
\[
\mathcal{T}_C(\cos\theta,\sin\theta)\;=\;A(\cos\theta)\;+\;\sin\theta\,B(\cos\theta).
\]
To eliminate the \(\sin\theta\)-dependence, average the expression with its reflection \(\theta\mapsto -\theta\):
\begin{equation}\label{eq:symmetrization}
\frac{\mathcal{T}_C(\cos\theta,\sin\theta)+\mathcal{T}_C(\cos\theta,-\sin\theta)}{2}
\;=\;A(\cos\theta),
\end{equation}
because the terms odd in \(\sin\theta\) cancel. Doing the change of variables $x = \cos(\theta)$, define the univariate polynomial \(\mathcal{F}_C:[-1,1]\to\mathbb{R}\) by
\begin{equation}\label{eq:FcTc}
\mathcal{F}_C(x)\;:=\;\frac{\mathcal{T}_C\!\bigl(x,\sqrt{1-x^2}\bigr)+\mathcal{T}_C\!\bigl(x,-\sqrt{1-x^2}\bigr)}{2}
\;=\;A(x).
\end{equation}
(The square roots disappear after the cancellation, so \(\mathcal{F}_C\) is indeed a degree-\(m\) polynomial in \(x\).) By construction,
\[
\mathcal{F}_C(1)\;=\;A(1)\;=\;\mathcal{T}_C(1,0),
\]
so evaluating the original restricted polynomial at the North pole is equivalent to evaluating the univariate degree-\(m\) polynomial \(\mathcal{F}_C\) at \(x=1\).

To estimate \(\mathcal{F}_C\), draw angles \(\theta_j\stackrel{\text{i.i.d.}}{\sim}\rho_{\Theta}\) and set
\begin{equation}\label{eq:XY}
X_j := \cos\theta_j,\qquad
Y_j := \tfrac{1}{2}\Bigl[\mathcal{A}\bigl(\g(r_i,\theta_j)\bigr)+\mathcal{A}\bigl(\g(r_i,-\theta_j)\bigr)\Bigr].
\end{equation}
By \eqref{eq:symmetrization}, \((X_j,Y_j)\) are noisy samples of \(\mathcal{F}_C\). Interpolating a univariate degree-\(m\) polynomial from these samples yields an estimate of \(\mathcal{F}_C(1)=\mathcal{T}_C(1,0) = \calT_Z(r_i)\).

To summarize, the entire worst-to-average-case reduction consists in the following three steps:

\begin{enumerate}
    \item \textbf{Sample a random plane $P$ containing $\g_{worst}$.}
    \item \textbf{Interpolation on the \(z\)-axis (\Cref{alg:1}):} The target polynomial is the restriction to $\calT_m$ to the $Z$-axis, the degree-$m$ univariate polynomial $\calT_Z$. To obtain an estimate of $\calT_Z(\g_{worst})$, we interpolate on the points $\{(\rr_i, y_i)\}_i$ where $\rr_i\in [0,\infty)_Z$, sampled as $\rr_i\sim \rho_R^{(P)}$; and $y_i \approx \calT_m(\rr_i)$.
    \item \textbf{Interpolation on a circumference of radius \(|\rr_i|\) (\Cref{alg:2}):} To obtain $y_i$ at a given point $\rr_i$, we interpolate the target polynomial $\calF_C$ based on the points $(X_j, Y_j)$ given in \Cref{eq:XY}. Here $X_j = \cos(\theta_j)$ where the $\theta_j$ are angular coordinates of points on the circumference $C(r)$; also, $\theta_j\sim \rho_{\Theta}^{(P)}$. 
\end{enumerate} 

\subsection{Analysis of worst-to-average-case reduction}
To establish notation, we lay out the exact steps of the worst-to-average-case reduction as \Cref{alg:1}.

\begin{algorithm}[!htbp]
\caption{Worst-to-average-case reduction\label{alg:1}}
\begin{algorithmic}[1]
\Require A worst-case input Hamiltonian $\mathbf{g}_{\text{worst}} \in \mathbb{R}^l$, an average-case solver $\mathcal{A}$ for the quantity $D(\g)$. We assume $\g_{worst}$ is on the $z$-axis. 
\Ensure An estimate for $D(\g_{worst})$ by interpolating the polynomial $\calT_Z$. 
\State Sample a random plane $P$ containing the origin and $\g_{worst}.$
\State Sample $M_r$ points on the $z$-axis of $P$ from the radial distribution of $\calN_l$, denoted as $\rho_R$ (\Cref{eq:radial}).
\[
 \mathcal{S} = \{r_1, \dots, r_{M_r}\}\in [0,\infty)
\]
\State Construct a set of $\Delta$-separated points $\mathcal{S}_{\Delta} \subseteq \mathcal{S}$ using \Cref{proc:delta_separated}. 
\State For each $r_i\in\mathcal{S}_{\Delta}$, run \Cref{alg:2} to interpolate on the circumference $C(r_i)$ with failure probability $\delta/M_r$, obtaining the output $y_i$. 
\State Using the points $\bigl\{(r_i,y_i)\bigr\}_{i\in \mathcal{S}_{\Delta}},$ run Robust Efficient Berlekamp–Welch (\Cref{thm:REBW}) to interpolate a univariate polynomial $q_Z:[0,\infty)\rightarrow \mathbb{R}$ of degree $m$.
\State Output $q_Z(\mathbf{g}_{worst})$.
\end{algorithmic}
\end{algorithm}

\begin{algorithm}
\caption{Interpolation on a circumference $C(R)$ on a fixed plane \label{alg:2}}
\textbf{Notation.} Let $C(R)$ be the circumference consisting of all points in plane $P$ that are radius $R$ from the origin, and let plane $P\subset\mathbb{R}^l$ be equipped with polar coordinates $(r,\theta)$.

\begin{algorithmic}[1]
\Require An average-case solver $\mathcal{A}$ for the quantity $D(\g)$, radius $R$, failure probability $\delta$.
\Ensure An estimate for $D\bigl(z\bigr)$, where $z= \g(R,0)$ is the “North Pole’’ of $C$, by interpolating the polynomial $\calT_C$.
\State Sample $M_C$ points on $C(R)$ by drawing angular coordinates $\theta_j \overset{i.i.d.}{\sim} \rho_{\Theta}$: 
\begin{equation}
    \mathcal{T} := \{\theta_{1},\ldots,\theta_{M_C}\} \in (0,\pi].
\end{equation}
\State For each $j\in [M_C]$ let
\begin{align}
    X_j &:= \cos(\theta_j)\\
    Y_j &:= \mathcal{A}(\g\!\bigl(R,\theta_{j}\bigr))+\mathcal{A}(\g\!\bigl(R,-\theta_{j}\bigr))
\end{align}
and let $\calT':=\{X_1,\ldots X_{M_C}\}.$
\State Construct a set of $\Delta$-separated points in $[-1,1]$, $\mathcal{T}_{\Delta} \subseteq \calT'$, using the procedure in \Cref{proc:delta_separated}. 
\State Using the pairs $\bigl\{(X_{j},Y_{j})\bigr\}_{j\in \mathcal{T}_{\Delta}},$ run Robust Efficient Berlekamp–Welch
        (\Cref{thm:REBW}) to interpolate a univariate polynomial $q_{C}:[-1,1]\rightarrow [0,1]$ of degree $m$.
\State Output $q_{C}(1)$.
\end{algorithmic}
\end{algorithm}

The reduction requires a procedure to construct a set of $\Delta$-separated points from a sample $\mathcal{S}$ of points in $[a,b]$, which we give in \Cref{proc:delta_separated}. 
\begin{procedure}[Constructing a $\Delta$-separated subset]\label{proc:delta_separated} For an interval $[a,b]\subset\mathbb{R}$, upon input a sample set $\mathcal{S}\subset [a,b]$, the following procedure outputs a $\Delta$-separated subset $\mathcal{S}'\subseteq\mathcal{S}$:
\smallskip
\begin{enumerate}
\item Construct a $\Delta$-separated family of intervals $\mathcal{B}=(B_1,\dots,B_m)$ with $B_i\subset [a,b]$.
\item For each $i$, set $S_i:=\mathcal S\cap B_i$ and, if $S_i\neq\varnothing$, take $s_i$ to be an arbitrary point in $S_i.$
\item Return $\mathcal{S}':=\{\,s_i:\ S_i\neq\varnothing\,\}$.
\end{enumerate}
\end{procedure}

\begin{lemma}[Guarantees on \Cref{proc:delta_separated}]\label{lem:delta_separated}
	Let $\mathcal{B}=(B_1,\dots,B_m)$ be a $\Delta$–separated set of bins. Denote by $p_{\min}\;=\;\min_{B\in\mathcal{B}}\;
		            \Pr_{X\sim D}\!\bigl[X\in B\bigr]$
	the smallest bin-mass under the distribution
	$D : [a,b]\to[0,1]$. If
	\[
		k \geq
		\frac{1}{p_{\min}}
		\ln\!\Bigl(\frac{1}{(1-c)\delta}\Bigr)
	\]
	samples are drawn independently from $D$, then with probability at least
	$1-\delta$, at least $cm$ of the bins contain at least one sample. 
\end{lemma}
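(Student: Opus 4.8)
The plan is a short first-moment (Markov) argument on the number of empty bins. First I would fix a single bin $B_i$ and bound the probability that none of the $k$ i.i.d.\ draws $X_1,\dots,X_k\sim D$ lands in it. Writing $p_i:=\Pr_{X\sim D}[X\in B_i]\ge p_{\min}$, independence gives $\Pr[B_i\cap\{X_1,\dots,X_k\}=\varnothing]=(1-p_i)^k\le(1-p_{\min})^k\le e^{-p_{\min}k}$, where the last inequality uses $1-x\le e^{-x}$. Substituting the hypothesis $k\ge p_{\min}^{-1}\ln\!\big(1/((1-c)\delta)\big)$ makes this at most $(1-c)\delta$.

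Next I would turn this per-bin bound into a statement about the count $N:=\#\{\, i\in[m] : B_i\cap\{X_1,\dots,X_k\}=\varnothing \,\}$ of empty bins. By linearity of expectation, $\mathbb{E}[N]=\sum_{i=1}^m\Pr[B_i\text{ empty}]\le m(1-c)\delta$. Markov's inequality applied to the nonnegative integer-valued random variable $N$ then yields $\Pr[N\ge(1-c)m]\le \mathbb{E}[N]/((1-c)m)\le \delta$. Taking complements, with probability at least $1-\delta$ we have $N<(1-c)m$, i.e.\ strictly more than $cm$ of the $m$ bins — in particular at least $cm$ of them — contain at least one sample, which is exactly the claim. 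Feeding this into \Cref{proc:delta_separated}, each such occupied bin yields one representative $s_i$, so the returned set $\mathcal{S}'$ has size at least $cm$.

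I do not expect a genuine obstacle here. The two points to be careful about are (i) that $p_{\min}>0$, which is implicit in how the bin family $\mathcal{B}$ is constructed and is what makes the stated $k$ finite; and (ii) the choice of Markov over a naive union bound. A union bound would force \emph{all} $m$ bins to be hit and cost $k=\Omega\!\big(p_{\min}^{-1}\log(m/\delta)\big)$; since the lemma only requires a $c$-fraction of the bins to be occupied, the first-moment method suffices and gives the stated sample complexity that is independent of $m$. Finally, note that the $\Delta$-separation of $\mathcal{B}$ plays no role in this counting lemma itself — disjointness of the bins is all that is (implicitly) used — and $\Delta$-separation matters only downstream, to ensure the representatives $s_i$ are themselves $\Delta$-separated as needed by \Cref{thm:REBW}.
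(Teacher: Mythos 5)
Your proof is correct and follows essentially the same route as the paper's: a first-moment (Markov) bound on the number of empty bins, with $1-p_{\min}\le e^{-p_{\min}}$ feeding into the stated sample-complexity threshold. In fact, your bookkeeping is the internally consistent one — the paper's writeup appears to have swapped $c$ and $1-c$ in a couple of places (it bounds $\Pr[X>cm]$ while deriving the condition $c\le\tfrac{1}{\delta}(1-p_{\min})^k$, which points the wrong way), whereas your choice of threshold $(1-c)m$ for the empty-bin count matches the lemma's hypothesis $k\ge p_{\min}^{-1}\ln\!\bigl(1/((1-c)\delta)\bigr)$ and its conclusion that at least $cm$ bins are occupied.
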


\begin{proof}
    In \Cref{proc:delta_separated}, $|\mathcal{S}'|$ is exactly the number of occupied bins in $\mathcal{B}$.

    Let $k = |\mathcal{S}|$. We will now compute how large $k$ must be to ensure that $|\mathcal{S}'| > (1-c)m$, i.e. there are at most $cm$ unoccupied bins in $\mathcal{B}.$ With $k$ samples, the probability that any particular $B_i$ is unoccupied is at most $(1-p_{\min})^k.$ Define the random variable $X_i = \mathbb{I}(B_i \text{ is unoccupied.})$ The number of unoccupied $B_i$'s is then $X := \sum_{i=1}^m X_i$, and its expectation is
    \begin{equation}
        \mathbb{E}[X] \leq m(1-p_{\min})^k
    \end{equation}
    by a union bound. 
    For any $k$ such that $c \leq \frac{1}{\delta}(1-p_{\min})^k$,
    \begin{equation}
        \Pr[X > cm] \leq \Pr[X > \frac{m}{\delta}(1-p_{\min})^k] \leq \Pr[X > \frac{1}{\delta}\mathbb{E}[X]]\leq \delta
    \end{equation}
    by Markov's inequality. Noting that $\Pr[\text{at least $(1-c)m$ bins in $\mathcal{B}$ are occupied}] = 1-\Pr[X > cm]$ completes the proof.
\end{proof}

Throughout this entire section, we will often find ourselves needing to bound ratios of Gamma functions. Gautschi's inequality \cite{Gautschi1959GammaIncompleteGammaInequalities} will come in very handy here:
\begin{theorem}[Gautschi's inequality]\label{thm:Gautschi}
    Let $x$ be a positive real number, and let $s \in(0,1)$. Then,

$$
x^{1-s}<\frac{\Gamma(x+1)}{\Gamma(x+s)}<(x+1)^{1-s}
$$
\end{theorem}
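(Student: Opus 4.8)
The plan is to derive both inequalities from a single tool: Hölder's inequality applied to Euler's integral representation $\Gamma(z)=\int_0^{\infty} t^{z-1}e^{-t}\,dt$ (valid for $z>0$), with conjugate exponents $p=1/s$ and $q=1/(1-s)$, which lie in $(1,\infty)$ precisely because $s\in(0,1)$, together with the functional equation $\Gamma(z+1)=z\Gamma(z)$. No log-convexity machinery is needed, although an alternative route via the log-convexity of $\Gamma$ (Bohr--Mollerup) applied to the triples $(x,\,x+s,\,x+1)$ and $(x+s,\,x+1,\,x+s+1)$ would yield the same estimates at the cost of a little more bookkeeping with the convexity slopes.

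For the \emph{lower bound}, I would factor the integrand of $\Gamma(x+s)$ as
\[
t^{x+s-1}e^{-t}\;=\;\bigl(t^{x}e^{-t}\bigr)^{s}\,\bigl(t^{x-1}e^{-t}\bigr)^{1-s},
\]
which one verifies by matching the exponent of $t$ (namely $s x+(1-s)(x-1)=x+s-1$) and of $e^{-t}$ (namely $s+(1-s)=1$). Hölder then gives $\Gamma(x+s)\le \Gamma(x+1)^{s}\,\Gamma(x)^{1-s}$; using $\Gamma(x+1)=x\Gamma(x)$ the right-hand side equals $x^{s}\Gamma(x)=x^{s-1}\Gamma(x+1)$, and rearranging yields $\Gamma(x+1)/\Gamma(x+s)\ge x^{1-s}$. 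Strictness follows because the two functions $t\mapsto t^{x}e^{-t}$ and $t\mapsto t^{x-1}e^{-t}$ are not proportional, so the equality case of Hölder is excluded.

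For the \emph{upper bound}, I would instead factor the integrand of $\Gamma(x+1)$ as
\[
t^{x}e^{-t}\;=\;\bigl(t^{x+s-1}e^{-t}\bigr)^{s}\,\bigl(t^{x+s}e^{-t}\bigr)^{1-s},
\]
again checked by matching exponents. Hölder now gives $\Gamma(x+1)\le \Gamma(x+s)^{s}\,\Gamma(x+s+1)^{1-s}$, and $\Gamma(x+s+1)=(x+s)\Gamma(x+s)$ collapses the right-hand side to $(x+s)^{1-s}\Gamma(x+s)$. Hence $\Gamma(x+1)/\Gamma(x+s)\le (x+s)^{1-s}$; since $0<x+s<x+1$ and $1-s>0$, and $t\mapsto t^{1-s}$ is strictly increasing, this is in turn strictly below $(x+1)^{1-s}$, giving the stated (slightly weaker) bound with strict inequality.

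I do not expect a genuine obstacle here, as this is a classical estimate; the only point requiring care is the choice of the two integrand factorizations so that, after invoking $\Gamma(z+1)=z\Gamma(z)$, the Gamma factors cancel and leave exactly a power of $x$ (resp.\ of $x+s$) — picking the wrong shift inside the Hölder factors leaves a stray ratio of Gammas that does not simplify. Keeping track of which argument is being "raised by one" is thus the whole content of the calculation.
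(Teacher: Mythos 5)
Your proposal is correct, and in fact it proves slightly more than what is stated: your upper bound argument yields the sharper estimate $\Gamma(x+1)/\Gamma(x+s)\le (x+s)^{1-s}$, from which the stated $(x+1)^{1-s}$ follows strictly since $t\mapsto t^{1-s}$ is increasing. The paper itself offers no proof of this theorem at all --- it is quoted as a classical result with a citation to Gautschi's 1959 paper --- so there is nothing to compare against; your Hölder-based derivation (equivalently, the log-convexity of $\Gamma$ in the Bohr--Mollerup sense, applied to the interpolation points $x<x+s<x+1$ for the lower bound and $x+s<x+1<x+s+1$ for the upper bound) is the standard self-contained route and all the exponent bookkeeping checks out: $sx+(1-s)(x-1)=x+s-1$ and $s(x+s-1)+(1-s)(x+s)=x$, the conjugate exponents $1/s$ and $1/(1-s)$ are admissible precisely because $s\in(0,1)$, and strictness in the lower bound is correctly justified by the non-proportionality of $t^{x}e^{-t}$ and $t^{x-1}e^{-t}$, which rules out the equality case of Hölder.
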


\subsubsection{Analysis of \Cref{alg:2} (Interpolation on a circumference)}
Since we will be using \Cref{proc:delta_separated} to construct a $\Delta$-separated sample out of the sampled points, we first show how to construct a $\Delta$-separated set of intervals. 

In the following, recall that $X(\theta) = \cos(\theta)$; let $\Theta: [-1,1]\rightarrow [0,\pi)$ be the inverse map, $\Theta(X) = \cos^{-1}(X)$. It will be more convenient to work directly with the effective distribution over the $X(\theta)$ induced by drawing $\theta \sim  \sin^{l-2}(\theta)$, which we will call $D: [-1,1]\rightarrow [0,1].$ We have \(\sin\theta=\sqrt{1-X^2}\) and \(\dfrac{d\theta}{dX}=-\dfrac{1}{\sqrt{1-X^2}}\). Hence
\[
D(X)=\rho_\Theta(\cos^{-1} (X))\,\Bigl|\tfrac{d\theta}{dX}\Bigr|
\;\propto\; (1-X^2)^{\frac{l-2}{2}}\,(1-X^2)^{-1/2}
=(1-X^2)^{\frac{l-3}{2}}.
\]
Normalizing (for \(l>1\)) gives
\begin{equation}\label{eq:D}
D(X)=\frac{\Gamma\!\left(\tfrac{l}{2}\right)}{\sqrt{\pi}\,\Gamma\!\left(\tfrac{l-1}{2}\right)}\,(1-X^2)^{\frac{l-3}{2}} = C_l (1-X^2)^{\frac{l-3}{2}},\qquad -1\le X\le 1.
\end{equation}
In particular $D(X)dX = \rho_{\Theta}(\theta)d\theta$ for $X=\cos(\theta)$, and therefore for any event $E$, $\Pr_{\theta\sim \rho_{\Theta}}[\theta\in E] = \Pr_{X\sim D}[\Theta(X) \in E].$ By Gautschi's inequality \Cref{thm:Gautschi}, we are also able to bound the ratio of $\Gamma$ functions in \Cref{eq:D}, resulting in a bound for the prefactor $C_l$:
\begin{equation}\label{eq:boundgamma}
    \sqrt{\frac{l-2}{2 \pi}}<C_l<\sqrt{\frac{l}{2 \pi}}.
\end{equation}
\begin{lemma}[Constructing a $\Delta$-separated set of intervals]\label{lem:bins_circ}
Given $l>3$ and a desired number of bins $B = \Theta(l)$, and let the distribution $D$ be as given in \Cref{eq:D}. 

We may construct a $\Delta = \Theta(l^{-3/2})$-separated set of bins on the support of $D$ by partitioning the interval $[-r,r]$ into bins of width $\Delta$, i.e.
\[
B_k:=\bigl[-r+2k\Delta,\,-r+(2k+1)\Delta\bigr]\subset[-r,r],\quad k=0,1,\dots,B-1,
\]
for $r= \Theta\left(\frac{1}{\sqrt{l}}\right).$ Let $\mathcal{B}:=(B_0,\dots,B_{B-1})$. Then $p_{min} = \min_{B\in\mathcal{B}}\;
		            \Pr_{X\sim D}\!\bigl[X\in B\bigr] = \Theta(1/l).$
\end{lemma}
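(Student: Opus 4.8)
The plan is to exploit the fact that the window $[-r,r]$ with $r=\Theta(1/\sqrt l)$ is exactly the scale on which the density $D$ is essentially flat: on this window the factor $(1-X^2)^{(l-3)/2}$ is pinched between two absolute constants, so $D$ itself is $\Theta(\sqrt l)$ pointwise, and a bin of width $\Delta=\Theta(l^{-3/2})$ therefore picks up mass $\Theta(\sqrt l)\cdot\Theta(l^{-3/2})=\Theta(1/l)$. First I would fix the hidden constants so that the construction is legal: writing $B=\beta l$, $\Delta=d\,l^{-3/2}$, $r=\rho\,l^{-1/2}$, the union of the $B$ bins together with the $B$ gaps between them has total length $2B\Delta=2\beta d\,l^{-1/2}$, so choosing $\rho\ge\beta d$ guarantees $2B\Delta\le 2r$ and hence $B_k\subseteq[-r,r]\subseteq[-1,1]$ for all $l$ large enough. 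The $\Delta$-separation of $\mathcal{B}$ is immediate from the construction, since consecutive bins $B_{k},B_{k+1}$ are separated by a gap of width exactly $\Delta$.

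Next I would bound the integrand on the window. For $X\in[-r,r]$ we have $0\le X^2\le r^2=\rho^2/l$, giving the trivial upper bound $(1-X^2)^{(l-3)/2}\le 1$. For the lower bound, using $\ln(1-u)\ge -2u$ for $0\le u\le \tfrac12$ (valid here since $r^2\le\tfrac12$ once $l$ is large),
\[
(1-X^2)^{(l-3)/2}\;\ge\;(1-r^2)^{(l-3)/2}\;=\;\exp\!\Bigl(\tfrac{l-3}{2}\ln(1-r^2)\Bigr)\;\ge\;\exp\bigl(-(l-3)r^2\bigr)\;\ge\;e^{-\rho^2}.
\]
Combining this with the Gautschi bound $\sqrt{(l-2)/(2\pi)}<C_l<\sqrt{l/(2\pi)}$ from \eqref{eq:boundgamma} yields
\[
e^{-\rho^2}\sqrt{\tfrac{l-2}{2\pi}}\;\le\;D(X)\;\le\;\sqrt{\tfrac{l}{2\pi}}\qquad\text{for all }X\in[-r,r],
\]
i.e.\ $D(X)=\Theta(\sqrt l)$ uniformly over the window.

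Finally, for any bin $B_k$, since $B_k\subseteq[-r,r]$ and $|B_k|=\Delta$,
\[
\Delta\cdot\min_{X\in[-r,r]}D(X)\;\le\;\Pr_{X\sim D}[X\in B_k]=\int_{B_k}D(X)\,dX\;\le\;\Delta\cdot\max_{X\in[-r,r]}D(X),
\]
and inserting the two-sided bound on $D$ together with $\Delta=\Theta(l^{-3/2})$ gives $\Pr[X\in B_k]=\Theta(l^{-3/2})\cdot\Theta(\sqrt l)=\Theta(1/l)$, with constants independent of $k$; taking the minimum over $k$ gives $p_{\min}=\Theta(1/l)$, as claimed.

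I do not expect a genuine obstacle here: the estimate $(1-X^2)^{(l-3)/2}=\Theta(1)$ on a $\Theta(1/\sqrt l)$-window is the standard ``Gaussian core'' phenomenon for the sphere and is entirely elementary. The only point that needs a little care is the joint choice of the hidden constants $\beta$ (number of bins), $d$ (bin width), and $\rho$ (window half-length) so that the $B$ bins genuinely fit inside $[-r,r]\subseteq[-1,1]$ while the resulting bin masses are simultaneously bounded above and below by constant multiples of $1/l$; the computation above shows any choice with $\rho\ge\beta d$ works.
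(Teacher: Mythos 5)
Your proof is correct and follows essentially the same route as the paper: both bound $D$ pointwise on the window $[-r,r]$ (you via the elementary $\ln(1-u)\ge -2u$ pinching, the paper via a second-order Taylor expansion of $(1-1/l)^{(l-3)/2}$), invoke the Gautschi bound from \eqref{eq:boundgamma} for the normalizing constant $C_l$, and multiply $D(X)=\Theta(\sqrt{l})$ by the bin width $\Delta=\Theta(l^{-3/2})$ to get $\Theta(1/l)$. Your version is slightly more careful about making the hidden constants $\beta,d,\rho$ compatible so the bins actually fit in $[-r,r]$ and about stating the two-sided bound needed for a genuine $\Theta(\cdot)$, both of which the paper elides, but there is no substantive difference in method.
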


\begin{proof}
    Fix $r = \Theta(1/\sqrt{l})$ and $\Delta = l^{-3/2}$. The number of bins is $B \approx \frac{2r-\Delta}{2\Delta} = \Theta(r/\Delta)= \Theta(l)$ as stated.
    
    Since we have assumed $l>3,$ in this regime $D$ decreases with $|X|$ and attains its minimum value of $D(r) = C_l (1-r^2)^{\frac{l-3}{2}}$ at the endpoints $-r$, $r$. We then compute the minimum probability mass in any bin using
    \begin{equation}
    \left(1-\frac{1}{l}\right)^{\frac{l-3}{2}}=\exp \left(\frac{l-3}{2} \log \left(1-\frac{1}{l}\right)\right)=e^{-1 / 2}\left(1+\frac{5}{4 l}+O\left(l^{-2}\right)\right)
    \end{equation}
    and \Cref{eq:boundgamma} to be
    \begin{equation}
       D(r)\Delta \sim \sqrt{l} \frac{1}{l^{3/2}} = \frac{1}{l},
    \end{equation}
    as stated.
\end{proof}

\begin{lemma}[Properties of $\calT_{{\Delta}}$]\label{lem:delta_separated_andgood_c} 
Let $\epsilon, \delta' \in (0,1]$. Suppose \Cref{alg:2} is such that
\(
\Pr_{\theta \sim \rho_{\Theta}^{(P)}}[\g(r,\theta) \text{ is good}] > 1-\delta
\). Assume $\delta<\frac{1}{1000}$. Given the samples
\(
\mathcal{T}=\{\theta_j\}_{j=1}^{k}
\stackrel{\text{i.i.d.}}{\sim}\rho_{\Theta}
\)
where
\begin{equation}
    k = O\left(l\,\log\frac{10}{\delta'}\right),
\end{equation}
\Cref{proc:delta_separated}, using the bins $\calB$ from \Cref{lem:bins_circ}, returns a $\Delta = O(l^{-3/2})$-separated sample $\mathcal{T}_{\Delta} = (X_j = \cos(\theta_j))_j$ such that, with probability at least \(1-2\delta'\), 
\begin{itemize}
    \item $.9B \leq |\calT_{\Delta}| \leq B$ where $B=\Theta(l)$;
    \item At most $(.21+\eps)B$ points in $\mathcal{T}_{\Delta}$ correspond to bad $\theta_j.$
\end{itemize}
\end{lemma}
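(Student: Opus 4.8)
The plan is to split the claim into its two bullets, prove each as a high-probability event, and close with a union bound. For the size bound $0.9B\le|\calT_\Delta|\le B$ I would apply the bin-occupancy guarantee \Cref{lem:delta_separated} directly to the bin family $\calB$ of \Cref{lem:bins_circ}. Recall from that lemma that $\calB$ consists of $B=\Theta(l)$ bins, each of mass at least $p_{\min}=\Theta(1/l)$ under the distribution $D$ of \Cref{eq:D}, and that $D$ is precisely the law of $X=\cos\theta$ when $\theta\sim\rho_\Theta$; hence drawing $\theta_j\sim\rho_\Theta$ and recording $X_j=\cos\theta_j$ is the same as sampling $X_j\sim D$. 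Taking $c=0.9$ and failure probability $\delta'$ in \Cref{lem:delta_separated}, the number of i.i.d.\ draws needed so that at least $0.9B$ bins are occupied with probability $\ge 1-\delta'$ is $k\ge p_{\min}^{-1}\ln\tfrac{1}{(1-c)\delta'}=\Theta\!\left(l\log\tfrac{10}{\delta'}\right)$, the sample count in the statement. Since \Cref{proc:delta_separated} keeps exactly one point per occupied bin, this yields $|\calT_\Delta|\ge 0.9B$, and $|\calT_\Delta|\le B$ holds automatically since there are only $B$ bins.

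For the bad-point bound, observe that whether $\g(r,\theta)$ is good is a fixed predicate of $\theta$ (for a fixed run of $\calA$), so along the i.i.d.\ draws $\theta_1,\dots,\theta_k\sim\rho_\Theta$ the indicators $\mathbf{1}[\theta_j\text{ bad}]$ are i.i.d.\ Bernoulli$(\beta)$ with $\beta:=\Pr_{\theta\sim\rho_\Theta}[\g(r,\theta)\text{ bad}]<\delta<\tfrac{1}{1000}$. A point retained in $\calT_\Delta$ can be bad only if its bin received at least one bad draw, so the number of bad points in $\calT_\Delta$ is at most the total number of bad draws $Z\sim\mathrm{Bin}(k,\beta)$. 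A Chernoff bound gives $Z\le 2\max\{k\delta,\,C\log(1/\delta')\}$ with probability $\ge 1-\delta'$ for a universal constant $C$, and since $k=\Theta(l\log\tfrac{10}{\delta'})$ makes the first term dominate for $l$ large, this is at most $2k\delta$. Both $B=\Theta(l)$ and the constant hidden in $k=\Theta(l\log\tfrac{10}{\delta'})$ are governed by the choices of $r$ and $\Delta$ fixed in \Cref{lem:bins_circ}; balancing them and using $\delta<\tfrac{1}{1000}$ to buy room, so that $2k\delta\le 0.21\,B$, gives the second bullet, with the free parameter $\epsilon$ covering lower-order slack.

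A union bound over the two failure events of probability $\le\delta'$ each delivers the overall probability $1-2\delta'$, and $\Delta$-separation of $\calT_\Delta$ with $\Delta=\Theta(l^{-3/2})$ is inherited verbatim from \Cref{lem:bins_circ}. I expect the second bullet to be the real obstacle: because \Cref{proc:delta_separated} keeps an \emph{arbitrary} point from each occupied bin, one cannot argue that a bin dominated by good samples contributes a good representative, and is forced to bound the bad count by the crude quantity ``number of bad draws'', whose expectation $k\beta=\Theta(\delta\,l\log\tfrac{10}{\delta'})$ must be pushed below the target $0.21\,B$ --- the largest bad fraction that the Robust Efficient Berlekamp--Welch step of \Cref{alg:2} can tolerate once $\ge 0.9B$ of the interpolation nodes survive, cf.~\Cref{thm:REBW}. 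This is exactly where the hypothesis $\delta<\tfrac{1}{1000}$ is used, and the delicate point is to confirm that the $\log\tfrac{10}{\delta'}$ factor carried by $k$ does not swamp the bin budget $B$ in the parameter range the reduction actually invokes.
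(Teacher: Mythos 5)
Your first bullet is handled correctly and matches the paper: you identify $D$ as the law of $X=\cos\theta$, invoke \Cref{lem:delta_separated} with $c=0.9$, and obtain $0.9B\le|\calT_\Delta|\le B$ with probability $\ge 1-\delta'$ from $k=\Theta(l\log(10/\delta'))$ samples. The $\Delta$-separation is indeed inherited from \Cref{lem:bins_circ}.

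The second bullet has a genuine gap that you correctly anticipate but do not resolve, and in fact cannot resolve within your approach. You bound the number of bad points retained in $\calT_\Delta$ by the \emph{total} number of bad draws in $\mathcal{T}$, i.e.\ by $Z\sim\mathrm{Bin}(k,\beta)$ with $\beta<\delta$, and then need $2k\delta\le 0.21B$. But $k=\Theta(l\log(10/\delta'))$ while $B=\Theta(l)$, so this forces $\delta\log(10/\delta')=O(1)$ — which fails whenever $\delta'$ is small and $\delta$ is a fixed constant, even one as small as $1/1000$. The lemma claims the bound $(0.21+\epsilon)B$ uniformly for all $\delta<1/1000$ and all $\delta'\in(0,1]$, so your bound is too weak: the $\log(10/\delta')$ factor you worry about in your last sentence really does swamp the budget. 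The root cause is that ``bad retained points $\le$ bad draws'' double-counts: many bad draws can fall in the same bin, but \Cref{proc:delta_separated} retains only one point per bin.

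The paper's proof avoids this entirely by reasoning bin-by-bin, not draw-by-draw. It defines a bin $B_i$ to be \emph{bad} when $\Pr_{X\sim D}[E(X)\wedge X\in B_i]\ge 100\delta/B$; Markov's inequality gives $|\calB_{\mathrm{bad}}|\le 0.01B$. For each \emph{good} bin, the conditional probability that a draw falling in it is bad is at most $\frac{100\delta/B}{p_{\min}}\le 200\delta\le 0.2$. Summing contributions gives an expected bad count $\le 0.21B$, independent of $k$ and $\delta'$, and a Chernoff bound (using conditional independence across occupied bins) supplies the concentration. This is the structure you need; your observation that the procedure keeps an ``arbitrary'' point is a fair quibble about the wording of \Cref{proc:delta_separated} (if adversarial, the conditional probability argument would not apply directly), but the paper's intended reading — picking, say, the first or a uniformly random sample from each occupied bin, so the retained point is distributed as $D(\cdot\mid B_i)$ — makes the bin-by-bin analysis go through. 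That benign interpretation also dissolves your claim that the crude ``count bad draws'' bound is forced; it is not, and the sharper analysis is both available and necessary.
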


\begin{proof}
Let $B$ be the number of bins in $\calB.$ We first prove the lower bound on the size of $\calT_{\Delta}$ output by \Cref{proc:delta_separated}. By \Cref{lem:delta_separated} and \Cref{lem:bins_circ}, if 
\begin{equation}
    k = O\left(l\,\log\frac{10}{\delta'}\right)
\end{equation}
points are sampled initially from $D$, then except with probability at most $\delta'$, at least a fraction 0.9 the bins will be occupied. Denote the set of occupied bins as $\calB_O$. 
Since exactly one sample from each occupied bin in $\mathcal B$ is put into $\calT_{\Delta},$ we have just proven that
\begin{equation}\label{eq:mbounds}
0.9B \leq |\calB_O| =|\mathcal{T}_{\Delta}|  \leq B.
\end{equation}
Condition on this. 

Next, we upper-bound the number of bad points in $\mathcal{T}_{\Delta}$. The first step is to upper-bound the number of bad bins (bins that are exceptionally likely to produce a bad point). For a point $r\in B_i$, define the event \(E(X):=\mathbb{I}(X \text{ is not good})\). Define the bad-bin set within $\mathcal{B}$:
\[
\mathcal{B}_{bad}:=\Bigl\{\,B_i:\ \Pr_{X\sim D}\bigl(E(X)\wedge X\in B_i\bigr)\ \ge\ \frac{100\delta}{B} \Bigr\}
\]
Then 
\begin{equation}
|\mathcal{B}_{bad}|\cdot \frac{100\delta}{B}
\;\le\;\sum_{i=1}^{|\mathcal{B}|}\Pr(E\wedge X\in B_i)
\;=\;\Pr(E)\;\le\;\delta,
\end{equation}
hence we may upper bound the total number of bad bins as 
\begin{equation}\label{eq:badbin_number_1}
    |\mathcal{B}_{bad}|\le \frac{B}{100} = 0.01B.
\end{equation} 
Also, for any bin $\calB_i \in \calB_{good}$ that is a good bin,
\begin{align}\label{eq:goodbin_mass_1}
    D(E(X)|X\in B_i) &\leq \frac{100\delta/B}{p_{min}} = \frac{100\delta}{\Theta(1)} \leq 200\delta\leq 0.20.
\end{align}
Here we have used that $B = \Theta(l)$ and the assumption $\delta<1/1000$.

We may finally complete the argument to upper-bound the number of bad points in $\mathcal{T}_{\Delta}$. Recall that one point is output per occupied bin. Then, defining the random variables $\{X_i\}$ where
\begin{equation}
    X_i := \mathbb{I}(\text{bad point is output from }\calB_i),
\end{equation}
$X_i$ and $X_j$ are independent random variables conditioned on both $\calB_i$ and $\calB_j$ being occupied. By linearity of expectation, the expected number of bad points in $\calT_{\Delta}$ is then
\begin{align}
    \sum_{B_i \in \mathcal{B}_O} \mathbb{E}[X_i] &= \sum_{B_i \in \mathcal{B}_{good} \cap \mathcal{B}_O  }\mathbb{E}[X_i] + \sum_{B_i \in \mathcal{B}_{bad} \cap \mathcal{B}_O  }\mathbb{E}[X_i]\\
    &\leq \sum_{B_i \in \mathcal{B}_{good} \cap \mathcal{B}_O  }\Pr[\text{output bad point from }\calB_i] + |\calB_{bad}|\\
    &\leq  0.20 B + 0.01B = 0.21 B.
\end{align}
where in the last step, to bound the first term we have coarsely upper-bounded $|\mathcal{B}_{good} \cap \mathcal{B}_O|\leq B$ and used \Cref{eq:goodbin_mass_1} and \Cref{eq:badbin_number_1}.

Finally, by a Chernoff bound due to the independence of $X_i$ and $X_j$, the number of bad points in $\calT_{\Delta}$, which is also $\sum_{i\in \mathcal{B}_O} X_i$, concentrates around its mean; namely
\begin{equation}
    \Pr\left[\left|\sum_{i\in \mathcal{B}_O} X_i - \mathbb{E}[\sum_{i\in \mathcal{B}_O} X_i]\right| \geq \eps |\calB_O|\right] \leq 2\exp(-2\eps^2 |\calB_O|) \leq 2\exp(-1.8 B\eps^2),
\end{equation}
where the last inequality follows from \Cref{eq:mbounds}.
The right-hand-side is at most $\delta'$ as long as $B\geq \Theta(\frac{1}{\eps^2}\log\frac{2}{\delta'}),$ which it is. Thus, with probability at most $\delta'$, the number of bad points in $\calT_{\Delta}$ can be bounded as
\begin{equation}
    0.21 B+\eps |\calB_O| \leq (0.21+\eps) B.
\end{equation}
\end{proof}

\begin{theorem}[Guarantees for interpolation on a circumference (\Cref{alg:2})\label{thm:circumference_interpolation}]
Fix $r\in (0,1]$. Assume access to an average-case solver $\calA$ in the sense of \Cref{eq:assumption} and assume that $r$ is a $\delta$-good radius for $\calA$, i.e.
\begin{equation}\label{eq:goodradius}
    \Pr_{\theta \sim \rho_{\Theta}^{(P)}}[|\mathcal{A}(\mathbf{g}(r,\theta))-D(\mathbf{g}(r,\theta))| \leq \epsilon_{\mathcal{A}}] \geq 1-\delta.
\end{equation}
Further assume that $\delta< \frac{1}{2000}.$
Let $\hat{y} = q_C(1)$ be the output of \Cref{alg:2}. With probability at least $1-O(\delta')$, $\hat{y}$ satisfies
    \begin{equation}
        |\hat{y}- D(\g(r,0))| \leq 2^{\Theta(n\log n)}\eps_{\calA},
    \end{equation}
    and the number of samples needed is 
    \begin{equation}
    M_C = O\left(l\,\log\frac{1}{\delta'}\right).
\end{equation}
\end{theorem}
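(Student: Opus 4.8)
The plan is to chain the structural facts about the symmetrized polynomial $\calF_C$ from the ``slicing and dicing'' discussion with the three workhorses already in place: the counting lemma for $\Delta$-separated samples (\Cref{lem:delta_separated_andgood_c}), Robust Efficient Berlekamp--Welch (\Cref{thm:REBW}), and the Discrete Remez inequality (\Cref{lem:Remez}). First I would pin down exactly what \Cref{alg:2} is interpolating. By the symmetrization identity \Cref{eq:symmetrization} and the definition \Cref{eq:FcTc}, the values $Y_j$ of \Cref{eq:XY} are noisy evaluations at $X_j=\cos\theta_j$ of the \emph{univariate} degree-$m$ polynomial $\calF_C$, with $\calF_C(1)=\calT_C(1,0)=\calT_Z(r)$; by \Cref{thm:polyapproxguarantees} (with $t=1$ and $m=\Theta(n)$) this is in turn within the Taylor error $\eps\le\eps_\calA$ of $D(\g(r,0))$. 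When both $\g(r,\theta_j)$ and $\g(r,-\theta_j)$ are good for $\calA$, we get $|Y_j-\calF_C(X_j)|\le O(\eps_\calA)$. The one subtlety here is that goodness at the reflected angle $-\theta_j$ is also needed: since the angular marginal $\rho_{\Theta}$ is reflection-symmetric on $P$, goodness at $-\theta_j$ also holds with probability at least $1-\delta$, so the ``paired'' failure probability is at most $2\delta<1/1000$ by a union bound --- this is exactly why the theorem hypothesis tightens to $\delta<1/2000$ relative to the $\delta<1/1000$ of \Cref{lem:delta_separated_andgood_c}.

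Next I would apply \Cref{lem:delta_separated_andgood_c} with $2\delta$ in place of $\delta$: with $M_C=O(l\log(1/\delta'))$ samples and the bins of \Cref{lem:bins_circ}, with probability at least $1-2\delta'$ the set $\calT_\Delta$ is $\Delta$-separated with $\Delta=\Theta(l^{-3/2})$, has size $0.9B\le|\calT_\Delta|\le B$ for $B=\Theta(l)$, and contains at most $(0.21+\eps)B$ ``bad'' points (those where the pairing fails). Conditioning on this event, I run \Cref{thm:REBW} on $\{(X_j,Y_j)\}_{j\in\calT_\Delta}$ with $n'=|\calT_\Delta|$ and $k=(0.21+\eps)B$ corrupted indices. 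The degree budget $m\le n'-2k-1$ is met because $n'-2k-1\ge(0.48-2\eps)B-1=\Theta(l)$, and we are free to fix the constant in $B=\Theta(l)$ so that this exceeds $m=\Theta(n)$. The conclusion is a degree-$\le n'-2k-1$ polynomial $q_C$ with $|q_C(X_j)-\calF_C(X_j)|\le(10/\Delta)^{2n'}\cdot O(\eps_\calA)=2^{\Theta(n\log n)}\eps_\calA$ at a set of at least $n'-2k$ of the $\Delta$-separated nodes $X_j$, all of which lie in $[-r,r]$ with $r=\Theta(l^{-1/2})$ by construction of the bins.

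Finally I would extrapolate from these nodes to $x=1$. The difference $q_C-\calF_C$ is a polynomial of degree $d=\Theta(n)$, so after the affine shift taking $[-r,r]$ to $[0,2r]$ (and $1$ to $1+r\ge 2r$), \Cref{lem:Remez} yields $|q_C(1)-\calF_C(1)|\le\big(e^2(\Delta d)^{-1}(1+r)\big)^{d}\cdot 2^{\Theta(n\log n)}\eps_\calA$; since $\Delta d=\Theta(l^{-1/2})$ the prefactor is $\Theta(l^{1/2})^{\Theta(l)}=2^{\Theta(n\log n)}$, so the whole bound is $2^{\Theta(n\log n)}\eps_\calA$. Adding the Taylor error $\eps\le\eps_\calA$ and recalling that $\calF_C(1)=\calT_Z(r)$ approximates $D(\g(r,0))$, we conclude $|\hat y-D(\g(r,0))|\le 2^{\Theta(n\log n)}\eps_\calA$ with failure probability $O(\delta')$, which is the claim; the sample count $M_C=O(l\log(1/\delta'))$ is inherited directly from \Cref{lem:delta_separated_andgood_c}.

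I expect the main obstacle to be bookkeeping rather than a new idea: ensuring the reflection-pairing does not spoil the constants in the bad-point count --- so that the Berlekamp--Welch degree budget $m\le n'-2k-1$ still closes with $B=\Theta(l)$ bins --- and confirming that the two separate $2^{\Theta(n\log n)}$ blow-ups, one from robust Berlekamp--Welch and one from Remez extrapolation out of the $\Theta(l^{-1/2})$-radius interval all the way to $x=1$, genuinely compose to $2^{\Theta(n\log n)}$ rather than to something asymptotically larger.
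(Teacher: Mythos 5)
Your proposal is correct and follows essentially the same route as the paper: union-bound the reflected pair to define goodness at a rate $1-2\delta$, invoke \Cref{lem:delta_separated_andgood_c} to get a $\Delta$-separated sample with a controlled bad fraction, run \Cref{thm:REBW}, and finish with a Remez extrapolation to $x=1$, tracking that the combined blow-ups are $2^{\Theta(n\log n)}$. The only cosmetic difference is that you invoke \Cref{lem:Remez} after an affine shift while the paper applies its modified (in-interval) Remez bound directly on $[-1,1]$; both give the same $2^{\Theta(n\log n)}$ factor, so the arguments coincide.
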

\begin{proof}
Let $\mathcal{A}$ be the average-case solver. We will define a ``good" point on $P$ (labelled by its polar coordinates $(r,\theta)$) via
\begin{equation}\label{eq:meaningofgood}
    \g(r, \theta) \text{ is ``good" iff }  |\mathcal{A}(\mathbf{g}(r,\theta)) - D(\mathbf{g}(r,\theta))|\leq \epsilon_{\mathcal{A}} \text{ and } |\mathcal{A}(\mathbf{g}(r,-\theta)) - D(\mathbf{g}(r,-\theta))|\leq \epsilon_{\mathcal{A}}.
\end{equation}
Since \Cref{alg:2} only takes as inputs points on the same circumference $C(r)$, and we have assumed that $r$ is a $\delta$-good radius, we have
\begin{equation}
    \Pr_{\theta \sim \rho_{\Theta}^{(P)}}[\g(r,\theta) \text{ is good}] > 1-2\delta
\end{equation}
by a union bound and two applications of \Cref{eq:goodradius}. We have assumed that $2\delta< 10^{-3}$, so \Cref{lem:delta_separated_andgood_c} applies. 

We now bound the error made by $\calA$ in estimating $\calF_C$ on a ``good" point. This boils down to bounding the error between $\calF_C$ and $D$ as follows: recalling that $X = \cos(\theta)$, and the definitions of $
\calF_C$ and $\calT_C$ \Cref{eq:FcTc}, we have
\begin{align}
    |\calF_C(\cos \theta) - D(\g(r,\theta))| &\leq \frac{1}{2} |\calT_C(\cos(\theta), \sin(\theta)) - D(\g(r,\theta))| + \frac{1}{2} |\calT_C(\cos(-\theta), \sin(-\theta)) - D(\g(r,-\theta))| \\
    &= \frac{1}{2} |\calT_m(\g(r,\theta)) - D(\g(r,\theta))| + \frac{1}{2} |\calT_m(\g(r,-\theta)) - D(\g(r,-\theta))|. \label{eq:123}
\end{align}
Each of these terms may be bounded by \Cref{thm:polyapproxguarantees} as $\eps_{\calA}/2$ because for our choice of polynomial degree $m=\Theta(n)$ we have
\begin{equation}\label{eq:boundpolyerror}
    \lVert H(\g(r,\theta))\rVert = \left\lVert \sum_i g_i P_i \right\rVert \leq \lVert \g(r,\theta) \rVert_1\leq \sqrt{l} \lVert \g(r,\theta) \rVert_2 \leq \sqrt{l}.
\end{equation}
The last inequality follows because $\lVert \g(r,\theta) \rVert_2\leq 1$ as $r\leq 1$. So, recalling that $l=\Theta(n)$, for $t=O(1)$ and $\eps_{\calA}=2^{-n}$ we have that
\begin{equation}\label{eq:polydegree_1}
    m \geq \Theta(\lVert H(\g)\rVert t + \log(1/\eps_{\calA}))
\end{equation}
as required by \Cref{thm:polyapproxguarantees}. Combining \Cref{eq:123} and \Cref{eq:meaningofgood} yields
\begin{equation}\label{eq:meaningofgood_2}
    |Y_j-\calF_C(X_j)| =|(\mathcal{A}(\mathbf{g}(r,\theta))+\mathcal{A}(\mathbf{g}(r,-\theta))) - (\calF_{C}(\cos(\theta)) + \calF_{C}(\cos(-\theta)))| \leq 2(\epsilon_{\mathcal{A}}+\epsilon_{\mathcal{A}}).
\end{equation}
We therefore define a faulty index $j$ within $\calT$ as corresponding to a pair $(X_j,Y_j)$ on which the error bound \Cref{eq:meaningofgood_2} has been exceeded.

Applying \Cref{lem:delta_separated_andgood_c}, with $M_C= O(l \log \frac{1}{\delta'})$ samples, with probability $\geq 1-O(\delta')$ Line 3 succeeds in identifying a subset of points in the original sample, $\calT_{\Delta}\subseteq \calT$, such that $0.9B<|\calT_{\Delta}|<B$ with $B=\Theta(l)$. These points are also pairwise $\Delta = l^{-3/2}$-separated and contain $k_{faulty}\leq 0.25 B$ faulty indices. Condition on this.

Recall that we wish to interpolate the degree $m = \Theta(n)$ polynomial $\calF_C$ based on $\bigl\{(X_{j},Y_{j})\bigr\}_{j\in \mathcal{T}_{\Delta}}$. Choosing $B>\frac{m}{0.4}$ guarantees that 
\begin{equation}\label{eq:rebw_condition}
    m<|\calT_{\Delta}-2k_{faulty} -1|.
    \end{equation}
    Recall from \Cref{sec:preliminaries}
 that $l$, the number of terms in $H$, is also $\Theta(n)$, so $B>m/0.4$ is compatible with our assumption in \Cref{lem:bins_circ} that $B=\Theta(l).$ \Cref{eq:rebw_condition} is the condition \Cref{thm:REBW} needs to guarantee that Robust Efficient Berlekamp-Welch returns a degree-$m$ polynomial $q_C$ such that
\begin{equation}
  \bigl|\calF_C(X_j)-q_C(X_j)\bigr|\le \left( \frac{10}{\Delta}\right)^{O(m)} \eps_{\mathcal{A}} = (10l ^{3/2})^l\eps_{\mathcal{A}} = 2^{\Theta(n\log n)}\eps_{\calA},
\end{equation}
for at least $\Theta(m)$ points $X_j$. Call this set of points $\mathcal{T}'$. 

We may finally apply the modified Remez inequality (\Cref{lem:ModifiedRemez}) on the points in $\mathcal{T}'$ to bound the error of $q_C$ at $1$, given by the value of the polynomial difference $\calF_C-q_C$ on the interval $[-1,1]$, as
\begin{align}
    |\calF_C(1)-q_C(1)| &\leq deg(\calF_C) \left(\frac{2}{\Delta}\right)^{deg(\calF_C)}\max_{X_j\in \mathcal{T}'}|\calF_C(X_j)-q_C(X_j)| \leq 2^{\Theta(n\log n)}\eps_{\calA}.
\end{align}
 Since we also had $|\calF_C(1) - D(1)|\leq \epsilon_{\mathcal{A}}$, the conclusion follows with probability $1-O(\delta')$.
\end{proof}

\subsubsection{Analysis of \Cref{alg:1} (Worst-to-average-case reduction)}

We first derive some facts about the radial distribution of $\calN_l$, 
\begin{equation}\label{eq:rho_R}
\rho_R(r)=\frac{1}{2^{l / 2-1} \Gamma(l / 2) \sigma^{l}} r^{l-1} \exp \left(-\frac{r^2}{2 \sigma^2}\right), \quad r \geq 0 .
\end{equation}

The first and second moments are:
\begin{align}
\mathbb{E}[R] &=\sigma\sqrt{2}\;\frac{\Gamma\!\left(\frac{l+1}{2}\right)}{\Gamma\!\left(\frac{l}{2}\right)}.\\
\mathbb{E}[R^{2}] &= l\,\sigma^{2}.
\end{align}

The variance is thus:
\[
\operatorname{Var}(R) \;=\; \mathbb{E}[R^{2}] - \mathbb{E}[R]^2
\;=\; \sigma^{2}\!\left[\,l - 2\left(\frac{\Gamma\!\left(\frac{l+1}{2}\right)}{\Gamma\!\left(\frac{l}{2}\right)}\right)^{\!2}\right].
\]

The term $\frac{\Gamma\!\left(z+\frac{1}{2}\right)}{\Gamma\!\left(z\right)}$ is bounded by
\[
\frac{\Gamma\!\left(z+\tfrac12\right)}{\Gamma(z)}
= z^{1/2}\!\left(1-\frac{1}{8z}+\frac{1}{128z^{2}}+\frac{5}{1024z^{3}}+O(z^{-4})\right).
\]
Squaring and substituting \(z=\tfrac{l}{2}\) gives
\[
\left(\frac{\Gamma\!\left(\frac{l+1}{2}\right)}{\Gamma\!\left(\frac{l}{2}\right)}\right)^{\!2}
= z\!\left(1-\frac{1}{4z}+\frac{1}{32z^{2}}+\frac{1}{128z^{3}}+O(z^{-4})\right)
= \frac{l}{2}-\frac{1}{4}+\frac{1}{16l}+\frac{1}{32l^{2}}+O(l^{-3}).
\]
Hence, with $\sigma^2=\frac{1}{l}$, 
\begin{equation}
    \operatorname{Var}(R)= \frac{1}{2l}-\frac{1}{8l^{2}} + O\left(\frac{1}{l^3}\right).
\end{equation}

\begin{lemma}[Constructing a $\Delta$-separated set of intervals for $\rho_R$]\label{lem:bins_rad}
Given $l>3$ and a desired number of bins $B=\Theta(l)$, and let the distribution $\rho_R$ be as given in \Cref{eq:rho_R}. 

We may construct a $\Delta = \frac{1}{2l^{3/2}}$-separated set of bins on the support of $\rho_R$ by partitioning the interval $[1-\frac{1}{\sqrt{l}},1]$ into bins of width $\Delta$, i.e.
\[
B_k:=\left[1-\frac{1}{\sqrt{l}}+2k\Delta,\, 1-\frac{1}{\sqrt{l}}+(2k+1)\Delta\right]\subset \left[1-\frac{1}{\sqrt{l}},1\right],\quad k=0,1,\dots,B-1.
\]
 Let $\mathcal{B}:=(B_0,\dots,B_{B-1})$. Then $p_{min} = \min_{B\in\mathcal{B}}\;
		            \Pr_{r\sim \rho_R}\!\bigl[r\in \calB\bigr] = \frac{1}{2B\sqrt{\pi e^2}}.$
\end{lemma}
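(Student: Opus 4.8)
The plan is to follow the template of \Cref{lem:bins_circ}, with the chi-type radial density in place of $D$. First I would record the structural facts about $\rho_R$. Writing $\rho_R(r)=C_l\,f(r)$ with $f(r)=r^{l-1}e^{-lr^{2}/2}$ and $C_l=l^{l/2}/(2^{l/2-1}\Gamma(l/2))$, the function $\log f$ is strictly concave on $(0,\infty)$ with a unique maximum at the mode $r^\ast=\sqrt{(l-1)/l}=1-\tfrac1{2l}+O(l^{-2})$. For $l>3$ one checks $1-\tfrac1{\sqrt l}<r^\ast<1$, so the mode lies inside the bracketing interval $I:=[1-\tfrac1{\sqrt l},1]$, and $\rho_R$ is increasing on $[1-\tfrac1{\sqrt l},r^\ast]$ and decreasing on $[r^\ast,1]$. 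Consequently the density's infimum on $I$, and hence the bin of least mass, is attained at an endpoint. The bin count is immediate: the $B_k$ together with the length-$\Delta$ gaps between them partition a sub-interval of $I$, so $2B\Delta\le \tfrac1{\sqrt l}$, giving $B=\Theta(l)$ (in fact $B\approx l$ with $\Delta=\tfrac1{2l^{3/2}}$), and the $\Delta$-separation is built into the construction.

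Next I would compare the two endpoint values. Expanding $\log f$ at $a:=1-l^{-1/2}$ using $\log(1-l^{-1/2})=-l^{-1/2}-\tfrac12 l^{-1}-\tfrac13 l^{-3/2}-O(l^{-2})$ gives
\[
\log f(a)-\log f(1)=-1+O(l^{-1/2}),
\]
so $\rho_R(a)=e^{-1}\bigl(1+o(1)\bigr)\rho_R(1)<\rho_R(1)$; thus the minimum-mass bin is the leftmost one, $B_0$. Moreover $|(\log f)'(a)|=|\,(l-1)/a-la\,|=2\sqrt l+O(1)$, so over an interval of width $\Delta=\tfrac1{2l^{3/2}}$ the density $\rho_R$ varies by a multiplicative factor $1+O(l^{-1})$; since $B_0\subset[1-l^{-1/2},r^\ast]$ where $\rho_R$ is increasing, this yields
\[
p_{\min}=\int_{B_0}\rho_R(r)\,dr=\bigl(1+O(l^{-1})\bigr)\,\Delta\,\rho_R(a).
\]

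Finally I would pin down the constant. Applying Stirling's formula $\Gamma(l/2)=\sqrt{2\pi}\,(l/2)^{(l-1)/2}e^{-l/2}\bigl(1+O(1/l)\bigr)$ to $C_l$ and simplifying yields $\rho_R(1)=C_l e^{-l/2}=\sqrt{l/\pi}\,\bigl(1+O(1/l)\bigr)$. Combining the last three displays,
\[
p_{\min}=\bigl(1+o(1)\bigr)\,\frac{1}{2l^{3/2}}\cdot e^{-1}\cdot\sqrt{\tfrac{l}{\pi}}=\bigl(1+o(1)\bigr)\,\frac{1}{2el\sqrt{\pi}}=\frac{1}{2B\sqrt{\pi e^{2}}},
\]
using $B=l$ and $\sqrt{\pi e^2}=e\sqrt{\pi}$. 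I expect the only real work to be the asymptotic bookkeeping: tracking the $O(l^{-1/2})$ error in the endpoint comparison (which fixes the factor $e^{-1}$) and the Stirling remainder in $C_l$ (which fixes $\sqrt{l/\pi}$); the unimodality, mode-placement, and bin-counting steps are routine, and the only point needing care is that $B_0$'s left endpoint sits exactly at $1-l^{-1/2}$, the smallest-density point of $I$, so the lower bound on $p_{\min}$ relies on monotonicity of $\rho_R$ throughout $B_0$.
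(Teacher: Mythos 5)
Your proposal is correct and follows essentially the same route as the paper's proof: unimodality of $\rho_R$ with mode at $\sigma\sqrt{l-1}=1-\tfrac{1}{2l}+O(l^{-2})$ to localize the minimum at an endpoint, a Taylor expansion of the log-density showing $\rho_R(1-l^{-1/2})\approx e^{-1}\rho_R(1)$, and Stirling to get $\rho_R(1)=\sqrt{l/\pi}\,(1+O(1/l))$, yielding $p_{\min}\approx \rho_R(a)\Delta = \tfrac{1}{2B\sqrt{\pi e^2}}$. Your extra step bounding $|(\log f)'(a)|$ to justify replacing the bin integral by density-times-width is a small refinement the paper elides, not a different argument.
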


\begin{proof}
Take
\begin{align}
    a := 1-\frac{1}{\sqrt{l}}, \quad
    b := 1.
\end{align}
    Since $\rho_R$ has a peak at $\sigma\sqrt{l-1} = 1-\frac{1}{2l} + O(1/l^2)$ and is monotonically increasing or decreasing on either side of the peak, 
\begin{equation}
    \min_i \rho_R(B_i) = \min\{\rho_R(a)\Delta, \rho_R(b)\Delta\}.
\end{equation}
Note that 
$\rho_R(b) = \rho_R(1) = \frac{l^{l / 2} e^{-l / 2}}{2^{l / 2-1} \Gamma\left(\frac{l}{2}\right)}=\sqrt{\frac{l}{\pi}}\left(1-\frac{1}{6 l}+O\left(l^{-2}\right)\right)$ while to compute $\rho_R(a)$ at $a=1-\frac{1}{\sqrt{l}}$ we Taylor expand the log-density function:
\begin{align}
    \log\left(\frac{\rho_R\Big(1-\frac{c}{\sqrt{l}}\Big)}{\rho_R(1)}\right) &= (l-1)\log \Big(1-\frac{c}{\sqrt{l}}\Big) - \frac{l}{2}\left[\Big(1-\frac{c}{\sqrt{l}}\Big)^2-1\right]\\
    &= -c^2 + \frac{c-\frac{c^3}{3}}{\sqrt{l}} + \frac{\frac{c^2}{2}-\frac{c^4}{4}}{l} + O(l^{-3/2}).
\end{align}
Exponentiating this expression we obtain 
\begin{align}
    \frac{\rho_R\Big(1-\frac{c}{\sqrt{l}}\Big)}{\rho_R(1)} &= \exp\left[ -c^2 + \frac{c-\frac{c^3}{3}}{\sqrt{l}} + \frac{\frac{c^2}{2}-\frac{c^4}{4}}{l} + O(l^{-3/2})\right]\\
    &= e^{-c^2} \exp\left( \frac{c-\frac{c^3}{3}}{\sqrt{l}} + \frac{\frac{c^2}{2}-\frac{c^4}{4}}{l} + O(l^{-3/2})\right)\\
    &= e^{-c^2} \left(1+ \frac{c-\frac{c^3}{3}}{\sqrt{l}} + \frac{\frac{c^2}{2}-\frac{c^4}{4}}{l} + O(l^{-3/2})\right)
\end{align}
so that by setting $c=1$ and keeping the highest-order term, $\rho_R\left(1-\frac{1}{\sqrt{l}}\right)\approx \sqrt{\frac{l}{\pi}}e^{-1} (1+O(1/\sqrt{l})).$

This then allows us to compute the minimum bin mass:
\begin{equation}\label{eq:pmin}
    p_{min}= \rho_R(a)\frac{1}{2B\sqrt{l}} \approx \sqrt{\frac{l}{\pi}}e^{-1} \frac{1}{2B\sqrt{l}} = \frac{1}{2B\sqrt{\pi e^2}}.
\end{equation}
\end{proof}

In the following lemma, recall the definition of a ``good" radius $r$ as given in \Cref{eq:goodradius}.

\begin{lemma}[Properties of $\calS_{\Delta}$]\label{lem:delta_separated_andgood_radii} 
Suppose
\(
\Pr_{r\sim\rho_R}[r \text{ is good}]= 1-\delta
\) and assume $\delta<\frac{1}{1000}$. Given the samples
\(
\mathcal{S}=\{r_j\}_{j}
\stackrel{\text{i.i.d.}}{\sim}\rho_R
\) where
\[
 |\mathcal{S}| = O\left(l\,\log\frac{10}{\delta'}\right)
\]
\Cref{proc:delta_separated} returns a \(\Delta = O(l^{-3/2})\)-separated sample $\mathcal{S}_{\Delta}\subseteq \mathcal{S}$, such that, with probability at least \(1-2\delta',\)
\begin{itemize}
    \item $.9B \leq |\calS_{\Delta}| \leq B$ where $B=\Theta(l)$;
    \item $\max_{r_i \in \mathcal{S}_{\Delta}} \leq 1$;
    \item $\calS_{\Delta}$ contains at most $.11B$ indices $i$ corresponding to $r_i$ that are not good. 
\end{itemize}
\end{lemma}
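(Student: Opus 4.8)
The plan is to re-run the argument behind \Cref{lem:delta_separated_andgood_c} essentially verbatim, with the angular distribution $D$ replaced by the radial distribution $\rho_R$ and the bins of \Cref{lem:bins_circ} replaced by those of \Cref{lem:bins_rad}. Write $B=\Theta(l)$ for the number of bins, and recall from \Cref{lem:bins_rad} that every bin $B_k$ lies in $[1-1/\sqrt l,\,1]$ and that the minimum bin mass is $p_{\min}=\frac{1}{2B\sqrt{\pi e^2}}=\Theta(1/B)$. For the size bound: since \Cref{proc:delta_separated} emits exactly one point per occupied bin, $|\mathcal S_\Delta|$ is the number of occupied bins, hence trivially $|\mathcal S_\Delta|\le B$; for the lower bound, apply \Cref{lem:delta_separated} with this $p_{\min}$ and $c=0.9$, so that drawing $|\mathcal S|=O\!\bigl(p_{\min}^{-1}\log\tfrac1{\delta'}\bigr)=O\!\bigl(l\log\tfrac{10}{\delta'}\bigr)$ i.i.d.\ samples from $\rho_R$ guarantees, except with probability $\delta'$, that at least $0.9B$ bins are occupied, i.e.\ $0.9B\le|\mathcal S_\Delta|\le B$. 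The bound $\max_{r_i\in\mathcal S_\Delta} r_i\le 1$ is then immediate, since every $B_k\subseteq[1-1/\sqrt l,1]\subseteq[0,1]$ and every radius placed into $\mathcal S_\Delta$ comes from some $B_k$.

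For the bound on not-good indices, mirror the bad-bin/good-bin split: set $E(r):=\mathbb{I}(r\text{ is not good})$ in the sense of \Cref{eq:goodradius}, so $\Pr_{r\sim\rho_R}[E]=\delta$. For a threshold constant $\lambda$, call $B_i$ a \emph{bad bin} if $\Pr_{r\sim\rho_R}\!\bigl(E(r)\wedge r\in B_i\bigr)\ge\lambda\delta/B$; summing over $i$ gives $|\mathcal B_{\mathrm{bad}}|\cdot\lambda\delta/B\le\Pr[E]=\delta$, hence $|\mathcal B_{\mathrm{bad}}|\le B/\lambda$. Inside a good bin, the probability that the representative selected by \Cref{proc:delta_separated} is not good is at most $\frac{\lambda\delta/B}{p_{\min}}=2\sqrt{\pi e^2}\,\lambda\delta$, a small constant once $\delta<1/1000$ and $\lambda$ is fixed suitably. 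Exactly as in \Cref{lem:delta_separated_andgood_c}, conditioning on the occupancy pattern the events ``the representative of $B_i$ is not good'' are independent across occupied bins, each with conditional probability $\Pr(E\mid r\in B_i)$, so the expected number of not-good points in $\mathcal S_\Delta$ is at most $2\sqrt{\pi e^2}\,\lambda\delta\,B+B/\lambda$, a small constant fraction of $B$ for appropriate $\lambda$. A Chernoff bound over those conditionally independent bin indicators then keeps the count below $0.11B$ except with probability $\delta'$, using $B=\Theta(l)=\Omega\!\bigl(\log\tfrac1{\delta'}\bigr)$; a union bound with the size-failure event gives total failure probability at most $2\delta'$, as claimed.

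The only genuinely delicate step is the constant bookkeeping in the last paragraph. The subtlety is that the radial bins of \Cref{lem:bins_rad} have $p_{\min}B=\tfrac{1}{2\sqrt{\pi e^2}}\approx 0.10$, which is \emph{smaller} than the corresponding quantity for the angular bins in \Cref{lem:bins_circ}, so the conditional not-good probability inside a good bin, $\tfrac{\lambda\delta/B}{p_{\min}}$, is correspondingly larger; pushing the expected number of not-good representatives below the stated $0.11B$ therefore forces a careful joint choice of $\lambda$ and of how small $\delta$ is (somewhat below $1/1000$, or else stating the third bullet with a marginally larger absolute constant). This is immaterial downstream: the bound is used only to certify the Berlekamp--Welch feasibility condition $m<|\mathcal S_\Delta|-2k_{\mathrm{faulty}}-1$ in \Cref{alg:1}, which holds as soon as the number of not-good points is a sufficiently small constant fraction of $B$ and $B=\Theta(l)$ is a large enough multiple of the polynomial degree $m=\Theta(n)$.
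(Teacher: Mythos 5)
Your proposal follows the paper's argument essentially verbatim (bin construction from \Cref{lem:bins_rad}, occupancy bound via \Cref{lem:delta_separated}, the bad-bin/good-bin split, conditional independence of representatives across occupied bins, and a Chernoff bound), and it correctly handles the one detail the paper leaves implicit, namely that $\max_{r_i\in\mathcal S_\Delta} r_i\le 1$ because every bin $B_k\subset[1-1/\sqrt l,\,1]$.

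More importantly, the concern you raise about the constants is a genuine one, and in fact the paper's own arithmetic at that step is off by a factor of $10$. With $p_{\min}=\tfrac{1}{2B\sqrt{\pi e^2}}$ from \Cref{lem:bins_rad} and the paper's fixed threshold $\lambda=100$, the conditional not-good probability in a good bin is
\[
\frac{100\delta/B}{p_{\min}} \;=\; 200\,\delta\,\sqrt{\pi}\,e \;\approx\; 963.6\,\delta,
\]
which at the stated cutoff $\delta<10^{-3}$ is $\approx 0.96$, not the $\le 0.096$ written in \Cref{eq:goodbin_mass}. Consequently the expected number of not-good representatives is not $0.106B$ as asserted; with the $\lambda$ you propose optimized one gets $2\sqrt{2\sqrt{\pi e^2}\,\delta}\,B$, and for $\delta\approx 10^{-3}$ this is about $0.20B$, so the third bullet cannot hold as written under $\delta<10^{-3}$. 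The fix is exactly the one you flag: either tighten the hypothesis to $\delta\lesssim 10^{-4}$ (propagating to $\delta<10^{-16}$ in \Cref{thm:outer}, since the lemma is invoked at $\delta^{1/4}$), or relax the third bullet to a somewhat larger constant fraction (e.g.\ $\approx 0.21B$) and compensate by taking $B$ a slightly larger multiple of $m$, which still satisfies $m<|\mathcal S_\Delta|-2k_{\mathrm{faulty}}-1$ and hence does not affect anything downstream. So your proof is correct in structure, and your ``delicate step'' paragraph is not a hedge but a correct diagnosis of a small arithmetic slip in the paper.
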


\begin{proof}
The proof follows similarly to that of \Cref{lem:delta_separated_andgood_c}: We first prove the lower bound on the size of $\calS_{\Delta}$ output by \Cref{proc:delta_separated}.
Let $B = \Theta(l)$ be the number of bins in $\calB.$ Since $p_{min} =\Theta(\frac{1}{l})$ (c.f. \Cref{lem:bins_rad} ), we may argue exactly identically that 
\begin{equation}
    k = O\left(l\,\log\frac{10}{\delta'}\right)
\end{equation}
points are sufficient to guarantee that except with probability $\delta'$, 
\begin{equation}\label{eq:mbounds_rad}
0.9B \leq |\calB_O| =|\mathcal{S}_{\Delta}|  \leq B.
\end{equation}
Condition on this.

Next, we upper-bound the number of bad points (a point such that $r$ is not good) in $\mathcal{S}_{\Delta}$. The first step is to upper-bound the number of bad bins (bins that are exceptionally likely to produce a bad point). For a point $r\in B_i$, define the event \(E(r):=\mathbb{I}(r \text{ is not good})\). Define the bad-bin set within $\mathcal{B}$:
\[
\mathcal{B}_{bad}:=\Bigl\{\,B_i:\ \Pr_{r\sim \rho_R}\bigl(E(r)\wedge r\in B_i\bigr)\ \ge\ \frac{100\delta}{B} \Bigr\}
\]
Then 
\begin{equation}
|\mathcal{B}_{bad}|\cdot \frac{100\delta}{B}
\;\le\;\sum_{i=1}^{|\mathcal{B}|}\Pr(E\wedge r\in B_i)
\;=\;\Pr(E)\;\le\;\delta,
\end{equation}
hence we may upper bound the total number of bad bins as 
\begin{equation}\label{eq:badbin_number}
    |\mathcal{B}_{bad}|\le \frac{B}{100} = 0.01B.
\end{equation} 
Also, for any bin $\calB_i \in \calB_{good}$ that is a good bin,
\begin{align}\label{eq:goodbin_mass}
    \rho_R(E(r)|r\in B_i) &\leq \frac{100\delta/B}{\rho_R(B_i)}\leq \frac{100\delta/B}{p_{min}} = 200\delta \sqrt{\pi} e \leq 0.096.
\end{align}

We now turn to upper-bounding the number of bad points in $\mathcal{S}_{\Delta}$ created via our above sampling procedure. Recall that one point is output per occupied bin. Defining the random variables $\{X_i\}$ where
\begin{equation}
    X_i := \mathbb{I}(\text{bad point is output from }\calB_i),
\end{equation}
$X_i$ and $X_j$ are independent random variables conditioned on both $\calB_i$ and $\calB_j$ being occupied. By linearity of expectation, the expected number of bad points in $\mathcal{S}_{\Delta}$ is then
\begin{align}
    \sum_{B_i \in \mathcal{B}_O} \mathbb{E}[r_i] &= \sum_{B_i \in \mathcal{B}_{good} \cap \mathcal{B}_O  }\mathbb{E}[r_i] + \sum_{B_i \in \mathcal{B}_{bad} \cap \mathcal{B}_O  }\mathbb{E}[r_i]\\
    &\leq \sum_{B_i \in \mathcal{B}_{good} \cap \mathcal{B}_O  }\Pr[\text{output bad point from }\calB_i] + |\calB_{bad}|\\
    &\leq  0.096 B + 0.01B = 0.106B.
\end{align}
where in the last step, to bound the first term we have coarsely upper-bounded $|\mathcal{B}_{good} \cap \mathcal{B}_O|\leq |\calB|=B$ and used \Cref{eq:goodbin_mass} and \Cref{eq:badbin_number}.

Finally, by a Chernoff bound due to the independence of $r_i$ and $r_j$, the number of bad points in $\calS_{\Delta}$, which is also $\sum_{i\in \mathcal{B}_O} r_i$, concentrates around its mean; that is
\begin{equation}
    \Pr\left[\left|\sum_{i\in \mathcal{B}_O} r_i - \mathbb{E}\left[\sum_{i\in \mathcal{B}_O} r_i\right]\right| \geq \eps |\calB_O| \right] \leq 2\exp(-2\eps^2 |\calB_O|) \leq 2\exp(-1.8 B\eps^2),
\end{equation}
where the last inequality follows from \Cref{eq:mbounds_rad}. The right-hand-side is at most $\delta'$ as long as $B\geq \Theta(\frac{1}{\eps^2}\log\frac{2}{\delta'}).$ Choosing $\eps=.04$, this is easily guaranteed as we have already assumed that $B = \Theta(l).$ Thus, with probability at most $\delta'$, the number of bad points in $\calS_{\Delta}$ can be bounded as
\begin{equation}
    |\{\text{Bad points in }\calS_{\Delta}\}| \leq .106B + \epsilon B \leq (0.11)B.
\end{equation}
\end{proof}

Finally, we put everything together.
\begin{theorem}[Worst-to-average-case reduction\label{thm:outer}]
Let $\hat{p} = q_Z(\g_{worst})$ be the output of \Cref{alg:1}. Assume access to an average-case solver $\calA$ such that 
\begin{equation}\label{eq:Asucceess_2}
    \Pr_{\g\sim \calN_l}[\calA(\g) \text{ succeeds}] \geq 1-\delta
\end{equation}
where $\delta<10^{-12}$. 
Then, with probability at least $1-O(\delta^{1/4})$, $\hat{p}$ satisfies
    \begin{equation}
        |\hat{p}- D(\g_{worst})| \leq 2^{\Theta(n\log n)}\epsilon_{\mathcal{A}}.
    \end{equation}
    Moreover, \Cref{alg:1} runs in time polynomial in $n$.
\end{theorem}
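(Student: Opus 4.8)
The plan is to peel off one coordinate at a time, using the counting lemmas to carry the average-case guarantee of $\calA$ through each layer and handling the two resulting univariate interpolations with \Cref{thm:REBW}. I would set $\delta_1:=\delta^{1/2}$ and $\delta_2:=\delta^{1/4}$. Since the hypothesis \Cref{eq:guarantee_A} rewrites (as in \Cref{eq:assumption}) as $\Pr_{P\sim\mu_H}[\Pr_{\x\sim\rho_P}[\calA\text{ succeeds}]]\ge 1-\delta$, \Cref{lemma:2} gives that with probability at least $1-\delta/\delta_1=1-\delta^{1/2}$ the random plane $P$ drawn in Line~1 of \Cref{alg:1} satisfies $\Pr_{\x\sim\rho_P}[\calA\text{ succeeds}]\ge 1-\delta_1$; I would condition on this. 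On such a plane, \Cref{lemma:3} shows a $\rho_R^{(P)}$-random radius is $\delta_2$-good in the sense of \Cref{eq:goodradius} with probability at least $1-\delta_1/\delta_2=1-\delta^{1/4}$. The hypothesis $\delta<10^{-12}$ makes $\delta_2=\delta^{1/4}$ small enough for \Cref{lem:delta_separated_andgood_radii} and \Cref{thm:circumference_interpolation} to apply.

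Next I would analyze Lines~2--5. Sampling $M_r=O(l\log(1/\delta'))$ radii from $\rho_R^{(P)}$ and running \Cref{proc:delta_separated} with the bins of \Cref{lem:bins_rad}, \Cref{lem:delta_separated_andgood_radii} yields, except with probability $O(\delta')$, a $\Delta$-separated set $\calS_\Delta\subset(0,1]$ with $0.9B\le|\calS_\Delta|\le B$ for $B=\Theta(l)$ and at most $0.11B$ radii that are not $\delta_2$-good. For each of the $\ge0.79B$ good radii $r_i\in\calS_\Delta$ I would invoke \Cref{thm:circumference_interpolation} (with failure parameter $\delta'/B$) to obtain $y_i$ with $|y_i-\calT_Z(r_i)|\le2^{\Theta(n\log n)}\eps_\calA$, where $\calT_Z$ is the restriction of $\calT_m$ to the $z$-axis, a univariate polynomial of degree $m=\Theta(n)$; a union bound over the $\le B$ calls keeps the failure probability $O(\delta')$. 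Thus at least $|\calS_\Delta|-0.11B$ of the nodes $\{(r_i,y_i)\}$ are $\bigl(2^{\Theta(n\log n)}\eps_\calA\bigr)$-accurate evaluations of $\calT_Z$ and the remaining $\le0.11B$ are arbitrary. Choosing the implied constant in $B=\Theta(l)=\Theta(n)$ large enough relative to that in $m$ ensures $m<|\calS_\Delta|-2(0.11B)-1$, the precondition of \Cref{thm:REBW}. Running Robust Efficient Berlekamp--Welch on these nodes (Line~5) then outputs a degree-$m$ polynomial $q_Z$ agreeing with $\calT_Z$ on $\Theta(n)$ of the $\Delta$-separated nodes (all in $[\,1-l^{-1/2},1\,]$) to within $(10/\Delta)^{\Theta(m)}\cdot2^{\Theta(n\log n)}\eps_\calA=2^{\Theta(n\log n)}\eps_\calA$.

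To conclude, I would apply the Discrete Remez inequality \Cref{lem:Remez} to the degree-$m$ polynomial $q_Z-\calT_Z$, using those $\Theta(n)$ surviving $\Delta$-separated nodes as the sample set and extrapolating to the $z$-coordinate $L=\lVert\g_{worst}\rVert$ of $\g_{worst}$. With $d=m=\Theta(n)$, separation $\Delta=\Theta(l^{-3/2})=\Theta(n^{-3/2})$ and $L=\mathrm{poly}(n)$, the multiplicative blow-up $\bigl(e^2(\Delta d)^{-1}L\bigr)^d$ is $2^{\Theta(n\log n)}$, so $|q_Z(\g_{worst})-\calT_Z(\g_{worst})|\le2^{\Theta(n\log n)}\eps_\calA$. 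Combining with $|\calT_Z(\g_{worst})-D(\g_{worst})|=|\calT_m(\g_{worst})-D(\g_{worst})|\le\eps_\calA$ from \Cref{thm:polyapproxguarantees} (valid since $m=\Theta(n)\ge\Theta(\lVert H(\g_{worst})\rVert\tau+\log(1/\eps_\calA))$) gives $|\hat p-D(\g_{worst})|\le2^{\Theta(n\log n)}\eps_\calA$ for $\hat p=q_Z(\g_{worst})$. Setting $\delta':=\delta^{1/4}$, the total failure probability is $\delta^{1/2}+\delta^{1/4}+O(\delta^{1/4})=O(\delta^{1/4})$. For the runtime, Lines~1--3 and \Cref{proc:delta_separated} are $\mathrm{poly}(n)$; each call to \Cref{alg:2} and each Berlekamp--Welch solve is $\mathrm{poly}(n)$ by \Cref{thm:REBW}, and there are $O(l\log(1/\delta'))=\mathrm{poly}(n)$ of them, so \Cref{alg:1} runs in $\mathrm{poly}(n)$ time.

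The step I expect to be the main obstacle is the \emph{nested} accounting of error and probability: every layer of slicing costs a polynomial factor of success probability (absorbed by the $\delta\mapsto\delta^{1/2}\mapsto\delta^{1/4}$ ladder) and multiplies the additive error by a $2^{\Theta(n\log n)}$ Remez/Berlekamp--Welch factor, and one must verify that the inner circumference blow-up and the outer $z$-axis blow-up compose to a single $2^{\Theta(n\log n)}$ rather than stacking to $2^{\Theta(n(\log n)^2)}$ or worse. A second delicate point is a simultaneous degree budget: the number of bins $B$ must at once be $\Theta(l)$ (so $p_{\min}=\Theta(1/l)$ in \Cref{lem:bins_rad} and the sample complexity stays $\mathrm{poly}(n)$) and exceed a fixed constant times the Taylor degree $m$ (so \Cref{thm:REBW}'s hypothesis survives discarding the $\Theta(B)$ corrupted-or-unoccupied nodes); this is possible precisely because $l=\Theta(n)$ and $m=\Theta(n)$ with a small enough implied constant.
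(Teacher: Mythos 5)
Your proposal matches the paper's proof essentially step-for-step: the $\delta\to\delta^{1/2}\to\delta^{1/4}$ ladder via \Cref{lemma:2} and \Cref{lemma:3}, the application of \Cref{lem:delta_separated_andgood_radii} to obtain a $\Delta$-separated set with at most $0.11B$ bad radii, the per-circumference call to \Cref{thm:circumference_interpolation} with failure budget $\delta'/B$ and union bound, the invocation of \Cref{thm:REBW} with $B$ chosen so that $m<|\calS_\Delta|-2k_{\mathrm{faulty}}-1$, and the final extrapolation to $\g_{\mathrm{worst}}$ via the discrete Remez inequality combined with the Taylor-approximation guarantee from \Cref{thm:polyapproxguarantees}. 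Your closing remarks on the multiplicative (exponent-additive) composition of the $2^{\Theta(n\log n)}$ blowups and on the simultaneous $B=\Theta(l)$ versus $B>cm$ degree budget correctly identify the two points the paper's proof is careful about, and both are handled the same way there.
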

It is certainly possible to optimize the constant $\delta$; we have made no attempt to do so.
\begin{proof}
By \Cref{lemma:2}, with probability $\geq 1-\sqrt\delta$, the plane $P$ sampled in Step 1 is good for $\calA$, i.e. 
\begin{equation}
    \operatorname{Pr}_{\x \sim \rho_P}[\calA(x) \text{ succeeds}]\geq 1-\sqrt \delta.
\end{equation}
Condition on having sampled a good plane. Then, \Cref{lemma:3} gives guarantees on the likelihood (over the good plane) of sampling a good radius: with probability at least $1-\delta^{1/4}$, each radius $r_i\sim \rho_R^{(P)}$ sampled in Step 2 is $\delta^{1/4}$-good for $\calA$, satisfying
\begin{equation}
    \operatorname{Pr}_{\theta \sim \rho_{\Theta}^{(P)}}[\calA(\x(r_i,\theta)) \text{ succeeds}]\geq 1-\delta^{1/4}.
\end{equation}
Plugging the choice $\delta\leftarrow \delta^{1/4}<10^{-3}$ into \Cref{lem:delta_separated_andgood_radii}, then, and also choosing $\delta'=\delta$, gives guarantees on the $\Delta$-separated set $\calS_{\Delta}\subseteq \calS$ constructed in line 3 of \Cref{alg:1}: Recalling that $l=\Theta(n)$, as long as $M_r= |\calS| =O\left(n \log\left(\frac{1}{\delta}\right)\right)$, then with probability $1-O(\delta^{1/4})$, we have $0.9B<|\calS_{\Delta}|<B$ with $B=\Theta(l)$. These points are also pairwise $\Delta = l^{-3/2}$-separated and contain at most $0.11 B$ bad radii. We condition on this.

\Cref{thm:circumference_interpolation} with $\delta\leftarrow \delta^{1/4}$ and $\delta'\leftarrow \delta^{1/4}/B$ gives guarantees on the output of  \Cref{alg:2} at a $\delta^{1/4}$-good radius: its output $y_i$ satisfies
\begin{equation}\label{eq:goodr}
    \Pr[|y_i- D(\g(r_i))| \leq 2^{\Theta(n\log n)}\eps_{\calA}] \geq 1-O(\delta^{1/4}/B).
\end{equation}
Since there are $|\calS_{\Delta}| > 0.9B$ radii at which we run \Cref{alg:2}, by a union bound, with probability at least $1-O(\delta^{1/4})$, all $y_i$ corresponding to good radii are $2^{\Theta(n\log n)}\eps_{\calA}$-accurate estimates of $D(\g(r_i))$. Now, noting that $r_i<1$  for all $r_i\in \calS_{\Delta}$ (point 2 of \Cref{lem:delta_separated_andgood_radii}),  \Cref{thm:polyapproxguarantees}
 guarantees
 \begin{equation}
    |\calT_Z(r_i)-D(\g(r_i))|\leq \eps_{\calA}
\end{equation}
by a similar line of reasoning as \Cref{eq:boundpolyerror}. That is, we have conditioned on the fact that at most $.11B$ pairs $(r_i, y_i) \in \calS_{\Delta}$ are faulty evaluations, i.e.
\begin{equation}
        |y_i- \calT_Z(r_i)| > 2^{\Theta(n\log n)}\eps_{\calA}.
\end{equation}
Let us denote the number of faulty evaluations as $k_{faulty}.$

In Line 5, we interpolate the degree-$m$ polynomial $\calT_Z$ based on the possibly faulty evaluations $\bigl\{(r_i,y_{i})\bigr\}_{j\in \mathcal{S}_{\Delta}}$. Since $\calT_Z$ is a degree $m$ polynomial, choosing $B=2m$ guarantees that 
\begin{equation}\label{eq:rebw_condition-2}
    m<|\calS_{\Delta}|-2k_{faulty} -1.
    \end{equation}
Again, by our choice of $m=\Theta(n)$, $B=2m$ is compatible with our earlier assumption in \Cref{lem:bins_rad} that $B=\Theta(l).$ \Cref{eq:rebw_condition-2} is the condition \Cref{thm:REBW} needs to guarantee that Robust Efficient Berlekamp-Welch returns a degree-$m$ polynomial $q_Z$ such that
\begin{equation}
    \left|\calT_Z(r_i)-q_Z(r_i)\right| \leq \left(\frac{10}{l^{-3/2}} \right)^{2|B|}2^{\Theta(n\log n)}\eps_{\calA} =2^{\Theta(n\log n)}\eps_{\calA}
\end{equation}
for at least $m+1$ points $i$. Let us call this set of points $\mathcal{S}'.$ 

Noting that our worst-case point $\g_{worst}$, being a length-$\Theta(n)$ vector of Ising coefficients of magnitude $\Theta(1)$, is of radial distance $\lVert \g_{worst}\rVert_2=\Theta(\sqrt{n})$ from the origin, we apply the discrete Remez inequality \Cref{lem:Remez} on the points in $\calS'$ to bound the error of $q_Z(\mathbf{g}_{worst})$ at $\lVert \g_{worst}\rVert_2=\sqrt{n}$, by bounding the polynomial $err:=\calT_Z - q_Z$ at $\mathbf{g}_{worst}$ as
\begin{align}
    |\calT_Z(\mathbf{g}_{worst})-q_Z(\mathbf{g}_{worst})| &\leq (e^2(\Delta (2(m+1)))^{-1} \sqrt{n})^{m+1}\max_{r_j\in \mathcal{S}'}|\calT_Z(r_j)-q_Z(r_j)| \\
    &= 2^{\Theta(n\log n)}\eps_{\calA}
\end{align}
as $\Delta=l^{-3/2}$ and $l,m=\Theta(n).$ Finally, using \Cref{thm:polyapproxguarantees} again with $t=O(1)$, $\eps_{\calA}=2^{-n}$ and 
\begin{equation}\label{eq:boundpolyerror}
    \lVert H(\mathbf{g}_{worst})\rVert = \left\lVert \sum_i g_i P_i \right\rVert \leq \lVert \mathbf{g}_{worst} \rVert_1\leq \sqrt{l} \lVert \mathbf{g}_{worst} \rVert_2 \leq \Theta(n),
\end{equation}
we again have that
 $|\calT_Z(\mathbf{g}_{worst}) - D(\mathbf{g}_{worst})|\leq \eps_{\calA}$. Thus, $\hat{p}=q_Z(\mathbf{g}_{worst})$has an overall error bounded by
 \begin{equation}
     |q_Z(\mathbf{g}_{worst}) - D(\mathbf{g}_{worst})| \leq 2^{\Theta(n\log n)}\eps_{\calA}.
 \end{equation}
 
 The overall polynomial runtime is given by the fact that there are $M_r = O(n \log \frac{1}{\delta})$ runs of \Cref{alg:2} (\Cref{lem:delta_separated_andgood_radii}) and each run takes time polynomial in its $M_C = O(n \log \frac{n}{\delta})$ samples.
\end{proof}

\section{Acknowledgments}
We thank Allen Liu and Chen Lu for teaching us a significant amount of high-dimensional geometry. We also thank Shaun Datta, Isaac Kim, Jonas Haferkamp, Nick Hunter-Jones, Adam Bouland, Dominik Hangleiter, Umesh Vazirani, Urmila Mahadev, Soumik Ghosh and Bill Fefferman for helpful discussions. This material is based upon work supported by the U.S. Department of Energy, Office of Science, National Quantum Information Science Research Centers, Quantum Systems Accelerator. This work was done in part while the author was visiting the Simons Institute for the Theory of Computing.

\bibliographystyle{alpha}  
\bibliography{main}  
\end{document}